\documentclass[a4paper,11pt]{article}
\linespread{1}
\usepackage{pgf}
\usepackage{url}
\usepackage{enumerate}
\usepackage{amsthm,amssymb,amsmath,amscd}
\usepackage{eucal}
\usepackage{mathrsfs}
\usepackage{mathtools}
\usepackage{upgreek}
\usepackage{dsfont}
\usepackage{yfonts} 
\usepackage{subfigure}
\usepackage{lmodern}
\usepackage[T1]{fontenc}
\usepackage{hyperref}
\usepackage{bbm}
\usepackage{geometry}
\usepackage{array}
\usepackage{amsthm}
\usepackage{booktabs}
\usepackage{float} 
\usepackage[active]{srcltx}
\usepackage{authblk}

\newcommand{\RR}{\mathbbm R}
\newcommand{\CC}{\mathbbm C}
\newcommand{\NN}{\mathbbm N}
\renewcommand{\vec}{\boldsymbol}
\newcommand{\g}{G}
\DeclareMathOperator{\supp}{supp}

\newtheorem{proposition}{Proposition}
\newtheorem{theorem}{Theorem}
\newtheorem{definition}{Definition}
\DeclarePairedDelimiter\abs{\lvert}{\rvert}%
\newcommand{\normord}[1]{\mathinner{\mathopen{\boldsymbol:}#1\mathclose{\boldsymbol:}}}
\renewcommand{\abs}[1]{\mathinner{\left\lvert#1\right\lvert}}

\newcommand{\WF}{\textup{WF}}
\newcommand{\fdiff}[1]{\rho^{#1}_+}
\newcommand{\pdiff}[1]{\rho^{#1}_-}
\newcommand{\fdifft}[1]{\rho'^{#1}_+}
\newcommand{\pdifft}[1]{\rho'^{#1}_-}

 \usepackage{textcomp}
 
 \newcommand\quotient[2]{
        \mathchoice
            {
                \text{\raise1ex\hbox{$#1$}\Big/\lower1ex\hbox{$#2$}}%
            }
            {
                #1\,/\,#2
            }
            {
                #1\,/\,#2
            }
            {
                #1\,/\,#2
            }
    }
\begin{document}
%
%
%
%
%
%
%
%
%

\title{Construction of KMS States in Perturbative QFT and Renormalized Hamiltonian Dynamics}
\author[]{Klaus Fredenhagen}
\author[]{Falk Lindner}
\affil[]{II. Institute for Theoretical Physics, University of Hamburg\\{\small{\texttt{klaus.fredenhagen@desy.de}, \texttt{falk.lindner@desy.de}}}}
\date{}
\maketitle
\vspace{-2em}
\begin{abstract}
We present a general construction of KMS states in the framework of perturbative algebraic quantum field theory (pAQFT). Our approach may be understood as an extension of the Schwinger-Keldysh formalism. We obtain in particular the Wightman functions at positive temperature, thus solving a problem posed some time ago by Steinmann \cite{steinmannfinite}. The notorious infrared divergences observed in a diagrammatic expansion are shown to be absent due to a consequent exploitation of the locality properties of pAQFT. To this avail, we introduce a novel, Hamiltonian description of the interacting dynamics and  find, in particular, a precise relation between relativistic QFT and rigorous quantum statistical mechanics.
\begin{center}
	\emph{Dedicated to the memory of  Othmar Steinmann $^* 27.11.1932   \quad ^\dagger 11.03.2012$}
\end{center}
\end{abstract}
\tableofcontents

\section{Introduction}
According to the standard model of cosmology, the early universe was for some time in an equilibrium state with high temperature. At these temperatures, matter has to be described by an interacting quantum field theory. Also a variety of other phenomena are studied within the framework of quantum field theory at positive temperature. This includes in particular the thermodynamics of quark-gluon plasmas 
where one expects a phase transition from a confined to a deconfined phase as predicted by lattice QCD. For other examples and an introduction to the methods in this field we refer to \cite{zinn}. There are, however, several problems with this project:

First of all, there is no generally accepted concept of temperature in an expanding universe (for a discussion see e.g.\  \cite{buchsolv}). But curvature is, in the relevant period, already small compared with the heuristically assigned temperature, hence an approximation by a positive temperature state on Minkowski spacetime seems to be meaningful as long as the considered time scales are sufficiently small.

Once one accepts this ansatz one has to construct interacting quantum field theory at positive temperature on Minkowski spacetime. Surprisingly, this problem turns out to be much harder than one might have expected before. Standard quantum field theory is based on an expansion in Feynman graphs. If one replaces in these graphs the Feynman propagator by its positive temperature analogue one finds quite a number of infrared divergences whose systematic removal is not at all obvious.

Several approaches to this problem have been proposed in the literature, in particular Umezawa's thermal field theory \cite{umezawa,matsumoto}, the euclidean approach by Matsubara \cite{matsubara,bellac} and the Schwinger Keldysh path integral \cite{schwinger,keldysh}. A review of these methods may be found in a paper by Landsman and van Weert \cite{lvw} where also the relation to rigorous quantum statistical mechanics is discussed.

The difficulties may be traced back to the change in the behavior at large times induced by the interaction. In the vacuum sector of quantum field theory a corner stone is the LSZ asymptotic condition which postulates that the interacting field behaves at large times as a free field. This behavior can actually be derived from the general axioms of quantum field theory (Wightman axioms or Haag-Kastler axioms) in the presence of isolated mass shells in the energy momentum spectrum \cite{haagbuch} (for a recent  improvement, see \cite{dybalski}). It may be interpreted as a consequence of the fact that (stable) particles are far from each other at large times such that their interaction can be neglected.

At positive temperature this assumption clearly is no longer satisfied. Moreover, even the existence of stable particles (defined as eigenstates of the mass operator) is incompatible with interaction, as was shown by Narnhofer, Requardt and Thirring \cite{nrt}. The presumably correct behavior at large times has been derived by Bros and Buchholz for $\phi^4$ under some plausible assumptions \cite{bb}.

Equilibrium states at positive temperature can be characterized by the KMS condition. This condition is valid for Gibbs states and holds also in the thermodynamic limit  \cite{hhw}. In quantum statistical mechanics
much is known about these states (see e.g.\  \cite{brob} or, for a recent overview \cite{sakai}). A construction was also possible  in some superrenormalizable two-dimensional relativistic quantum field theories by means of functional integral methods \cite{hoegh,gerard-jaekel1,gerard-jaekel2}. 

In the case of perturbative quantum field theory in four spacetime dimensions, Steinmann described  in two seminal papers the perturbation expansion of Wightman functions at zero \cite{steinmannvac} and nonzero \cite{steinmannfinite} temperature. While at zero temperature the existence and uniqueness of the series could be established, at positive temperature only uniqueness could be proved whereas the existence was unclear due to a number of infrared divergences. A cancellation of these divergences was not excluded, but was not visible. Actually, some of these divergences were discussed in the mentioned review of Landsman and van Weert and there shown to cancel. A general proof that all arising divergences cancel does not seem to exist, to the best of our knowledge.        

In euclidean field theory at positive temperature the problem is absent, and a rigorous perturbative construction was performed by Kopper, M\"uller and Reisz \cite{kopper}. There is, however, no reconstruction theorem comparable to the Osterwalder-Schrader Theorem \cite{os} which is valid at nonzero temperature and which covers the case of perturbative interacting quantum field theory in four dimensional Minkowski space\footnote{The existing theorems (see \cite{BFro} for an account) make assumptions on the existence of the algebra at time zero which are satisfied in $P(\phi)_2$-theories (see \cite{JR}), but no longer in more singular theories as interacting quantum field theory in 4 dimensions.}.

In this paper we develop a new approach to the problem, based on perturbative algebraic quantum field theory (pAQFT), see \cite{paqftbuch,frerejz} for an introduction. pAQFT provides a state independent construction of the local algebras of observables, and the problem of field theory at positive temperature is then reduced to the construction of a KMS state on this algebra.

In pAQFT one first introduces a spacetime cutoff by multiplying the interaction Lagrangian with a test function $g$ of compact support. One then constructs time ordered products of all local fields by the algebraic version of causal perturbation theory which was mainly developed for quantum field theory on curved spacetimes. In terms of the time ordered products one can define the interacting field inside the algebra of the free field, thereby avoiding the consequences of Haag's theorem on the nonexistence of the interaction picture in field theory. The vacuum is then constructed in the following way: one considers the expectation values of products of interacting fields in the vacuum state of the free theory and studies their behavior in the limit when $g$ tends to 1 (adiabatic limit, see e.g.\  \cite{eg1}).

A corresponding method seems to fail at positive temperature due the persistent influence of the interaction at asymptotic times. We therefore use a different approach. We exploit the fact that the interacting theory satisfies the time-slice axiom \cite{cf} so that it suffices to look at interacting fields within a finite time interval. We then exploit the causality properties of causal perturbation theory and construct the interacting field in the adiabatic limit within the algebra of the free field associated to a somewhat larger time interval, a method first described by Hollands and Wald \cite{hwexistence}. The time evolution of the interacting field is then related to the free time evolution by a co-cycle which is generated by a time-averaged interacting Hamiltonian density.

We are now in a position to apply standard methods of rigorous quantum statistical mechanics in order to construct a KMS state of the interacting theory as a perturbation of a KMS state of the free theory. We use the expansion in terms of truncated functions developed by Araki \cite{arakikms,brob} and exploit the KMS condition of the free theory and the spatial decay of correlations and finally succeed in an explicit construction of an equilibrium state at positive temperature.
\section{Perturbed dynamics}\label{sec:2}
A characteristic difficulty of quantum field theory is the singular behavior of the relevant interactions. As a matter of fact, the interaction Hamiltonian, formally written as a spatial integral of the Hamiltonian density,
\begin{align}
  H_I=\int d^3\vec{x}\;\mathcal{H}_I(0,\vec{x})	\label{eq:interactionhamilton}
\end{align}
suffers from several problems:
\begin{enumerate}

  \item the Hamiltonian density involves pointlike products of fields; this problem can be solved by normal ordering, but it has to be done in a state independent way \cite{hwlocal,bfv}.

  \item in four dimensional spacetime all normal ordered polynomials of the free field of degree larger than 1 cannot be restricted to a spacelike surface as operator valued distributions.\label{2}

  \item  the integral over all space typically does not exist. \label{3}

\end{enumerate}
Problem \ref{3} is related to Haag's Theorem \cite{haagthm,haagthm2} and may be avoided by looking at the induced derivation on the algebra of local observables $A$,
  \[\delta_I(A)=i\int d^3\vec{x}\;[\mathcal{H}_I(0,\vec{x}),A]\ .\] 
Problem \ref{2} however reenters if a perturbation expansion for the full dynamics is based on this derivation. Moreover, in general the algebra of canonical commutation relations can no longer be used for the interacting theory due to a nontrivial field strength renormalization.
  
For these reasons, a perturbative expansion as in quantum statistical mechanics \cite{brob} was considered  not to be possible for quantum field theory, and one developed other formalisms which, however, have also problems. In particular, the infrared problems stated above have not been solved therein.

A method which is relatively near to quantum mechanics is based on the formal expansion of the time evolution operator of the interaction picture in terms of time ordered products of interaction densities,
\begin{align*}
  U(t,s)&=\sum_{n=0}^{\infty} (-i)^n\int_{s}^tdt_1\int_s^{t_1}dt_2\cdots \int_s^{t_{n-1}}dt_n  \int d^3\vec{x}_1\dots d^3\vec{x}_n \, \mathcal{H}_I(t_1,\vec{x}_1)\dots\mathcal{H}_I(t_n,\vec{x}_n)\\	
  &=\sum_{n=0}^\infty \frac{(-i)^n}{n!}\int_{s}^t dt_1\cdots \int_{s}^t dt_n \int d^3\vec{x}_1\cdots d^3\vec{x}_n \;T\mathcal H_I(t_1,\vec{x}_1)\cdots \mathcal H_{I}(t_n,\vec{x}_n)
\end{align*}
where $T$ denotes time ordering. If one replaces in this formula  the $n$-fold integral over the time-slice $[s,t]\times \RR^3$ by the integral over a test function with compact support one is left with the problem to define time ordered products as  operator valued distributions. For $n=1$, time ordering has no effect, and, indeed by the G\aa{}rding-Wightman theorem \cite{gaarding}, the normal ordered polynomials of the free field are well-defined operator valued distributions.

The next step is to define the time ordered product for $n>1$. By definition, the time ordered product is well defined at non-coinciding points where it is the operator product in the appropriate order. It remains the problem to extend the time ordered products to all points, in the sense of operator valued distributions.

This problem, originally posed by St\"uckelberg and Bogoliubov, was solved by Epstein and Glaser \cite{eg1}. They were able to show that these extensions always exist and are unique up to finite renormalizations; the ambiguity corresponds exactly to the usual freedom of choosing renormalization conditions. One ends up with the formal $S$-matrix
\[S(g)=\sum_{n=0}^{\infty}\frac{(-i)^n}{n!}
\int d^4 x_1\dots d^4 x_n \, T\mathcal{H}_I(x_1)\dots\mathcal{H}_I(x_n)g(x_1)\dots g(x_n)\]
where $g\in\mathcal D(\RR^4)$ is a test-function with compact support. $S(g)$ is the generating functional of time ordered products of the interaction density.

How is $S(g)$ related to observable quantities? First of all, it is expected that in the adiabatic limit $g\to 1$ the formal $S$-matrix $S(g)$ tends to the physical $S$-matrix in  the Fock space representation of the theory, since in this limit it agrees with Dyson's formula. For this to hold one has to choose renormalization conditions such that the renormalized mass and the field strength renormalization coincides with the free theory \cite{eg2}. For positive temperature this is of no use because of the different asymptotic behavior.

Another important relation to observables was discovered by Bogoliubov. Namely, let $A_i$, $i=1,\ldots,N$ be local fields of the theory with $A_1=\mathcal H_I$. Provided all time ordered products of these fields have been fixed, one may introduce the formal $S$-matrix 
\begin{equation}
   S(f)=\sum_{n=0}^\infty \frac{(-i)^n}{n!}\int d^4x_1\dots d^4x_n\sum_{k_1,\dots,k_n=1}^{N}TA_{k_1}(x_1)\dots A_{k_n}(x_n)f_{k_1}(x_1)\dots f_{k_n}(x_n) 
\end{equation}
with $f\in\mathcal D(\RR^4,\RR^{N})$. $S(f)$ may be interpreted as the $S$-matrix with the interaction  $A(f)=\int d^4x\sum A_k(x)f_k(x)$. 
Actually, as shown in \cite{dfr}, one can perform all these constructions within an abstract *-algebra $\mathfrak{A}$. $\mathfrak{A}$ is isomorphic to the algebra of smeared normal ordered products of the free field on Fock space,
where the smearing involves not only test functions but also certain distributions with compact support. For a spacetime region $\mathcal O$ we define $\mathfrak{A}(\mathcal O)$ as the subalgebra of $\mathfrak{A}$ of those smeared normal ordered products where the smearing is restricted to a compact subset of $\mathcal O$. The formal $S$-matrices are then unitary elements of the *-algebra $\mathfrak{A}[[\lambda]]$ of formal power series with values in 
$\mathfrak{A}$. We will often suppress the formal parameter in what follows keeping in mind that our treatment of interactions is always in the sense of formal perturbation theory.

 A crucial object in the description of interacting fields is the so-called \emph{relative $S$-matrix}
\begin{align*}
	S_g(f)&:= S(g)^{-1}S(g+f)\ ,\ f,g\in\mathcal D(\RR^4,\RR^{N})\ .
\end{align*}

Bogoliubov discovered \cite{bogolubovintro} that one can define the interacting field by the formula
\begin{align}
  \left[A_i(x)\right]_{g}:=\frac{\delta}{\delta f_i(x)}\Big|_{f=0}S_g(f) \label{eq:bogoliubov}
\end{align}
as a formal power series of  operator valued distributions on the Fock space of the free theory. Furthermore he also proved that the relative $S$-matrix $S_g(f)$ depends only on the restriction of $g$ to the causal past $J_-(\supp f)$ of the support of $f$ (where $J_{\pm}(\mathcal O)$ denote the closures of the regions which can be reached from the spacetime region $\mathcal O$ by a future or past  directed causal path, respectively).
This is a consequence of the improved causal factorization condition \cite{eg1}
\begin{align}
   S\left(f + g +h\right)= S\left(f + g \right) S\left(g \right)^{-1} S\left( g + h\right),\label{eq:causalfactor0}
\end{align}
which holds (irrespective of $\supp(g)$ and the choice of the local fields in $A$) if $\supp(f)$ does not intersect the past of $\supp(h)$ . The formula is equivalent to each of the following relations of the relative $S$-matrix 
\begin{align}
	S_g(f+h)&=S_g(f)S_g(h)\label{eq:causalfactor1}\\
	S_{g+f}(h)&=S_{g}(h)\label{eq:causalfactor2}\\
	S_{g+h}(f)&=S_g(h)^{-1}S_g(f)S_g(h)=S_g(h)^{-1}S_g(f+h)\label{eq:causalfactor3}
\end{align}
if $\supp(f)$ does not intersect with the past of $\supp(h)$. The relations \eqref{eq:causalfactor1} and {\eqref{eq:causalfactor2} are heavily exploited in causal perturbation theory on Minkowski spacetime \cite{eg1} and led to the development of causal perturbation theory on curved spacetimes, see \cite{bfbg}.

The retarded interacting field $[A_i]_g$ with interaction $A(g)$, defined by Boboliubov's formula \eqref{eq:bogoliubov}, coincides at points $x\not\in J_+(\supp g)$ with the field $A_i$ of the free theory. The singularities of the naive interaction picture are avoided in this framework, since the interaction is switched on in a smooth way described by the spacetime cutoff $g\in\mathcal  D$, compared to the instantaneous switching at time zero. For a discussion on this topic, see \cite{scharf}.

The improved causality condition \eqref{eq:causalfactor0} (which is implied by the causal factorization of time ordered products and by the definition of the formal $S$-matrix as a formal power series) turns out to be crucial for the construction of the algebra of interacting fields. Namely, let $\mathfrak{A}_{g}(\mathcal{O})$ be the algebra generated by the relative $S$-matrices $S_g(f)$ with $\supp f\subset\mathcal{O}$ where $\mathcal{O}$ is a relatively compact region of spacetime. We know that $S_g(f)$ depends only on the behavior of $g$ within $J_-(\mathcal{O})$.  But, as first observed in \cite{slavnov} and systematically exploited in \cite{bfbg}, the dependence on $g$ in that part of the past which is outside of $J_+(\mathcal{O})$ is via a unitary transformation which is independent of $f$, i.e. if $g'$ coincides with $g$ on a neighborhood of $J(\mathcal O):=J_+(\mathcal O)\cap J_-(\mathcal O)$, then there exists a unitary $W(g',g)\in\mathfrak{A}[[\lambda]]$ such that
\begin{equation}
     S_{g'}(f)=W(g',g)S_g(f)W(g',g)^{-1}
\end{equation}
holds for all $f$ with $\supp f\subset\mathcal O$. Thus the algebra $\mathfrak{A}_{g}(\mathcal{O})$ is, up to isomorphy, uniquely determined by the restriction of $g$ to the causal completion $J(\mathcal O)$ of $\mathcal{O}$. We denote this abstract algebra by $\mathfrak{A}_{[g]}(\mathcal O)$ where $[g]\equiv[g]_{\mathcal O}$ is the class of all test functions which coincide with $g$ on a neighborhood of $J(\mathcal O)$. One then can insert for $g$ a smooth function $\g$ without restrictions on the support.  
The algebra  $\mathfrak{A}_{[\g]}(\mathcal O)$ is generated by maps
\begin{equation}
    S_{[\g]}(f):[\g]_{\mathcal O}\to\mathfrak A\ ,\ g\mapsto S_{g}(f)\ ,\ \supp f\subset\mathcal O\ ,
\end{equation} 
with pointwise algebraic operations.

For $\mathcal O_1\subset \mathcal O_2$ one obtains a natural embedding $i_{\mathcal O_2,\mathcal O_1}:\mathfrak{A}_{[\g]}(\mathcal O_1)\to\mathfrak{A}_{[\g]}(\mathcal O_2)$
by restricting the maps $S_{[\g]}(f)$ from $[\g]_{\mathcal O_1}$ to $[\g]_{\mathcal O_2}$.
As shown in \cite{bfbg}, 
this allows to construct the adiabatic limit $\g=(1,0,\dots,0)$ for the net of local observables in the sense of the Haag-Kastler axioms (algebraic adiabatic limit).   

\section{Time averaged Hamiltonian description of the dynamics}
In this section we introduce a version of the adiabatic limit which goes back to Hollands and Wald \cite{hwexistence} and which allows a discussion of the interacting dynamics in close analogy with the Hamiltonian formalism but where the restriction to a Cauchy surface is replaced by the restriction to a time-slice, thus avoiding the UV divergences accompanying the Hamiltonian formalism in four dimensional quantum field theory.

The time-slice axiom \cite{haagschroer} of quantum field theory is a weak form of the determination of the theory by initial conditions. It is also called primitive causality. In the algebraic framework it says that the algebra of a globally hyperbolic subregion $\mathcal O$ of spacetime is equal to the algebra of $\mathcal O_1\subset\mathcal O$, if $\mathcal O_1$ is a neighborhood of some Cauchy surface of $\mathcal O$. The time-slice axiom holds in perturbative QFT \cite{cf}.
It implies that the algebra of local observables of the interacting theory is, for every $\epsilon>0$, generated by the relative $S$-matrices $S_{[\g]}(f)$ with $\supp f\subset\Sigma_{\epsilon}=\{(t,\vec{x})|-\epsilon<t<\epsilon\}$. 

Now let $\chi\in\mathcal{D}(\RR)$ with $\supp \chi\subset(-2\epsilon,2\epsilon)$ and $\chi(t)=1$ for $|t|\le \epsilon$. Let $\chi_+(t)=1-\chi(t)$ for $t>0$ and 0 elsewhere and $\chi_-=1-\chi-\chi_+$.  The functions $\chi$, $\chi_+$ and $\chi_-$ can be considered as functions on the Minkowski space $M=\RR^4$ which depend only on the time component. They form a smooth partition of unity subordinate to the cover 
\begin{figure}[htb]
  \begin{center}
    \def\svgwidth{0.75\textwidth}
    \input{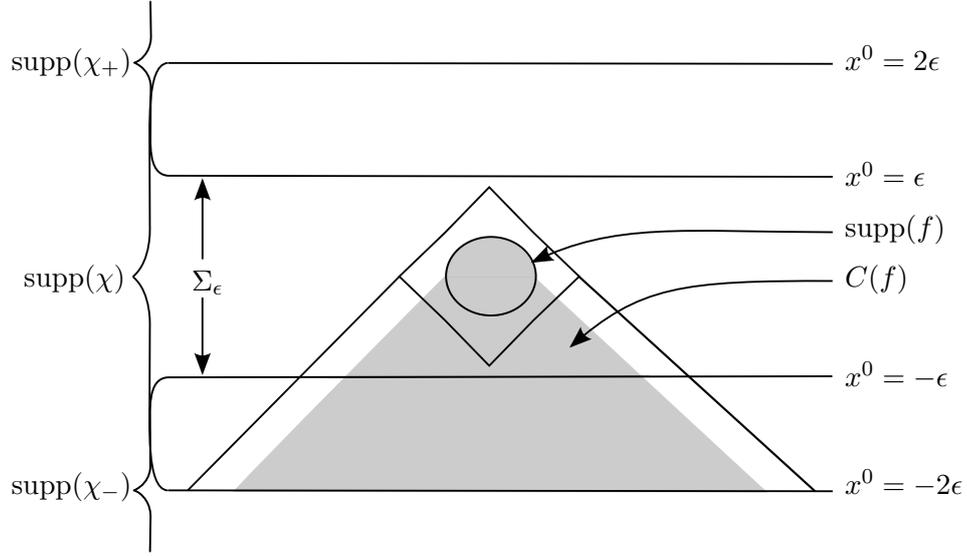}
    \caption{The cover of $M$ (projected onto the $x^0-x^1$ plane) and the partition of unity $1=\chi_++\chi_-+\chi$. In addition the dependence region $C(f)$ of $S_{g\chi}(f)$ is drawn as the shaded region.}\label{fig:setting}
  \end{center}
\end{figure}
\begin{align*}
  M&= M_- \cup \Sigma_{2\epsilon} \cup M_+= I_- \times \RR^3 \cup  I_{2\epsilon} \times \RR^3 \cup I_+ \times \RR^3\ ,\\I_-&=(-\infty,-\epsilon) ,\; I_{2\epsilon}=(-2\epsilon,2\epsilon), \,I_+=(\epsilon,+\infty),\qquad \epsilon>0,
\end{align*}
see figure \ref{fig:setting}. We now decompose $g\in[\g]_{\mathcal O}$, 
\begin{equation*}
  g=g\chi_+ +g\chi+g\chi_- \ ,
\end{equation*}
use the causal factorization and obtain for $\supp f\subset \mathcal O\subset \Sigma_{\epsilon}$
\begin{align}
  S_g(f)\stackrel{\eqref{eq:causalfactor2}}{=}S_{g\chi+g\chi_-}(f)\stackrel{\eqref{eq:causalfactor3}}{=}S_{g\chi}(g\chi_-)^{-1}S_{g\chi}(f)S_{g\chi}(g\chi_-).\label{eq:causalpartition}
\end{align}
By construction, $S_{g\chi}(f)\in\mathfrak A(\mathcal O_1)$ for every neighborhood $\mathcal O_1$ of the dependence region $C(f)=J^-(\supp(f))\cap \supp(\chi)$, see figure \ref{fig:setting}.
The crucial fact is now that $S_{g\chi}(f)$ becomes independent of the choice of $g\in[\g]_{\mathcal O}$ for $\mathcal O$ sufficiently large. Thus \eqref{eq:causalpartition} induces an
embedding of the algebra of the interacting theory into that of the free theory,
\begin{align}
  \gamma_{\chi}: \mathfrak A_{[\g]} \to \mathfrak A,\quad S_{[\g]}(f)\mapsto S_{\g\chi}(f)\ ,\ \supp f\subset \Sigma_{\epsilon}\ ,\label{eq:interactingrepresent}
\end{align}
with
\begin{equation}
     S_{\g\chi}(f)=S_{g\chi}(f), \ g\in[\g]_{\mathcal O},\ \mathcal O\text{ sufficiently large}\ .
\end{equation}
As shown in \cite{cf}, $\gamma_{\chi}$ is even an isomorphism, and it maps local subalgebras into local subalgebras such that
\begin{equation}
  \gamma_{\chi}(\mathfrak A_{[\g]}(\mathcal O_r))\subset \mathfrak A(\mathcal O_{r+4\epsilon})\subset\gamma_{\chi}(\mathfrak A_{[\g]}(\mathcal O_{r+8\epsilon}))\ ,
\end{equation}
with $\mathcal O_r=\{(t,\vec x)||t|+|\vec x|\le r\}$.
This is in agreement with the Hamiltonian picture where the time zero algebras are identified.

We now restrict ourselves to time independent functions $\g$ and investigate the relation between the interacting dynamics and the free one. The free dynamics $\alpha_t$ acts on the relative $S$-matrices by shifting the test functions:
\begin{equation}
  \alpha_t(S_g(f))=S_{g_t}(f_t)\ , \ f_t(x^0,\mathbf{x})=f(x^0-t,\mathbf{x}) \ .\label{eq:timetranslation}
\end{equation}
The dynamics of the interacting theory $\alpha_t^{\g}$ is defined by the action on the generators $S_{[\g]}(f)$,
\begin{equation}
  \alpha^{\g}_t(S_{[\g]}(f)):= S_{[\g]}(f_t).\label{eq:timeev-full}
\end{equation}
Pointwise, for $g\in[\g]_{\mathcal O}$, where $\mathcal O\supset\supp(f)\cup\supp(f_t)$, this means
\begin{equation}
  \alpha^{\g}_t(S_{[\g]}(f))(g)= S_g(f_t)=\alpha_{t} (S_{g_{-t}}(f)).\label{eq:timeev-full1}
\end{equation}
We now use the isomorphism $\gamma_{\chi}$ and map the interacting dynamics to the algebra of the free theory,
\begin{equation}
   \alpha_t^{\g,\chi}\circ\gamma_{\chi}=\gamma_{\chi}\circ\alpha_t^{\g}\ .
\end{equation} 
For $t$ sufficiently small such that $\supp f_t\subset\Sigma_{\epsilon}$, we find
\begin{equation}
  \alpha^{\g,\chi}_t(S_{\g\chi}(f) )=S_{\g\chi}(f_t)=\alpha_t(S_{\g\chi_{-t}}(f)).\label{eq:timeev-timesl}
\end{equation}
The interacting dynamics $\alpha_t^{\g,\chi}$ and the free dynamics $\alpha_t$ differ by a co-cycle. To compute it, we note that, for sufficiently small $t$, the difference 
\begin{align}
  \chi_t-\chi=\pdiff{t} + \fdiff{t} \label{eq:difference}
\end{align}
can be written in terms of smooth functions $\pdiff{t},\fdiff{t}$ supported in the complement of the future and the past of $\supp(f)$, respectively, i.e.\ $\supp(\pdiff{t})\cap J^{+}(\supp(f))=\varnothing$ and vice versa.

Using the support property of  $\pdiff{t}$ and $\fdiff{t}$, a causal factorization can be done in complete analogy to \eqref{eq:causalpartition}.
\begin{theorem}\label{prop:cocycle}
  Let $\g$ have compact support in the spatial variable $\vec x$.
 Then the interacting dynamics $\alpha_t^{\g,\chi}$ and the free dynamics $\alpha_t$ are intertwined by a unitary co-cycle $t\mapsto U_{\g}^\chi(t)\in \mathfrak A$,
 \begin{align}
   \alpha_{t}^{\g,\chi}(A)&=U_{\g}^\chi(t)\alpha_t (A) U_{\g}^\chi(t)^{-1}, \ A\in\mathfrak A \ ,\label{eq:defcocycle}
  \end{align}
 with
  \begin{align}
	 U_{\g}^\chi(t)= S_{\g \chi}(\g \pdiff{t}).\label{eq:cocycle}
  \end{align}
  for $t$ sufficiently small, where $\pdiff{t}$ was defined in \eqref{eq:difference}.
\end{theorem}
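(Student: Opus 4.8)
The plan is to establish the intertwining relation \eqref{eq:defcocycle} on a generating set and then propagate it to all of $\mathfrak A$ by the homomorphism property. Fix $t$ small. Since $\gamma_\chi$ is an isomorphism and, by the time-slice axiom, the interacting algebra is generated by the $S_{[\g]}(f)$ with support in a slice, the elements $S_{\g\chi}(f)$ with $\supp f$ compactly contained in $\Sigma_\epsilon\cap(\Sigma_\epsilon-t e_0)$ — so that both $f$ and $f_t$ stay in $\Sigma_\epsilon$ — still generate $\mathfrak A$. Both $\alpha_t^{\g,\chi}$ and the candidate map $A\mapsto U_{\g}^\chi(t)\,\alpha_t(A)\,U_{\g}^\chi(t)^{-1}$ are algebra homomorphisms, so it is enough to compare them on such a generator. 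For these, \eqref{eq:timeev-timesl} gives $\alpha_t^{\g,\chi}(S_{\g\chi}(f))=S_{\g\chi}(f_t)$, while, using that $\g$ is time-independent so that $(\g\chi)_t=\g\chi_t$, the free evolution reads $\alpha_t(S_{\g\chi}(f))=S_{\g\chi_t}(f_t)$. Everything therefore reduces to relating $S_{\g\chi}(f_t)$ and $S_{\g\chi_t}(f_t)$, whose background test functions differ only by $\g(\chi_t-\chi)=\g\pdiff{t}+\g\fdiff{t}$, cf.\ \eqref{eq:difference}.

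The heart of the argument is then a causal factorization entirely parallel to \eqref{eq:causalpartition}. Writing $\g\chi_t=(\g\chi+\g\pdiff{t})+\g\fdiff{t}$, I first discard the future contribution $\g\fdiff{t}$ from the background by \eqref{eq:causalfactor2}, and then peel off the past contribution $\g\pdiff{t}$ by \eqref{eq:causalfactor3}, obtaining
\[
  S_{\g\chi_t}(f_t)=S_{\g\chi+\g\pdiff{t}}(f_t)=S_{\g\chi}(\g\pdiff{t})^{-1}\,S_{\g\chi}(f_t)\,S_{\g\chi}(\g\pdiff{t}).
\]
Rearranged, this is exactly \eqref{eq:defcocycle} on the generator, with $U_{\g}^\chi(t)=S_{\g\chi}(\g\pdiff{t})$ as claimed in \eqref{eq:cocycle}.

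It remains to record the structural properties of $U_{\g}^\chi(t)$. This is the point at which the hypothesis that $\g$ has compact spatial support enters: combined with the compact temporal support of $\pdiff{t}$ it makes $\g\pdiff{t}$ compactly supported, so that $S_{\g\chi}(\g\pdiff{t})$ is a genuine, representative-independent element of $\mathfrak A$ and not merely of the abstract interacting algebra. Unitarity follows from the reality of $\g\pdiff{t}$ together with the unitarity of relative $S$-matrices for real smearings, and the co-cycle identity $U_{\g}^\chi(t+s)=U_{\g}^\chi(t)\,\alpha_t(U_{\g}^\chi(s))$ is then forced (up to the center) by inserting \eqref{eq:defcocycle} into the group law $\alpha_{t+s}^{\g,\chi}=\alpha_t^{\g,\chi}\circ\alpha_s^{\g,\chi}$.

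The step I expect to be the main obstacle is justifying the two factorizations for the \emph{translated} test function $f_t$, since the support properties of $\pdiff{t},\fdiff{t}$ in \eqref{eq:difference} are phrased relative to $\supp f$. The past-part condition is automatic: for $t\ge0$ the set $\supp f_t$ lies to the future of $\supp f$, whence $J^+(\supp f_t)\subset J^+(\supp f)$, and $\supp(\pdiff{t})\cap J^+(\supp f)=\varnothing$ upgrades to $\supp(\pdiff{t})\cap J^+(\supp f_t)=\varnothing$, legitimizing \eqref{eq:causalfactor3}. The future-part condition is the delicate one, because $J^-(\supp f_t)\supset J^-(\supp f)$ and the inclusion runs the wrong way; here I would use the explicit shape of $\chi_t-\chi$, namely that $\fdiff{t}$ is supported at times $\ge\epsilon$ while $\supp f_t$ stays at times $<\epsilon$ — precisely because $\supp f$ is a compact subset of the \emph{open} slice $\Sigma_\epsilon$ and $t$ is small. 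This strict temporal gap places $\fdiff{t}$ outside $J^-(\supp f_t)$ and validates \eqref{eq:causalfactor2}; the case $t<0$ is handled symmetrically, with the roles of the two conditions exchanged.
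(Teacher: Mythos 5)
Your core computation coincides with the paper's own proof: for a generator $S_{\g\chi}(f)$ with $\supp f,\ \supp f_t\subset\Sigma_\epsilon$ you write $\alpha_t(S_{\g\chi}(f))=S_{\g\chi_t}(f_t)$, discard $\g\fdiff{t}$ via \eqref{eq:causalfactor2}, and peel off $\g\pdiff{t}$ via \eqref{eq:causalfactor3}; this is exactly the displayed chain in the paper, and your explicit verification of the support conditions (the transitivity argument $J^+(\supp f_t)\subset J^+(\supp f)$ for the past part, and the strict temporal gap between $\supp f_t\subset\{x^0<\epsilon\}$ and $\supp\fdiff{t}\subset\{x^0\ge\epsilon\}$ for the future part) is a correct and welcome elaboration of what the paper leaves implicit.

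The genuine gap is in your treatment of the co-cycle property. You claim the identity $U_{\g}^\chi(t+s)=U_{\g}^\chi(t)\,\alpha_t(U_{\g}^\chi(s))$ is ``forced (up to the center)'' by inserting \eqref{eq:defcocycle} into the group law. But that argument only shows that $\alpha_t(U_\g^\chi(s))^{-1}U_\g^\chi(t)^{-1}U_\g^\chi(t+s)$ commutes with $\alpha_{t+s}(\mathfrak A)=\mathfrak A$, i.e.\ is a central unitary, and nothing forces this central element (e.g.\ a formal power series multiple of $\mathbbm 1$) to be trivial. The paper does not argue this way: it proves the relation \eqref{eq:cocyclerelation} \emph{exactly}, by a second causal-factorization argument based on $\pdiff{t+s}=\pdiff{t}+(\pdiff{s})_t$ and the causal separation of $\fdiff{t}$ from $(\pdiff{s})_t$. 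This is not a cosmetic point: the exact identity is what permits the unique extension of $U_\g^\chi(t)$ from small $t$ to all $t\in\RR$, and only after that extension does the adjoint action of $U_\g^\chi(t)$ reproduce $\alpha_t^{\g,\chi}\circ\alpha_{-t}$ for all times --- which is how the paper completes the proof of the theorem. As it stands, your proposal establishes the intertwining only for small $t$ and leaves both the co-cycle relation and the extension with an undetermined central ambiguity; to close the gap you must replace the group-law appeal by the direct factorization computation for $U_\g^\chi(t+s)$.
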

The proof of equation \eqref{eq:defcocycle}  for sufficiently small $t$ is a straightforward application of the causal relations of the relative S-matrices (\ref{eq:causalfactor2}--\ref{eq:causalfactor3}):
\begin{align*}
  \alpha_t(S_{\g \chi}(f))= 	S_{\g \chi_t}(f_t)\stackrel{\eqref{eq:causalfactor2}}{=}S_{\g (\chi+\pdiff{t})}(f_t)\stackrel{\eqref{eq:causalfactor3}}{=}S_{\g \chi}(\g\pdiff{t})^{-1} S_{\g \chi}(f_t)S_{\g \chi}(\g \pdiff{t})\ .
\end{align*}
under the condition   $\supp(f),\,\supp(f_t)\subset \Sigma_{\epsilon}$. 
The condition of compact spatial support of $\g$
was necessary for the co-cycle $U_{\g}^\chi(t)$ to exist. 
In order to extend the formula to all values of $t$ we show that equation \eqref{eq:cocycle} actually characterizes a co-cycle.
\begin{proposition}
  For $t,s$ in a neighborhood of the origin, $t\mapsto U_{\g}^\chi(t)$ satisfies the co-cycle condition
  \begin{equation}
    U_{\g}^\chi(t+s)=U_\g^\chi(t)\alpha_t\left(U_\g^\chi(s)\right)\ .\label{eq:cocyclerelation}
  \end{equation}
\end{proposition}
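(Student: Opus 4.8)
The plan is to unfold the definition \eqref{eq:cocycle} together with the action \eqref{eq:timetranslation} of the free dynamics, and thereby reduce \eqref{eq:cocyclerelation} to a single identity among relative $S$-matrices with the \emph{same} background $\g\chi$. Since $\g$ is time independent, $(\g\chi)_t=\g\chi_t$ and $(\g\pdiff{s})_t=\g(\pdiff{s})_t$, so that by \eqref{eq:timetranslation}
\begin{equation*}
  U_\g^\chi(t)\,\alpha_t\!\big(U_\g^\chi(s)\big)=S_{\g\chi}(\g\pdiff{t})\,S_{\g\chi_t}\!\big(\g(\pdiff{s})_t\big)\ .
\end{equation*}
The first thing I would isolate is an additivity property of the past differences: from $\chi_{t+s}-\chi=(\chi_t-\chi)+(\chi_s-\chi)_t$, comparing through \eqref{eq:difference} the parts supported away from $J^+(\supp f)$, one gets $\pdiff{t+s}=\pdiff{t}+(\pdiff{s})_t$ for small $t,s$. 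The proposition is then equivalent to $S_{\g\chi}(\g\pdiff{t})\,S_{\g\chi_t}(\g(\pdiff{s})_t)=S_{\g\chi}\big(\g\pdiff{t}+\g(\pdiff{s})_t\big)$.

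Next I would process the right-hand factor through the causal relations \eqref{eq:causalfactor1}--\eqref{eq:causalfactor3}. Writing $\g\chi_t=\g\chi+\g\pdiff{t}+\g\fdiff{t}$ and using that $\fdiff{t}$, being supported off $J^-(\supp f)$, lies to the future of $(\pdiff{s})_t$, relation \eqref{eq:causalfactor2} discards the future piece: $S_{\g\chi_t}(\g(\pdiff{s})_t)=S_{\g\chi+\g\pdiff{t}}(\g(\pdiff{s})_t)$. Since $(\pdiff{s})_t$ lies to the future of $\pdiff{t}$ for small $t,s$, relation \eqref{eq:causalfactor3} gives $S_{\g\chi+\g\pdiff{t}}(\g(\pdiff{s})_t)=S_{\g\chi}(\g\pdiff{t})^{-1}S_{\g\chi}(\g(\pdiff{s})_t)S_{\g\chi}(\g\pdiff{t})$. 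Multiplying on the left by $U_\g^\chi(t)=S_{\g\chi}(\g\pdiff{t})$ cancels the leading inverse, and a final use of \eqref{eq:causalfactor1}, valid by the same ordering, fuses $S_{\g\chi}(\g(\pdiff{s})_t)\,S_{\g\chi}(\g\pdiff{t})=S_{\g\chi}(\g(\pdiff{s})_t+\g\pdiff{t})=U_\g^\chi(t+s)$, the last equality being the additivity above.

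The hard part will be the causal bookkeeping that licences these steps, i.e.\ checking that for $t,s$ in a small enough neighbourhood of $0$ the splitting \eqref{eq:difference} can be arranged so that $(\pdiff{s})_t$ does not meet $J^-(\supp\pdiff{t})$ and $\fdiff{t}$ does not meet $J^-\big(\supp(\pdiff{s})_t\big)$. This rests on the elementary geometry of figure \ref{fig:setting}: with $\supp\chi\subset(-2\epsilon,2\epsilon)$, $\chi\equiv1$ on $[-\epsilon,\epsilon]$ and $\supp f\subset\Sigma_\epsilon$, the difference $\chi_t-\chi$ vanishes at the times occupied by $\supp f$ and its past part sits in the lower transition region $x^0\in(-2\epsilon,-\epsilon)$, where a sufficiently small time translation preserves the required causal ordering. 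I note in passing that a softer argument is available: since $\alpha^{\g,\chi}$ is a one-parameter group (immediate from \eqref{eq:timeev-full}) and each $\alpha_t^{\g,\chi}$ is implemented by $U_\g^\chi(t)$ through \eqref{eq:defcocycle}, both sides of \eqref{eq:cocyclerelation} implement $\alpha_{t+s}^{\g,\chi}\circ\alpha_{t+s}^{-1}$ by conjugation and hence differ at most by an element commuting with all of $\mathfrak A$; the direct computation above, however, produces the identity outright and needs no knowledge of the centre.
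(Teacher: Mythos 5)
Your overall strategy follows the paper's (reduce everything to an identity among relative $S$-matrices over the fixed background $\g\chi$, using the additivity $\pdiff{t+s}=\pdiff{t}+(\pdiff{s})_t$ and the covariance $S_{\g\chi_t}(\g(\pdiff{s})_t)=\alpha_t(S_{\g\chi}(\g\pdiff{s}))$), but the causal bookkeeping you defer to ``the hard part'' is not merely hard --- it is false, and two of your steps rest on it. You claim that for small $t,s$ the function $(\pdiff{s})_t$ does not meet $J^-(\supp\pdiff{t})$. It does: $\pdiff{t}$ is supported in the lower transition slab, essentially $\{-2\epsilon<x^0<-\epsilon+t\}$, while $(\pdiff{s})_t$ is supported in $\{-2\epsilon+t<x^0<-\epsilon+s+t\}$; these overlap on a slab of width roughly $\epsilon$ (the region where $\chi$ itself rises from $0$ to $1$), and this overlap does not shrink as $t,s\to0$, nor can it be removed by rearranging the splitting \eqref{eq:difference}, since both pieces are forced to live where $\chi$ and its translates transition. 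Consequently neither your conjugation step via the first equality of \eqref{eq:causalfactor3}, $S_{\g\chi+\g\pdiff{t}}(\g(\pdiff{s})_t)=S_{\g\chi}(\g\pdiff{t})^{-1}S_{\g\chi}(\g(\pdiff{s})_t)S_{\g\chi}(\g\pdiff{t})$, nor your final fusion via \eqref{eq:causalfactor1}, $S_{\g\chi}(\g(\pdiff{s})_t)S_{\g\chi}(\g\pdiff{t})=S_{\g\chi}(\g(\pdiff{s})_t+\g\pdiff{t})$, is licensed. (Your other causal step --- discarding $\fdiff{t}$ against $(\pdiff{s})_t$ via \eqref{eq:causalfactor2} --- is fine, and is in fact the only causal input the paper's proof needs.)

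The conclusion nevertheless survives because the two unjustified steps carry discrepancies that cancel: their composition is exactly the second equality in \eqref{eq:causalfactor3}, namely $S_g(h)^{-1}S_g(f+h)=S_{g+h}(f)$, which is a support-condition-free algebraic identity following directly from the definition $S_a(b)=S(a)^{-1}S(a+b)$ of the relative $S$-matrix. This is how the paper argues: starting from your (valid) step $S_{\g\chi_t}(\g(\pdiff{s})_t)=S_{\g\chi+\g\pdiff{t}}(\g(\pdiff{s})_t)$, one writes $S_{\g\chi+\g\pdiff{t}}(\g(\pdiff{s})_t)=S_{\g\chi}(\g\pdiff{t})^{-1}S_{\g\chi}(\g\pdiff{t}+\g(\pdiff{s})_t)$ by this identity alone, and multiplying on the left by $U_\g^\chi(t)=S_{\g\chi}(\g\pdiff{t})$ gives $S_{\g\chi}(\g\pdiff{t+s})=U_\g^\chi(t+s)$. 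So your argument becomes correct once the steps invoking a causal order between $\pdiff{t}$ and $(\pdiff{s})_t$ are replaced by that identity; as written, it contains a genuine gap. Your ``softer'' aside is, as you yourself note, inconclusive, since it determines both sides of \eqref{eq:cocyclerelation} only up to an element commuting with all of $\mathfrak A$.
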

\begin{proof}
  Let $\Theta^-=\theta(-t)$, where $\theta$ is the Heaviside function on $\RR$. Then $\pdiff{t}=\Theta^-(\chi_t-\chi)$, and for $t,s$ sufficiently small we have 
  \begin{align*}
    \pdiff{t+s}=\Theta^-(\chi_{t+s}-\chi)=\Theta^-(\chi_{t}-\chi)+\Theta^-(\chi_{t+s}-\chi_{t})=\pdiff{t}+(\pdiff{s})_t,
  \end{align*}
  where $(\pdiff{s})_t$ is $\pdiff{s}$ which is shifted by $t$, according to \eqref{eq:timetranslation}. We then find 
  \begin{align*}
    U_\g^\chi(t)^{-1}U_\g^\chi(t+s)=S_{\g \chi}(\g \pdiff{t})^{-1}S_{\g \chi}(\g \pdiff{t}+\g (\pdiff{s})_t))=S_{\g (\chi+\pdiff{t})}(\g (\pdiff{s})_t).
  \end{align*}
  Using $\chi+\pdiff{t}=\chi_t-\fdiff{t}$ and the fact that the supports of $\fdiff{t}$ and $(\pdiff{s})_t$ are causally separated, we can again apply the factorization rule and obtain for the right hand side
  \begin{equation*}
    S_{\g (\chi_t+\fdiff{t})}(\g (\pdiff{s})_t)=S_{\g \chi_t}(\g (\pdiff{s})_t)=\alpha_{t} \left(S_{\g \chi}(\g \pdiff{s})\right)=\alpha_t(U_\g^\chi(s))\ .
  \end{equation*}
  This shows the claim.
\end{proof}
The unitary $U_\g^\chi(t)$ corresponds to the time-evolution operator 
 \begin{align}
 	\textup e^{i(H_0+H_I) t} \textup e^{-iH_0t}\label{eq:cocylcetime0}
 \end{align}
 in the interaction picture.

We can now use the co-cycle condition to define  $U_\g^\chi(t)$ for all $t\in\RR$. The proposition guarantees that there exists a unique  extension. Thus the adjoint action of $U_\g^\chi(t)$
coincides  with the dynamics $\alpha_t^{G,\chi}\circ\alpha_{-t}$ of the interaction picture. This finishes the proof of the theorem.

The co-cycle satisfies the differential equation
 \begin{align*}
 	\frac{1}{i}\frac{d}{dt} U_\g^\chi(t)=\frac{1}{i}\frac{d}{ds}\Big|_{s=0}U_\g^\chi(t+s)= U_\g^\chi(t)\alpha_t(K^\chi_g)\ ,
 \end{align*}
with 
 \begin{align*}
 	K_\g^\chi:=\frac{1}{i}\frac{d}{dt}\Big|_{t=0} U_\g^\chi(t)&=[A(\g(-\frac{d}{dt}\pdiff{t}))]_{\g\chi}\ ,
 \end{align*}
 where the last equation follows from \eqref{eq:cocycle} and Bogoliubov's formula \eqref{eq:bogoliubov}.
 For the derivative of $\pdiff{t}$ one finds the following:
 \begin{align*}
   \pdiff{t}(x^0)=\Theta^-(x^0)\left(\chi_t(x^0)-\chi(x^0)\right),\qquad -\frac{d}{dt}\Big|_{t=0}\pdiff{t}(x^0)=\Theta^-(x^0)\dot\chi(x^0)=:\dot\chi^-(x^0)
 \end{align*}
where $\dot \chi$ is the first derivative of $\chi$. From the properties of $\chi$ it follows that $\supp(\dot\chi^-)\subset (-2\epsilon,-\epsilon)$ and that $\int dx^0\; \dot\chi^-(x^0)=1$. These  properties  imply that a smearing with $\dot\chi$ is in fact a time average  over the interval, in which the interaction is switched on, see figure \ref{fig:cutoff}.
\begin{figure}[tb]
  \begin{center}
    \def\svgwidth{0.5\textwidth}
    \input{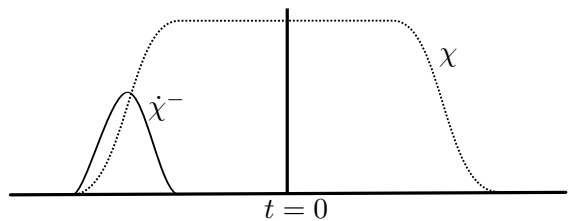}
    \caption{The cutoff function $\chi$ as a dashed line. The first derivative of $\chi$ for $t<0$, i.e.\ $\dot{\chi}^-$ as solid line.}\label{fig:cutoff}
  \end{center}
\end{figure}
 
 We specialize now to the case $G=(h,0,\dots,0)$, $h\in\mathcal D(\RR^3)$,  and obtain
 \begin{align}
 	K_\g^\chi\equiv K_h^{\chi}=\left[\mathcal H_I(h\dot\chi^-)\right]_{h\chi}=\int d^4x\; h(\vec x) \dot \chi^-(x^0)\left[\mathcal H_I(x)\right]_{h\chi}\ .\label{eq:generator}
 \end{align}
We see that $K_h^{\chi}$ is the spatial integral of the time-averaged interaction Hamiltonian density, subject to the interaction $\mathcal H_I(h\chi)$. (By abuse of notation, we write $h\chi$ instead of $(h\chi,0,\dots,0)$.)

This corresponds to the situation in quantum mechanics where an instantaneous switching of the interaction leads to the generator
 \begin{align*}
   \frac{1}{i}\frac{d}{dt}\Big|_{t=0} \text e^{i(H_0+H_I)t}\text e^{-iH_0t}=H_I\ .
 \end{align*}
The differential equation for $U^\chi_h(t)$ has, for $t>0$, the solution
 \begin{align}
 	U_h^\chi(t)&=\mathbbm 1+\sum_{n=1}^\infty i^n \int_{0}^tdt_1\int_0^{t_1}dt_2\cdots \int_0^{t_{n-1}}dt_n\;\alpha_{t_n}(K_h^\chi)\cdots \alpha_{t_1}(K_h^\chi)\label{eq:dyson}
 \end{align}
in the sense of formal power series in $K_h^\chi$. 
Note that $K_h^\chi$ itself is a formal power series in the interaction with vanishing zeroth order term. Therefore \eqref{eq:dyson} is a well defined composition of formal power series.

A final result shows that the co-cycle $U_h^\chi(t)$ actually does, up to equivalence, not depend on the choice of the time cutoff $\chi$.
\begin{proposition}\label{prop:chi}
	Let $\chi,\chi'\in\mathcal D(\RR)$  such that $\supp(\chi),~\supp(\chi')\subset \Sigma_{2\epsilon}$ and $\chi(t)=1=\chi'(t)$ for $t\in(-\epsilon,\epsilon)$. It holds that the co-cycles $U_h^{\chi}$ and $U_h^{\chi'}$ are equivalent, i.e.
	\begin{align*}
		U_h^{\chi'}(t)=V^{-1} U_{h}^{\chi}(t) \alpha_t(V)
	\end{align*}
        with the unitary $V=S_{h\chi}(h\sigma^-)$, $\sigma^-=\Theta^-(\chi'-\chi)$.
\end{proposition}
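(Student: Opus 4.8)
The plan is to establish the equivalence first for $t$ in a neighborhood of the origin by pure causal combinatorics, and then to propagate it to all $t$ using the co-cycle property. Since $V$ is unitary, the asserted identity is equivalent to $V\,U_h^{\chi'}(t)=U_h^{\chi}(t)\,\alpha_t(V)$, and this is the form I would establish. Using the time-translation covariance \eqref{eq:timetranslation} together with the fact that $\sigma^-$ depends only on $x^0$ while $h$ depends only on $\vec x$, the shifted factor becomes $\alpha_t(V)=S_{h\chi_t}(h\sigma^-_t)$; hence both sides of the identity are products of two relative $S$-matrices, the left-hand side with backgrounds $h\chi$ and $h\chi'$, the right-hand side with backgrounds $h\chi$ and $h\chi_t$.

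The key step is to collapse each side to a single unrenormalized $S$-matrix $S(h\,\cdot\,)$. Rewriting every relative $S$-matrix through $S_g(f)=S(g)^{-1}S(g+f)$ and cancelling the common outer factor $S(h\chi)^{-1}$, the claim reduces to an identity between two triple products of $S$-matrices. In each such product the arguments are functions that equal $1$ near $x^0=0$ and differ from their central background only in a past left edge around $(-2\epsilon,-\epsilon)$ (a $\pdiff{}$-type correction) or in a future right edge around $(\epsilon,2\epsilon)$ (a $\fdiff{}$-type correction). On each side exactly one correction of each type occurs. Since a future-edge support never meets the past $J^-$ of a past-edge support, the factorization \eqref{eq:causalfactor1} (equivalently \eqref{eq:causalfactor0}) recombines the two corrections into the central background, and a short computation shows that both sides collapse to the \emph{same} $S(h\zeta_t)$, where $\zeta_t$ is the smooth function equal to $\chi'_t$ for $x^0<0$ and to $\chi$ for $x^0>0$. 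This proves the identity for small $t$.

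To extend it to all $t$ I would observe that $W(t):=V^{-1}U_h^{\chi}(t)\alpha_t(V)$ is again an $\alpha$-co-cycle whenever $U_h^{\chi}$ is: a direct substitution using $\alpha_t\circ\alpha_s=\alpha_{t+s}$ gives $W(t)\alpha_t(W(s))=V^{-1}U_h^{\chi}(t)\alpha_t(U_h^{\chi}(s))\alpha_{t+s}(V)=W(t+s)$, which is exactly \eqref{eq:cocyclerelation}. Thus $W$ and $U_h^{\chi'}$ are both co-cycles agreeing on a neighborhood of $0$, and the uniqueness of the co-cycle extension (the very argument used to define $U_h^{\chi}(t)$ for all $t$ after \eqref{eq:cocyclerelation}) forces $W(t)=U_h^{\chi'}(t)$ for all $t\in\RR$, which is the assertion.

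I expect the main obstacle to lie in the support bookkeeping of the middle step rather than in any deep structural point. The delicate feature is that the three past-supported functions $\sigma^-$, $\pdiff{t}$ and $\pdifft{t}$ all live in the same past time band with overlapping supports, so they cannot be separated from one another by causal factorization; the recombination must therefore pair each past-edge correction only with a future-edge correction, never two past pieces. One must also verify that multiplication by the sharp cutoffs $\Theta^{\pm}$ preserves smoothness — which holds precisely because the differences $\chi_t-\chi$, $\chi'-\chi$ and $\chi'_t-\chi_t$ vanish in a neighborhood of $x^0=0$ — and confirm the support separations on the nose, keeping $t$ small enough that all functions still equal $1$ on $(-\epsilon+\abs{t},\epsilon-\abs{t})$ so that the reconstructed $\zeta_t$ is genuinely smooth.
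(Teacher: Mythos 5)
Your proof is correct, and its architecture differs from the paper's in a worthwhile way. The paper argues asymmetrically: it expands $V^{-1}U_h^{\chi}(t)\alpha_t(V)$ into plain $S$-matrices, substitutes $\chi=\chi'-\sigma^+-\sigma^-$ throughout, and applies the causal factorization \eqref{eq:causalfactor0} in a chain (first stripping the $\sigma^+$ terms, then refactorizing $S(h(\chi'_t-\sigma^-_t))^{-1}$ via $\chi'_t=\chi'+\fdifft{t}+\pdifft{t}$ and the relation $\sigma^-_t=\sigma^-+\pdifft{t}-\pdiff{t}$) until it lands directly on $S(h\chi')^{-1}S(h(\chi'+\pdifft{t}))=U_h^{\chi'}(t)$. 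You instead symmetrize: after cancelling the common factor $S(h\chi)^{-1}$, each side is a triple product with exactly one future-edge and one past-edge correction around a central background ($\chi'$ on the left-hand side, $\chi_t$ on the right-hand side), so a single application of \eqref{eq:causalfactor0} per side collapses both to the same $S(h\zeta_t)$ --- and your pointwise identity $\chi'-\sigma^++\pdifft{t}=\chi_t-\fdiff{t}+\sigma^-_t=\zeta_t$ is exactly equivalent to the paper's relation $\sigma^-_t=\sigma^-+\pdifft{t}-\pdiff{t}$, so the same combinatorial fact underlies both proofs. Your route buys symmetry and economy (one factorization per side rather than a chain), at the modest cost of introducing the auxiliary function $\zeta_t$ and verifying its smoothness. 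You also make explicit something the paper leaves tacit: the algebraic computation is only valid for $|t|$ small enough that the future/past decompositions exist, and your observation that $W(t)=V^{-1}U_h^{\chi}(t)\alpha_t(V)$ is itself an $\alpha$-co-cycle, hence coincides with $U_h^{\chi'}$ for all $t\in\RR$ by uniqueness of the co-cycle extension, closes that gap cleanly --- it is the same uniqueness argument the paper invokes to define $U_h^{\chi}(t)$ globally in the first place.
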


\begin{proof}
  We set
  \begin{align*}
	\chi'-\chi&=\sigma^++\sigma^-\\
	\chi_t-\chi&=\fdiff{t}+\pdiff{t}\\
	\chi'_t-\chi'&=\fdifft{t}+\pdifft{t}
  \end{align*}
  where the index $\pm$ indicates that the supports of the functions are in the future or past of $\Sigma_\epsilon$, respectively. In particular, we find the relation $\sigma^-_t=\sigma^-+\pdifft{t}-\pdiff{t}$. 
  We then find for the right hand side of the asserted equation
    \begin{align*}
	&V^{-1} U_{h}^{\chi}(t) \alpha_t(V)=	S_{h\chi}(h\sigma^-)^{-1} S_{h\chi}(h\pdiff{t}) S_{h\chi_{t}}(h\sigma^-_t)\\
	&=S(h(\chi+\sigma^-))^{-1} S(h(\chi+\pdiff{t}))S(h\chi_{t})^{-1}S(h(\chi_{t}+\sigma_t^-))\\
	&=S(h(\chi'-\sigma^+))^{-1} S(h(\chi'-\sigma^+-\sigma^-+\pdiff{t}))S(h(\chi'_{t}-\sigma^+_t-\sigma^-_t))^{-1}S(h(\chi'_{t}-\sigma_t^+))\\
	&\stackrel{\eqref{eq:causalfactor0}}{=}S(h\chi')^{-1}S(h(\chi'-\sigma^-+\pdiff{t}))S(h(\chi'_t-\sigma^-_t))^{-1}S(h\chi_t')
     \end{align*}
where in the last line the causal factorization from equation \eqref{eq:causalfactor0} was used for the innermost two factors. The third factor is factorized, again with the help of \eqref{eq:causalfactor0}, and we obtain the equation
\begin{align*}
	S(h(\chi'_t-\sigma^-_t))^{-1}&=S(h(\chi'+\fdifft{t}+\pdifft{t}-\sigma^-_t))^{-1}\\&=S(h(\chi'+\pdifft{t}-\sigma^-_t))^{-1} S(h\chi') S(h(\chi'+\fdifft{t})^{-1})\\
&=S(h(\chi'+\pdiff{t}-\sigma^-))^{-1} S(h\chi') S(h(\chi'+\fdifft{t})^{-1})
\end{align*}
where we used the relation $\pdifft{t}-\sigma^-_t=\pdiff{t}-\sigma^-$ in the last line. Reinserting this factorization into the previous equation yields the desired result:
\begin{align*}
	&V^{-1} U_{h}^{\chi}(t) \alpha_t(V)=S(h(\chi'+\fdifft{t}))^{-1}S(h\chi'_t)\\
	&=S(h(\chi'+\fdifft{t}))^{-1}S(h(\chi'+\pdifft{t}+\fdifft{t}))\\
	&=S(h\chi')^{-1}S(h(\chi'+\pdifft{t}))=S_{h\chi'}(\pdifft{t})=U_{h}^{\chi'}(t)\ .
\end{align*}

\end{proof}
The generator $K_h^\chi$ changes under $\chi\to \chi'$ by
\begin{align*}
	K_h^{\chi'}=V^{-1}K_h^\chi V + V^{-1}\frac{1}{i}\frac{d}{dt}\Big|_{t=0} \alpha_t(V),\quad V=S_{h\chi}(h\sigma^-),
\end{align*}
as it is expected from a generator of a co-cycle.
In view of the last proposition we will often suppress  $\chi$ in the notation.

 The formalism developed here is very close to the Hamiltonian formalism used in quantum statistical mechanics, see \cite{brob}. 
A direct application is not possible due to the singularity of the interaction Hamiltonian density in relativistic field theory (see the work of St\"uckelberg \cite{stueckelberg}). By the transition from the time zero hypersurface to a time-slice with a finite extension in time a regularized interaction Hamiltonian density was found, which relates the free and the interacting dynamics. The price to pay is that the regularized density is a field in the interacting theory and thus only known as a formal power series in the interaction.
\section{KMS states for the interacting dynamics: General discussion}
After the construction of the interacting dynamics $\alpha_t^h$ with respect to the interaction Hamiltonian $\mathcal H_I$ with a spatial cutoff $h\in\mathcal D(\RR^3)$ we come to the main point of this work: The construction of KMS states for $\alpha_t^h$. 
First, in section \ref{sec:kmsfinite}, we show, that the perturbation technique, which was introduced by Araki \cite{arakikms} for bounded perturbations of $C^*$-dynamical systems, can be generalized to our framework. 
By this method we obtain KMS states for an interaction with a spatial cutoff.

Then, in section \ref{sec:alcondition}, we prove that the 
cutoff can be removed,
if the connected correlation functions of the unperturbed theory decay sufficiently fast in spatial directions.
\subsection{The case of finite volume}\label{sec:kmsfinite}
We start with a definition of a KMS state, adapted to our framework.
\begin{definition}\label{def:kms}
	Let $\alpha_t$ be a one-parameter group of automorphisms of a $^*$-algebra $\mathcal A$.	A state $\omega_\beta$ on $\mathcal A$ is called a KMS state with respect to $\alpha_t$ 
	with inverse temperature $\beta$, if  the functions
	\begin{align*}
	  (t_1\ldots,t_n)\mapsto\omega_\beta(\alpha_{t_1}(A_1)\cdots \alpha_{t_n}(A_n)),\qquad A_1,\ldots,A_n\in \mathcal A,
         \end{align*}
        have an analytic continuation to the region
        \begin{align*}
	 \mathfrak T_n^\beta=\{(z_1\ldots,z_n)\in\CC^n: 0<\Im(z_i)-\Im(z_j)<\beta,\quad 1\le i<j\le n\} \ ,
        \end{align*}
        are bounded and continuous on the boundary and fulfill the boundary conditions
       \begin{align*}
	 &\omega_\beta(\alpha_{t_1}(A_1)\cdots \alpha_{t_{k-1}}(A_{k-1})\alpha_{t_{k}+i\beta}(A_k)\cdots \alpha_{t_n+i\beta}(A_n))\\
	 &=\omega_\beta(\alpha_{t_k}(A_k)\cdots \alpha_{t_{n}}(A_{n})\alpha_{t_{1}}(A_1)\cdots \alpha_{t_{k-1}}(A_{k-1}))\ .
       \end{align*}
\end{definition}
In the $C^*$-algebraic framework (cf. \cite{haagbuch}) all higher order conditions are implied by the condition on the $n=2$ point function 
\begin{align*}
	t\mapsto \omega_\beta(A\alpha_t(B))\ . 
\end{align*}
This is due to the better analytic control in the realm of $C^*$-algebras compared to algebras of unbounded operators.

The interacting KMS state can be obtained using the standard methods which were first developed by Araki  \cite{arakikms} and are elaborated in more detail in \cite{brob}.  The non-trivial part of the proof is to show that the methods which were developed in the $C^*$-algebraic formalism extend to our framework. 

Let $h\in\mathcal D(\RR^3)$. We set $\g(t,\vec x)=(h(\vec x),0,\dots,0)$ and replace everywhere the index $\g$ by $h$. Moreover, we suppress the index $\chi$.
\begin{theorem}\label{thm:analyticcont}
  Let $\omega_\beta$ be a KMS state on $\mathfrak A$ with respect to $\alpha_t$.  Then the following statements hold in the sense of formal power series in the interaction:
  \begin{itemize}
	\item For $A_1,\ldots,A_n\in\mathfrak A_h$ 
		the functions
	         \begin{align*}
		  (t_1,\ldots,t_n,s)\mapsto \omega_\beta\left(\gamma\left(\alpha^{h}_{t_1}(A_1)\cdots \alpha_{t_n}^{h}(A_n)\right) U_h(s)\right)
	         \end{align*}
	         have an analytic continuation into $\mathfrak T_{n+1}^\beta$ and are bounded and continuous on the boundary.
	\item Let $\omega_{\beta}^{h}(A)$ denote, for every $A\in\mathfrak A_{[h]}$, the evaluation of the analytic extension of the function
                  \begin{align}
	            t\mapsto G_A(t)=\frac{\omega_\beta(\gamma(A) U_h(t))}{\omega_\beta(U_h(t))}\label{eq:perturbedkms}
                  \end{align}
                 at $i\beta$. Then $\omega^{h}_{\beta}$	
                 is a KMS state with respect to $\alpha_t^{h}$ on $\mathfrak A_{[h]}$. 
\end{itemize}
\end{theorem}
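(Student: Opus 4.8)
The plan is to transcribe Araki's perturbative construction of KMS states into the present setting, transporting everything through the isomorphism $\gamma$ into the free algebra $\mathfrak A$ and reducing each quantity to thermal correlation functions of the \emph{free} KMS state $\omega_\beta$, whose analyticity in the tubes $\mathfrak T^\beta_m$ is available by hypothesis. Since the whole statement is to be read as an identity of formal power series in $\lambda$, the unboundedness of $K_h$ is immaterial: at each fixed order only finitely many insertions of $K_h$ occur, so every expression is a \emph{finite} sum of free thermal correlators. For the first statement I would use $\gamma\circ\alpha^h_t=\alpha^h_t\circ\gamma$ together with the intertwining relation \eqref{eq:defcocycle} to write, with $B_j=\gamma(A_j)$, the transported product as $\prod_{j=1}^n U_h(t_j)\alpha_{t_j}(B_j)U_h(t_j)^{-1}$, and then collapse neighbouring cocycles by the cocycle relation \eqref{eq:cocyclerelation} in the form $U_h(t_j)^{-1}U_h(t_{j+1})=\alpha_{t_j}(U_h(t_{j+1}-t_j))$, and likewise $U_h(t_n)^{-1}U_h(s)=\alpha_{t_n}(U_h(s-t_n))$. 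This leaves a single leading cocycle $U_h(t_1)$ followed by the string $\alpha_{t_1}(B_1U_h(t_2-t_1))\cdots\alpha_{t_n}(B_nU_h(s-t_n))$, into which I insert the Dyson series \eqref{eq:dyson} for each cocycle. At fixed order in $\lambda$ this gives a finite sum of terms $\int_\Delta \omega_\beta(\alpha_{u_1}(C_1)\cdots\alpha_{u_m}(C_m))\,du$, where each $C_i$ is one of the $B_j$ or a $K_h$, the arguments $u_i$ are affine in the external times and the Dyson integration variables, and $\Delta$ is a product of time-ordered simplices.

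The analytic continuation into $\mathfrak T^\beta_{n+1}$ is then produced by continuing the external variables $(t_1,\dots,t_n,s)$ and deforming each simplex contour so that the Dyson variables acquire imaginary parts interpolating between those of their neighbouring external arguments; the free KMS analyticity supplies the continuation of the integrand, and the compactness of the rescaled simplices together with the boundary regularity of $\omega_\beta$ gives boundedness and continuity up to $\partial\mathfrak T^\beta_{n+1}$ by dominated convergence. The hard part, and the main obstacle, is precisely the bookkeeping of imaginary parts. The leading cocycle $U_h(t_1)$ and, before recombination, the inverse cocycles $U_h(t_j)^{-1}$ generate operator strings whose imaginary parts are \emph{not} manifestly monotone along a naive straight contour, so the integrand does not sit term by term inside $\mathfrak T^\beta_m$. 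Resolving this requires exploiting the time-ordered simplex structure to keep the net orientation of each recombined cocycle ``forward'', and using the auxiliary variable $s$ — continued towards $i\beta$ — together with the free KMS relation to control the leading factor. This is exactly the point at which the $C^*$-algebraic arguments must be genuinely adapted to the present formal and unbounded setting.

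For the second statement, $\omega^h_\beta$ is well defined because the numerator and denominator of $G_A$ in \eqref{eq:perturbedkms} both continue to $t=i\beta$ by the first statement (the denominator being its $n=0$ case and the numerator its $n=1$ case), while $\omega_\beta(U_h(i\beta))=1+O(\lambda)$ is invertible as a formal power series; linearity and $\omega^h_\beta(\mathbbm 1)=1$ are immediate and positivity is read off order by order. It then remains to verify the $\alpha^h_t$-KMS conditions of Definition \ref{def:kms} directly for \emph{all} $n$, since the reduction of the higher conditions to the two-point condition relies on $C^*$-structure and is unavailable here. I would write $\omega^h_\beta(\alpha^h_{t_1}(A_1)\cdots\alpha^h_{t_n}(A_n))$ as the $s\to i\beta$ evaluation of the function of the first statement divided by $\omega_\beta(U_h(i\beta))$, so that analyticity in $\mathfrak T^\beta_n$ is inherited from analyticity in $\mathfrak T^\beta_{n+1}$.

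The interacting boundary relation is then obtained by applying the free KMS boundary condition to cyclically transport the operators carrying the $+i\beta$ shift to the front, and by using the cocycle identities — in particular the continued relation between $U_h(t+i\beta)$ and conjugation by $U_h(i\beta)$, together with $U_h(t+s)=U_h(t)\alpha_t(U_h(s))$ from \eqref{eq:cocyclerelation} — to reassemble the cyclically permuted interacting correlator, the denominator $\omega_\beta(U_h(i\beta))$ being invariant under this rearrangement. Continuity up to the boundary, from the first statement, is what licenses evaluating at the corners of the tube. I expect the principal difficulty to remain the analyticity and contour control of the first statement; granting it, the KMS verification is a careful but systematic manipulation of the free KMS relation and the cocycle identities, performed order by order in $\lambda$.
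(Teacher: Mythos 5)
Your proof of the first statement is essentially the paper's own: transport through $\gamma$ by the intertwining relation, collapse of the product $\prod_j U_h(t_j)\alpha_{t_j}(B_j)U_h(t_j)^{-1}$ into relative cocycles (your $\alpha_{t_j}(U_h(t_{j+1}-t_j))$ is exactly the paper's intertwiner $U_h(t_j,t_{j+1})=U_h(t_j)^{-1}U_h(t_{j+1})$), insertion of the Dyson series \eqref{eq:dyson}, and reduction, at each order in $\lambda$, to finitely many free thermal correlators integrated over compact simplices. The ``main obstacle'' you flag is, however, not actually an obstacle: parametrizing the insertion times of $U_h(t_j,t_{j+1})$ as convex combinations $t_j+u(t_{j+1}-t_j)$ with $u$ in the unit simplex \eqref{unitsimplex}, every insertion time lies between its two neighbouring external arguments, so for $(t_1,\dots,t_n,s)\in\mathfrak T^\beta_{n+1}$ the whole string of time arguments has monotone imaginary parts with total spread less than $\beta$; the analyticity of free correlators demanded by Definition \ref{def:kms} then applies pointwise in the (real) simplex variables, and analyticity of the integral follows from compactness of the simplices --- no contour deformation is needed. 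This is precisely how the paper argues.

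The genuine gap is positivity of $\omega_\beta^h$, which is the nontrivial content of the claim that the second statement produces a \emph{state} rather than merely a normalized KMS functional. You assert that ``positivity is read off order by order,'' but no such criterion exists: in this setting positivity of a formal power series means, by definition, that it is an absolute square of a formal power series, and this property is invisible coefficient-wise --- e.g.\ $\abs{1+\lambda z}^2=1+2\lambda\Re(z)+\lambda^2\abs{z}^2$ is positive although its first-order coefficient may be negative, while the series $\lambda$ has nonnegative coefficients but is not an absolute square. The paper closes this gap with Araki's vector-state argument, which your proposal omits entirely: in the GNS representation $(\pi,\Omega)$ of $\omega_\beta$, the vector-valued function $t\mapsto U_h(t)\Omega$ extends analytically to the strip $0<\Im(t)<\tfrac{\beta}{2}$, and $\Psi=U_h(i\tfrac{\beta}{2})\Omega$ implements the perturbed state, $\omega_\beta^h(A)=\langle\Psi,\pi(A)\Psi\rangle/\langle\Psi,\Psi\rangle$, so that $\omega_\beta^h(A^*A)=\norm{\pi(A)\Psi}^2/\norm{\Psi}^2$ is manifestly an absolute square. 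Without this step (or an equivalent exhibition of $\omega_\beta^h(A^*A)$ as an absolute square) the construction does not establish that $\omega_\beta^h$ is a state. The remainder of your second item --- invertibility of the denominator $\omega_\beta(U_h(i\beta))=1+O(\lambda)$, and direct verification of the $n$-point KMS boundary conditions from the free KMS condition and the cocycle identities, the $C^*$-algebraic reduction to $n=2$ being unavailable here --- is sound and consistent with the paper.
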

\begin{proof}
  We begin with the proof of the first item. To show that the analytic continuation of
  \begin{align*}
	G_{n}(t_1,\ldots,t_n|s)= \omega_\beta\left(\gamma\left(\alpha^{h}_{t_1}(A_1)\cdots \alpha^{h}_{t_n}(A_n)\right) U_h(s)\right)
  \end{align*}
  is well-defined, it is useful to construct a unitary operator intertwining the dynamics at different times. To this end, consider
  \begin{align*}
	U_h(t,s)&=U_h(t)^{-1}U_h(s)\ .
  \end{align*}
  With the initial condition $U_h(t,t)=\mathbbm 1$ 
  one obtains the power series expansion
  \begin{align*}
	U_h(t,s)=\sum_{n=0}^\infty (i(t-s))^n \int_{\mathcal S_n}d^nu \;\alpha_{s+u_1(t-s)}(K_h)\cdots \alpha_{s+u_n(t-s)}(K_h)\ ,
  \end{align*}
  where $\mathcal S_n$ denotes the unit simplex
  \begin{align}
	\mathcal S_n=\{(u_1,\ldots,u_n)\in \RR^n: 0\le u_1\le \ldots \le u_n\le 1\}\ .\label{unitsimplex}
  \end{align}
  We use the representation of $\alpha_t^{h}$,  
  \begin{align*}
	\gamma\circ \alpha_t^{h}=\mathrm{Ad}(U_h(t))\circ \alpha_t\circ\gamma\ ,
  \end{align*}
  and insert the expansion of $U_h$  into $G_n$.  We find for the $l$-th term $G_n^{(l)}$ in the formal power series expansion of $G_{n}$ in $K_h$ 
  \begin{align*}
	G_n^{(l)}&(t_1,\ldots,t_n|s)\\
	=&\sum_{\substack{\vec m\in \NN_0^{n+1}\\\abs{\vec m}=l}} \int_{\mathcal S_{m_1}} d^{m_1}u^{(1)}\cdots  \int_{\mathcal S_{m_{n+1}}}d^{m_{n+1}}u^{(n+1)} (it_1)^{m_1}\cdots (i(t_n-s))^{m_{n+1}}\times \\
	& \omega_\beta\left(\prod_{j=1}^{m_1}\left[\alpha_{u^{(1)}_{j}t_1}(K_h) \right]\alpha_{t_1}(A_1)\prod_{j=1}^{m_2} \left[\alpha_{t_2+u_{j}^{(2)}(t_2-t_1)}(K_h)\right]\cdots \prod_{j=1}^{m_{n+1}} 
	\left[\alpha_{s+u_{j}^{(n+1)}(t_n-s)}   (K_h)\right] \right).
  \end{align*}
  Using the restriction on the integration variables and the fact that $\omega_\beta$ is a KMS state it follows that the functions $G_n^{(l)}$ 
  have an analytic continuation into the strip $\mathfrak T_{n+1}^\beta$. Moreover, if we fix $s=i\beta$ we find that the functions
  \begin{align*}
	(t_1,\ldots,t_n)\mapsto G_n(t_1,\ldots,t_n|i\beta)
  \end{align*}
  can be analytically extended to 
  \begin{align*}
	\{(z_1,\ldots,z_n)\in\CC^n: 0<\Im(z_1)<\Im(z_2)<\ldots<\Im(z_{n})<\beta\}
  \end{align*}
  and fulfill the KMS boundary conditions in the sense of formal power series in $K_h$ (thus in $\mathcal H_I$ by composition of formal power series) as in Definition \ref{def:kms}. Thus the linear functional 
  $A\mapsto G_A(i\beta)$   defines a KMS functional. 

  In order to prove the second point it remains to show that the functional is positive. 
  This follows by noting that, in the GNS representation $\pi$ induced by $\omega_{\beta}$ with cyclic vector $\Omega$, the vector valued function
  \begin{equation}
     t\mapsto \Psi(t)=U_h(t)\Omega
  \end{equation} 
  has an analytic extension to the strip $0<\Im(t)<\frac{\beta}{2}$, and that $\Psi(i\frac{\beta}{2})$ induces $\omega_{\beta}^{h}$,
  \[\omega_{\beta}^h(A)=\frac{\langle \Psi(\frac{i\beta}{2}),\pi(A)\Psi(i\frac{\beta}{2})\rangle}{\langle \Psi(i\frac{\beta}{2}),\Psi(i\frac{\beta}{2})\rangle}\ .\]
  This is, in the $C^*$-algebraic setting, a well known consequence of the KMS condition 
  for $\omega_{\beta}$   and the co-cycle relation for $U_h$ (see, e.g. \cite{brob}), and the argument holds as well in our framework.
  A complication is that one needs an appropriate  concept of positivity for formal power series. Here we call a formal power series of complex numbers positive if it can be written as an absolute square of a power series.
  See \cite{waldmann,lindner} for a more detailed discussion on states on algebras of formal power series .
\end{proof}

There exists a convenient expansion of the interacting KMS state in terms of the connected correlation functions  of the free theory that will play an important role in the discussion on the spatial cluster properties. In the rest of the work we use the implicit definition
\begin{align*}
  \omega(A_1\cdots A_n)=\sum_{P\in \text{Part}\{1,\ldots,n\}} \prod_{I\in P} \omega^c\left(\bigotimes_{i\in I} A_i\right)
\end{align*}
for the connected part\footnote{The notion ``truncated part'' of the state is sometimes used in the literature. Both are synonyms.} $\omega^\text{c}$ of $\omega$, 
seen as a linear functional $\omega^\text{c}: \mathfrak T\mathfrak A\to \CC$, where $\mathfrak T\mathfrak A$ is the tensor algebra over $\mathfrak A$ and $\text{Part}\{1,\ldots,n\}$ is 
the set of all partitions of $\{1,\ldots,n\}$ into non-void subsets.  Note that the properties of $\omega_\beta$ from Definition  \ref{def:kms} carry over to the connected part $\omega_\beta^\text{c}$. 
The following proposition was first proven in \cite{brattrunc} in the $C^*$-algebraic setting.
\begin{proposition}\label{prop:connected}
The KMS state $\omega_\beta^{h}$ can be written in terms of the analytically extended connected  correlation functions:
	\begin{align}
		\omega_{\beta}^{h}(A)=\sum_{n=0}^\infty(-1)^n\int_{\beta\mathcal S_n} d^nu\,  
		 \omega^c_\beta\left(\gamma(A)\otimes \alpha_{iu_1}(K_h)\otimes \dots\otimes \alpha_{iu_n}(K_h)\right)\label{eq:kmsexpansion}
	\end{align}	
\end{proposition}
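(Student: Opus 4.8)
The plan is to feed the representation $\omega_\beta^h(A)=G_A(i\beta)$ from Theorem~\ref{thm:analyticcont}, with $G_A$ as in \eqref{eq:perturbedkms}, into a linked-cluster expansion and to show that the denominator $\omega_\beta(U_h(i\beta))$ cancels precisely those terms which are disconnected from $\gamma(A)$. First I would expand the cocycle at the imaginary time $i\beta$. Starting from the Dyson series \eqref{eq:dyson}, rotating the integration variables $t_j\mapsto iu_j$ onto the imaginary axis (the content of the analytic continuation secured in Theorem~\ref{thm:analyticcont}) and rescaling onto the simplex \eqref{unitsimplex}, the factor $i^n$ of the series and the Jacobian $i^n$ of the rotation combine to $(-1)^n$, so that
\begin{align*}
  U_h(i\beta)=\sum_{n=0}^\infty(-1)^n\int_{\beta\mathcal S_n}d^nu\;\alpha_{iu_1}(K_h)\cdots\alpha_{iu_n}(K_h)\ .
\end{align*}
Writing $\omega_\beta^h(A)=N(A)/D$ with $N(A)=\omega_\beta(\gamma(A)U_h(i\beta))$ and $D=N(\mathbbm 1)=\omega_\beta(U_h(i\beta))$, both $N$ and $D$ become sums of ordered-simplex integrals of \emph{full} correlation functions of $\omega_\beta$.

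The decisive step is to trade the ordered simplex for the full cube. Since the insertions $K_h(u):=\alpha_{iu}(K_h)$ are copies of one operator at different imaginary times, the operator ordering dictated by $\beta\mathcal S_n$ is exactly imaginary-time ordering; relabelling the $n!$ chambers of $[0,\beta]^n$ therefore gives
\begin{align*}
  \int_{\beta\mathcal S_n}d^nu\;\omega_\beta\!\big(\gamma(A)\,\alpha_{iu_1}(K_h)\cdots\alpha_{iu_n}(K_h)\big)=\frac{1}{n!}\int_{[0,\beta]^n}d^nu\;\omega_\beta\!\big(\gamma(A)\,T[K_h(u_1)\cdots K_h(u_n)]\big)\ ,
\end{align*}
where $T$ orders by increasing imaginary time. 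This rearrangement is licensed at each order of the formal expansion by the boundedness and boundary regularity of the analytic continuations in Theorem~\ref{thm:analyticcont}.

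I would then insert the moment--cumulant relation defining $\omega_\beta^c$ into both $N(A)$ and $D$. Because $T$ reorders all factors consistently, it distributes over the blocks of each partition, so $\omega_\beta(\gamma(A)\,T[\cdots])$ is the sum over partitions of products of time-ordered block cumulants. Separating the unique block that contains $\gamma(A)$ from the remaining ``vacuum'' blocks, the binomial factor $\binom{n}{m}$ counting the $m$ points tied to $\gamma(A)$ cancels against the symmetrization weights $1/n!$, and the sum factorizes,
\begin{align*}
  N(A)=\Bigg(\sum_{m=0}^\infty\frac{(-1)^m}{m!}\int_{[0,\beta]^m}d^mu\;\omega_\beta^c\big(\gamma(A)\otimes T[K_h(u_1)\cdots K_h(u_m)]\big)\Bigg)\cdot D\ ,
\end{align*}
the second factor being the resummation of all vacuum partitions, i.e.\ $D$ itself. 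Cancelling $D$ and undoing the symmetrization (cube $\to$ simplex) yields \eqref{eq:kmsexpansion}.

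The hard part is the linked-cluster factorization $N(A)=(\text{connected to }\gamma(A))\cdot D$: the clean binomial bookkeeping only works after the symmetrization of the second paragraph, so one must check that passing from the ordered simplex to the imaginary-time-ordered cube is compatible with the cumulant decomposition --- it is, because the restriction of $T$ to a block is that block's own ordering. The remaining care is bookkeeping in formal power series: since $K_h$ has vanishing zeroth order, each fixed order in the interaction receives contributions from only finitely many $n$, which is what allows interchanging the sums with the integrations and guarantees that the rotations to $\beta\mathcal S_n$ and the boundary values used above are exactly those provided, order by order, by Theorem~\ref{thm:analyticcont}.
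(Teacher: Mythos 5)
Your proof is correct, and it establishes the same linked-cluster identity, but by a genuinely different route than the paper. The paper (following \cite{brattrunc}) never symmetrizes onto the cube: it introduces a formal parameter $\lambda$, expands $\omega_\beta(\gamma(A)U_{h,\lambda}(i\beta))=\sum_n\lambda^n\nu_n(A)$ and $\omega_\beta^{h,\lambda}(A)=\sum_n\lambda^n\Omega_n(A)$, notes that the defining relation $\omega_\beta(\gamma(A)U_{h,\lambda}(i\beta))=\omega_\beta(U_{h,\lambda}(i\beta))\,\omega_\beta^{h,\lambda}(A)$ yields the Cauchy-product recursion $\nu_n(A)=\sum_{k=0}^n\nu_k(\mathbbm 1)\,\Omega_{n-k}(A)$, and then shows by induction that the simplex integrals of connected functions solve this recursion. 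You instead establish the factorization $N(A)=C(A)\,D$ of the numerator into the connected series $C(A)$ times the denominator head-on: simplex-to-cube symmetrization, moment--cumulant decomposition compatible with imaginary-time ordering, the cancellation of $\binom{n}{m}$ against $1/n!$, and resummation of the vacuum blocks into $D$. These are two renderings of one combinatorial fact --- the paper's recursion is exactly the coefficient-wise form of your factorization --- and each buys something: the paper's induction keeps the partition combinatorics implicit (though to close the induction step one needs precisely your observation that summing over interleavings of two groups of imaginary times factorizes into a product of separate simplex integrals), while your version makes that combinatorial core explicit, at the price of the symmetrization step, which you rightly insist must be justified order by order via the boundary values supplied by Theorem \ref{thm:analyticcont}, since $\alpha_{iu}(K_h)$ is not an element of $\mathfrak A$ and all imaginary-time-ordered expressions have meaning only as analytic continuations of correlation functions. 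Your formal-power-series bookkeeping (vanishing zeroth order of $K_h$, hence finitely many $n$ contributing at each order) agrees with the paper's.
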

\noindent(Note that there are no analytic elements in the algebra $\mathfrak A$. The suggestive notation above is always to be understood in the sense of analytic extensions of correlation functions.) 

\begin{proof}
The proof proceeds in the same way as
the original proof in \cite{brattrunc} and is only sketched here. For this  we introduce the following expansion in the interaction $K_h$ with a formal parameter $\lambda$,
\begin{align*}
	\omega_\beta^{h,\lambda}(A)&=\sum_{n=0}^\infty \lambda^n\Omega_n(A)\ ,\qquad  \omega_\beta(\gamma(A)U_{h,\lambda}(i\beta))=\sum_{n=0}^\infty \lambda^n\nu_n(A) \ .
\end{align*}
The coefficients $\nu_n$ are obtained from the expansion of  $U_{h,\lambda}(t)$:
\begin{align*}
 \nu_n(A)=(-\beta)^n\int_{\mathcal S_n} du_1\cdots du_n \; \omega_\beta(\gamma(A) \alpha_{iu_1\beta}(K_h)\cdots \alpha_{iu_n\beta}(K_h)),\quad \nu_0(A)=\omega_\beta(\gamma(A))\ .
\end{align*}
By definition of the interacting states $\omega_\beta^{h,\lambda}$ it holds that
\begin{align*}
&\omega_\beta(\gamma(A)U_{h,\lambda}(i\beta))=\omega_\beta(U_{h,\lambda}(i\beta))\omega_\beta^{h,\lambda}(A)\ .
\end{align*}
Thus, by comparing the coefficients of the expansions on both sides, one gets
\begin{align*}
	\nu_n(A)&=\sum_{k=0}^n \nu_{k}(\mathbbm 1) \Omega_{n-k}(A), \qquad  	\Omega_n(A)=\nu_n(A)-\sum_{k=1}^n \nu_k(\mathbbm 1) \Omega_{n-k}	(A)\ .
\end{align*}
By induction, it is then shown that 
\begin{align*}
	\Omega_n(A)=(-\beta)^n \int_{\mathcal S_n} du_1\cdots du_n \;\omega_\beta^c\left(\gamma(A)\otimes \alpha_{iu_1\beta}(K_h) \otimes \cdots \otimes \alpha_{iu_n\beta}(K_h)\right)\ .
\end{align*}
\end{proof}

Having a closer look on the definition of the perturbed KMS state $\omega_\beta^{h}$ in Proposition \ref{thm:analyticcont}
one may ask, whether it depends on the choice of $\chi$. Taking into acount the dependence of $\gamma$ and $U_h$ on $\chi$ we find
that the state is, in fact, independent of the choice of $\chi$:
\begin{proposition}\label{prop:independent}
	Let $\chi,\chi'\in\mathcal D(\RR)$ be given as in Proposition \ref{prop:chi}. 
	Then the associated KMS states on $\mathfrak A_h$, as defined in Theorem \ref{thm:analyticcont} coincide.
\end{proposition}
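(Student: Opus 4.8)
The plan is to reduce the statement to two algebraic transformation laws describing how the two ingredients of the construction in Theorem~\ref{thm:analyticcont} — the embedding $\gamma=\gamma_\chi$ of \eqref{eq:interactingrepresent} and the cocycle $U_h=U_h^\chi$ — change under $\chi\to\chi'$, and then to show that the ratio \eqref{eq:perturbedkms}, evaluated at $i\beta$, is insensitive to this change.

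First I would prove the companion of Proposition~\ref{prop:chi} for the embedding itself. With $\chi'-\chi=\sigma^++\sigma^-$ as in that proof, $\sigma^\pm$ supported in the future resp.\ past of $\Sigma_\epsilon$, I take $f$ with $\supp f\subset\Sigma_\epsilon$ and compute $\gamma_{\chi'}(S_{[h]}(f))=S_{h\chi'}(f)$. As $\sigma^+$ lies in the future of $\supp f$, relation \eqref{eq:causalfactor2} removes it, and as $\sigma^-$ lies in its past, relation \eqref{eq:causalfactor3} then gives
\[
S_{h\chi'}(f)=S_{h\chi}(h\sigma^-)^{-1}\,S_{h\chi}(f)\,S_{h\chi}(h\sigma^-)=V^{-1}\,\gamma_\chi(S_{[h]}(f))\,V,
\]
with precisely the unitary $V=S_{h\chi}(h\sigma^-)$ of Proposition~\ref{prop:chi}. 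Since both sides are algebra homomorphisms, this extends to $\gamma_{\chi'}=\mathrm{Ad}(V^{-1})\circ\gamma_\chi$ on all of $\mathfrak A_h$. Together with Proposition~\ref{prop:chi} the complete transformation law reads $\gamma_{\chi'}(A)=V^{-1}\gamma_\chi(A)V$ and $U_h^{\chi'}(t)=V^{-1}U_h^\chi(t)\alpha_t(V)$, and I record that $V$, being a relative $S$-matrix, is unitary.

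The cancellation I would carry out in the GNS representation $\pi$ of $\omega_\beta$ with cyclic vector $\Omega$, reusing the positivity/analyticity package from the proof of Theorem~\ref{thm:analyticcont}, where the perturbed state is written $\omega_\beta^{h,\chi}(A)=\langle\Psi^\chi,\pi(\gamma_\chi(A))\Psi^\chi\rangle/\langle\Psi^\chi,\Psi^\chi\rangle$ with $\Psi^\chi=\pi(U_h^\chi(i\beta/2))\Omega$. Substituting the transformation law yields $\Psi^{\chi'}=\pi(V^{-1})\,\pi(U_h^\chi(i\beta/2))\,\pi(\alpha_{i\beta/2}(V))\Omega$. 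The decisive observation is that, by the modular reading of the KMS condition, $\pi(\alpha_{i\beta/2}(V))\Omega$ is the image of $\Omega$ under the commutant element $J\pi(V^*)J$; this commutes with all of $\pi(\mathfrak A)$, and since $V$ is unitary it drops out of both numerator and denominator (via $(J\pi(V^*)J)^*J\pi(V^*)J=\mathbbm 1$). The remaining inner factors $\pi(V)\pi(V^{-1})=\mathbbm 1$ from the conjugation of $\gamma_{\chi'}(A)$, and the outer ones $\pi(V^{-1})^*\pi(V^{-1})=\mathbbm 1$, then collapse everything to $\omega_\beta^{h,\chi'}(A)=\langle\Psi^\chi,\pi(\gamma_\chi(A))\Psi^\chi\rangle/\langle\Psi^\chi,\Psi^\chi\rangle=\omega_\beta^{h,\chi}(A)$.

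The hard part will be justifying the modular step $\pi(\alpha_{i\beta/2}(V))\Omega=J\pi(V^*)J\,\Omega$ in the present formal-power-series framework, where — as the authors emphasize — there are no genuine analytic elements and hence no literal modular operator. I expect one must recast the whole computation in terms of analytically continued correlation functions and invoke the KMS boundary condition of Definition~\ref{def:kms} directly, exactly as in the proof of Theorem~\ref{thm:analyticcont}: the factor carrying $\alpha_{\cdot}(V)$ is transported around by the KMS relation to produce the $V$ that annihilates the leading $V^{-1}$, order by order in the interaction. Checking that this rearrangement stays inside the analyticity tube $\mathfrak T_{n+1}^\beta$ and respects the simultaneous continuation of all time arguments is the only real technical obstacle; the algebra of the cancellation is then forced by the two transformation laws above.
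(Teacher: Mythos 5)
Your two preparatory identities --- $\gamma_{\chi'}=\mathrm{Ad}(V^{-1})\circ\gamma_\chi$ (which you, unlike the paper, actually derive from the causal factorization rules) and the cocycle law $U_h^{\chi'}(t)=V^{-1}U_h^{\chi}(t)\alpha_t(V)$ --- are exactly the inputs of the paper's proof, and your closing paragraph, where you propose to transport the factor carrying $\alpha_{\cdot}(V)$ around by the KMS boundary condition order by order, \emph{is} the paper's proof. Where you differ is the route you emphasize: the paper never enters the GNS representation. It substitutes both transformation laws into the numerator of \eqref{eq:perturbedkms},
\begin{align*}
\omega_\beta\bigl(\gamma_{\chi'}(A)\,U_h^{\chi'}(t)\bigr)=\omega_\beta\bigl(V^{-1}\,\gamma_\chi(A)\,U_h^{\chi}(t)\,\alpha_t(V)\bigr)\ ,
\end{align*}
and observes that at $t=i\beta$ the KMS boundary condition of Definition \ref{def:kms} moves $\alpha_{i\beta}(V)$ cyclically to the front as $V$, cancelling the leading $V^{-1}$; the denominator is the case $A=\mathbbm 1$. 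The analyticity and boundary behaviour needed for this are precisely what the first item of Theorem \ref{thm:analyticcont} already provides (after Dyson-expanding $U_h^{\chi}(t)$ one is looking at multi-time correlation functions of the type treated there, with the extra fixed elements $V^{-1}$ and $\alpha_t(V)$ inserted), so the ``only real technical obstacle'' you identify simply does not arise on this route. By contrast, the modular step you lean on --- $\pi(\alpha_{i\beta/2}(V))\Omega=J\pi(V^*)J\,\Omega$ with $J\pi(V^*)J$ in the commutant --- presupposes Tomita--Takesaki theory for a von Neumann algebra, which is not available for the $^*$-algebra of formal power series at hand (the paper stresses that there are no analytic elements and that even positivity must be redefined); taken as your primary argument this would be a genuine gap. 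Since you flag it yourself and your proposed repair coincides with the paper's argument, the proof you outline is in substance the paper's, reached less economically: at $t=i\beta/2$ with GNS vectors and modular machinery instead of at $t=i\beta$ with the ratio of correlation functions, where only the KMS condition is needed.
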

\begin{proof}
  In the same notation as in the proof of Proposition \ref{prop:chi} we denote $\chi'-\chi=\sigma_++\sigma_-$. Due to causality of the relative S-matrices (\ref{eq:causalfactor2}--\ref{eq:causalfactor3}) we know that
  \begin{align*}
		\gamma_{\chi'}=\mathrm{Ad}(V^{-1})\circ \gamma_{\chi} \ .
  \end{align*}
  Furthermore we use the transformation property of the co-cycle,   
  \begin{align*}
    U_h^{\chi'}(t)=V^{-1}U_h^{\chi}(t) \alpha_t(V),
  \end{align*}
  which is derived in Proposition \ref{prop:chi}. The state $\omega_{\beta}^{h\chi'}$ is defined, via the equation \eqref{eq:perturbedkms}, by analytic extension of the functions $G'_A(t)$ where $\chi$ is replaced by $\chi'$.
  Looking at the numerator we find that 
  \begin{align*}
	\omega_\beta(\gamma_{\chi'}(A)  U_{h}^{\chi'}(t))&= \omega_\beta\left(V^{-1}  \gamma_{\chi}(A)  U_{h}^\chi(t)  \alpha_{t}(V)\right)\ .
  \end{align*}
  Due to the KMS condition, the analytic extension to $t=i\beta$ coincides with that of the numerator of $G_A$. The same argument holds for the denominator by setting $A=1$.
\end{proof}
\subsection{The case of infinite volume}\label{sec:alcondition}
On the level of the algebra and the time evolution, the adiabatic limit $h\to 1$ is easy. 
In the following we assume that the cutoff function $h\in \mathcal D(\RR^3)$ is chosen such that $h(\vec x)=1$ for all $\vec x\in B_r$, 
where $B_r$ is the open ball with radius $r$ in $\RR^3$, and we denote the constant function with value 1 by $I$.
Let $\supp f\subset \mathcal O_r$. Then
\begin{equation}
  S_{[h]}(f)(g)=S_g(f)=S_{{[I]}}(f)(g)\ ,
\end{equation}
if $g\equiv 1$ on $\mathcal O_r$, hence $\mathfrak A_{[h]}(\mathcal O_r)=\mathfrak A_{[I]}(\mathcal O_r)$.

Moreover, if also $\supp f_t\subset \mathcal O_r$, then
\begin{equation}
\alpha^h_t(S_{[I]}(f))=\alpha^I_t(S_{[I]}(f)) \ ,
\end{equation} 
hence $\alpha^h_t(A)=\alpha^I_t(A)$ for $A\in\mathfrak A_{[I]}(\mathcal O_{r-|t|})$.

We now
would like to show that the expectation values $\omega_\beta^{h}(A)$ with $A\in\mathfrak A_{[I]}(\mathcal O)$,  which are determined by \eqref{eq:kmsexpansion} for sufficiently large $r$, converge as $h\to I$ in the sense of formal power series in 
$\mathcal H_I$. 

The adiabatic or thermodynamic limit in which $h$ tends to the constant function has to be approached in  such a way that the boundary volume
\begin{align*}
	\{\vec x\in\RR^3: 0<h(\vec x)<1\}
\end{align*}
becomes negligible as the volume of the region, in which $h=I$, grows to infinity (assuming that $h$ is positive). The precise formulation of this idea is due to van Hove \cite{vanhove}. (See the monograph of Ruelle \cite{ruelle} for more details.)

\begin{definition}\label{def:adlimit}
	A van Hove sequence of cutoff functions is a sequence $(h_n)_{n\in\NN}$ of test functions $h_n\in\mathcal D(\RR^3)$ with the following properties:
	\begin{align*}
		 0\le h_n(\vec x)\le 1 \quad \forall \vec x\in \RR^3,\quad h_n(\vec x)=\begin{cases} 1 &,\  \abs{\vec x}<n\ ,\\ 0 &,\ \abs{\vec x}>n+1\ .\end{cases}
	\end{align*}
	The thermodynamic limit in the sense of van Hove is the limit $\lim_{n\to\infty} h_n$ for all van Hove sequences. It is abbreviated by $\mathrm{vH-}\lim_{h\to I}$.
\end{definition}

In the expansion of the interacting KMS state in connected correlation functions of the generator $K_h$, as defined in \eqref{eq:generator} ,
the spatial cutoff $h$ enters in two different ways; namely in the modification of the interaction density by the interaction $\mathcal H_I(h\chi)$ and in the spatial integral over the density with the  spatial cutoff.
We may modify $K_h$ by considering the interaction density in the adiabatic limit $h\to I$ and obtain another generator
\begin{align}
  K_h'	:=	[\mathcal H_I({h\dot\chi^-})]_{\chi}\label{eq:generatoral}
\end{align}
instead of $K_h$. 
One readily sees that the difference of $K_h$ and $K_h'$ is localized in a small neighborhood of that subregion of the support of $h$ where $h\neq I$, intersected with the time-slice $\Sigma_{2\epsilon}$.
Consequently, the dynamics induced by $K_h$ and $K_h'$ will coincide in the adiabatic limit. 

The latter generator has the advantage that it is a \emph{linear functional} of $h$,
\begin{equation}
  K'_h=\int d^3\vec x\, h(\vec x)\alpha_{0,\vec x}(R),
\end{equation}
with
\begin{align}
	R=\int dt\;\dot\chi^-(t) [\mathcal H_I(t,0)]_{\chi}\ .\label{eq:generatoradlim}
\end{align}
Here $\alpha_{t,\vec x}$ denotes the automorphism of $\mathfrak A$ representing a  translation in Minkowski space. 
$R$ is independent of $h$. Therefore we only have to control the decay behavior of the connected correlation functions in order to prove the existence of the adiabatic limit. 
\begin{theorem}\label{thm:adlimcond}
	Let $\omega'_\beta{}^{h}$ be the interacting KMS state for the time evolution $\alpha'_{t}{}^{h}$ where the generator $K_h$ is replaced by $K'_h$. 
	If the (analytically extended) connected correlation functions 
        \begin{align*}
          F_{n}(u_1,\vec x_1;\cdots ;u_n,\vec x_n)=\omega^\textup{c}_\beta\left(A_0\otimes \alpha_{iu_1,\vec x_1}(A_1)\otimes \cdots \otimes  \alpha_{iu_n,\vec x_n}(A_n)\right)
       \end{align*}
       are contained in the space $L^1(\beta\mathcal S_n\times \RR^{3n})$ for  $A_i\in\mathfrak A$ with $i=0,\ldots,n$, $n\in\NN$ and $0<\beta\le +\infty$, then the van Hove limit 
       \begin{align*}
	\mathrm{vH-}\lim_{h\to1}\omega'_\beta{}^{h}(A)=	
	:\omega_\beta^{I}(A), \qquad A\in \mathfrak A_{[I]},
       \end{align*}
       exists and defines a KMS state on $\mathfrak A_{[I]}$  with respect to $\alpha_t^I$.
\end{theorem}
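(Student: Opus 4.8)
The plan is to feed the linear generator $K'_h$ into the connected-function expansion of Proposition~\ref{prop:connected} and to exploit that $K'_h$ is \emph{linear} in the cutoff $h$: this lets me pull the entire $h$-dependence out of the algebra into a product of scalar factors $h(\vec x_j)$ under a spatial integral, after which the van Hove limit becomes a pure application of dominated convergence with the hypothesised $L^1$ bound as majorant. Concretely, since $\omega'_\beta{}^{h}$ is the finite-volume KMS state for the dynamics generated by $K'_h$ (Theorem~\ref{thm:analyticcont} applies verbatim, $K'_h$ being again a formal power series with vanishing zeroth order), Proposition~\ref{prop:connected} gives
\begin{align*}
  \omega'_\beta{}^{h}(A)=\sum_{n=0}^\infty(-1)^n\int_{\beta\mathcal S_n}d^nu\;\omega^c_\beta\!\left(\gamma(A)\otimes\alpha_{iu_1}(K'_h)\otimes\cdots\otimes\alpha_{iu_n}(K'_h)\right).
\end{align*}
Inserting $K'_h=\int d^3\vec x\,h(\vec x)\,\alpha_{0,\vec x}(R)$, using $\alpha_{iu_j}\circ\alpha_{0,\vec x_j}=\alpha_{iu_j,\vec x_j}$, and invoking the multilinearity of $\omega^c_\beta$ on the tensor algebra $\mathfrak T\mathfrak A$, the $n$-th coefficient becomes
\begin{align*}
  (-1)^n\!\int_{\beta\mathcal S_n}\!\!d^nu\int_{\RR^{3n}}\!\!d^{3n}\vec x\;\Big(\prod_{j=1}^n h(\vec x_j)\Big)\,F_n(u_1,\vec x_1;\ldots;u_n,\vec x_n),
\end{align*}
with $A_0=\gamma(A)$ and $A_1=\cdots=A_n=R$ in the notation of the statement, understood order by order in $\mathcal H_I$ so that each $F_n$ is an honest integrable kernel.

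The existence of the limit is then immediate: along any van Hove sequence one has $0\le\prod_j h(\vec x_j)\le 1$ with $\prod_j h(\vec x_j)\to 1$ pointwise, while $\abs{F_n}\in L^1(\beta\mathcal S_n\times\RR^{3n})$ furnishes an integrable majorant. Dominated convergence therefore yields convergence of each coefficient to the same integral with $\prod_j h(\vec x_j)$ replaced by $1$, and the limit is independent of the chosen sequence. This defines $\mathrm{vH-}\lim_{h\to I}\omega'_\beta{}^{h}(A)=:\omega^I_\beta(A)$ as a formal power series for every $A\in\mathfrak A_{[I]}$.

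It then remains to verify that $\omega^I_\beta$ is a KMS state for $\alpha^I_t$. Normalisation passes trivially from $\omega'_\beta{}^{h}(\mathbbm 1)=1$. For the analytic structure I would work at finite volume first: the first item of Theorem~\ref{thm:analyticcont} already supplies the analytic continuation into $\mathfrak T^\beta_m$, boundedness, continuity on the boundary, and the boundary condition of Definition~\ref{def:kms}; moreover, for local $B_i$ and bounded real times one has $\alpha'_t{}^{h}(B_i)=\alpha^I_t(B_i)$ once $h\equiv 1$ on a sufficiently large ball, exactly as recorded before the statement for $\alpha^h$. The passage to the limit is again controlled by the same $L^1$ majorant: the \emph{analytically continued} correlators admit expansions of precisely the form above (with complex insertions), dominated by the same kernels, so the van Hove convergence is locally uniform on $\mathfrak T^\beta_m$; by the Vitali--Porter theorem the limit is analytic there, and boundedness, continuity on the boundary, and the KMS boundary relation all survive. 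Positivity in the formal-power-series sense I would obtain from the vector-state representation used in the proof of Theorem~\ref{thm:analyticcont}, carried to the GNS representation associated with the limit.

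The main obstacle will be precisely this last step: transporting the \emph{full} KMS analytic structure—and not merely the real-time expectation values—through the van Hove limit. The delicate directions are the imaginary-time shifts, for which the finite-propagation argument that stabilises $\alpha'_t{}^{h}$ to $\alpha^I_t$ on local algebras is unavailable, so that one must rely entirely on the $L^1$ domination of the continued expansion to secure locally uniform convergence on $\mathfrak T^\beta_m$. Equally delicate is positivity: the formal-power-series notion of positivity does not pass to coefficient-wise limits automatically, and must be re-derived from the limiting vector $\Psi(i\beta/2)$ rather than inferred from positivity at finite $h$.
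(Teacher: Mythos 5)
Your proposal is correct and follows essentially the same route as the paper: expand $\omega'_\beta{}^{h}$ via Proposition~\ref{prop:connected}, use the linearity of $K'_h=\int d^3\vec x\,h(\vec x)\alpha_{0,\vec x}(R)$ to pull the factors $h(\vec x_j)$ out against the kernels $F_n$, and conclude by dominated convergence from the $L^1$ hypothesis. The paper's own proof is in fact terser than yours — it stops at the existence of the limit and defers the KMS verification to a brief remark (exchange of limits justified by absolute integrability, made after Theorem~\ref{thm:kmsal}) — so your additional care with the analytically continued expansion, locally uniform convergence, and positivity fills in steps the paper leaves implicit rather than deviating from it.
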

\begin{proof}
  Due to Proposition \ref{prop:connected} we know that the expectation value of $A\in\mathfrak A$ in the interacting state $\omega'_\beta{}^{h}\circ \gamma^{-1}$ 
  can be written in terms of the connected correlation functions. Inserting the definition of $K'_h$ we obtain terms of the form
  \begin{align*}
    \int_{\beta\mathcal S_n} d^nu\int d^{3n}\vec x\, h(\vec x_1)\dots h(\vec x_n)\omega_\beta^{\textup c}( A\otimes \alpha_{iu_1,\vec x_1}(R)\otimes\dots\otimes \alpha_{iu_n,\vec x_n}(R)) \ .
   \end{align*}
   But $R$ is a formal power series with coefficients in $\mathfrak A$. Hence each term in the expansion is of the form of an integral over a correlation function of the form $F_n$ as defined in the theorem. 
   Hence under the assumption   on $F_n$ the limit $h\to I$ exists.
\end{proof}
It is clear from the proof that in order to establish the convergence of the state as $h$ tends to $I$, the strict assumptions on the 
sequence $(h_n)_{n\in\NN}$ can be weakend. In fact, any bounded sequence of functions approaching the constant 1 uniformly on compact sets will do. 
However, if one is interested in observables like the free energy per volume, assumptions on differentiability and the control on the ``boundary'', i.e. the region in which $h_n$ drops from the constant to zero have to be made. See \cite{lindner} for more details on this.

A natural question that emerges at this point is, whether a corresponding theorem holds for the interacting KMS states defined by the co-cycle $U_h^\chi$ by formula \eqref{eq:perturbedkms} in the previous section. 
In this case the interaction density at points near to the boundary of the support of $h$ is not yet in the adiabatic limit. Let   $ [A]_{g}^{(m)}$ denote the $m$-th term in the 
power series expansion of $[A]_{g}$, let $g_0,\dots g_k\in\mathcal{D}(\RR^4)$ and 
$\vec m\in\NN_0^{k}$. We consider the functions
\begin{align*}
   F_k^{\vec m}(\vec x_1,\ldots,\vec x_k)(g_0,\dots,g_k)=	\omega_\beta^{\text{c}}\left([A_0]_{g_0}^{(m_0)}\otimes\alpha_{\vec x_1}[A_1]_{g_1}^{(m_1)}\otimes \cdots \otimes \alpha_{\vec x_k}[A_k]_{g_k}^{(m_k)}\right) \ .
\end{align*}
If, for all $\vec m$, these functions are uniformly bounded by an $L^1$ function, for $g_0,\ldots, g_k$ in a bounded set of $\mathcal D(\RR^4)$, the 
proof of the theorem applies, provided the derivatives of the functions $h_n$ are uniformly bounded in $n$.

We expect that also this property holds for the case of a massive free field, but restrict ourselves to the simpler case treated in the theorem. Unfortunately, a direct proof of the $\chi$-independence of our construction as in Proposition \ref{prop:independent}
is not yet available for this case. It would be an immediate consequence of the stronger estimate described above.
\section{Cluster properties of the massive scalar field}
In this section we will show that there exist examples in which the conditions of Theorem~\ref{thm:adlimcond} are fulfilled. To this end, we start from the algebra $\mathfrak A$ of Wick polynomials of the free, massive scalar field. It is well-known that the quasi-free states $\omega_\text{vac}$ and $\omega_\beta$ induced by the 2-point functions 
\begin{align*}
	\omega_{\text{vac}}(\Phi(x)\Phi(y))&=D_+^\text{vac}(x-y),\; D_+^\text{vac}(x)=\frac{1}{(2\pi)^3}\int \frac{d^3\vec p}{2\omega_{p}} \text e^{-i(\omega_{p}x^0-\vec p \vec x)}\\
	\omega_{\beta}(\Phi(x)\Phi(y))&=D_+^\beta(x-y), \; D_+^\beta(x)=\frac{1}{(2\pi)^3}\int \frac{d^3\vec p \;\text e^{i\vec p \vec x}}{2\omega_{p}(1-\text e^{-\beta \omega_p})}   
	\left(\text e^{-i\omega_{p}x^0} +\text e^{i\omega_p (x^0+i\beta)} \right)
\end{align*}
where $\omega_p=\sqrt{\vec p^2+m^2}$ with  $m^2\ge 0$, define KMS states on the Weyl algebra of the free field. In Appendix \ref{sec:kms} we show that the extension of these states to the algebra $\mathfrak A$ have the desired analytic properties
which are required in  Definition \ref{def:kms}. 

We will show that these states satisfy the conditions in Theorem \ref{thm:adlimcond}, implying the existence of the adiabatic limit of their interacting counterparts with respect to the time-evolution $\alpha_t^I$ for polynomial interactions $\mathcal H_I$. In particular the interacting state $\omega_\text{vac}^I$ is shown to define a translation-invariant ground state.  This provides a new construction of the vacuum state, independent of the known construction of  Epstein and Glaser  \cite{eg1}.

Subsequently, we derive the existence of thermal equilibrium states of the interacting, massive scalar field. To the best of our knowledge, this is the first complete proof within renormalized perturbation theory in QFT.
\subsection{Vacuum state}
The main point of this section is a theorem that shows that the connected vacuum $n$-point correlation functions of (composite) free fields decrease 
exponentially in spatial and imaginary time directions.  This, in turn, implies the existence of the adiabatic limit of the interacting vacuum state (in the sense of van Hove) due to 
Theorem \ref{thm:adlimcond}. Such a statement has already been proven in the lecture notes by Araki in \cite{arakilecture} in an axiomatic setting, where the $A_i$ are bounded 
operators and translations $\alpha_{t_i,\vec x_i}$ with uniformly bounded, real times $t_i$ were used.
\begin{proposition}\label{prop:clustervac}
  Let $\omega_{\textup{vac}}$ be the vacuum state of the free Klein-Gordon field with mass $m>0$, induced by the translation invariant two-point function
  \begin{align}
	\omega_\textup{vac}(\Phi(x)\Phi(0))= D_+^{\textup{vac}}(x)=\frac{1}{(2\pi)^3}\int \frac{d^3\vec p}{2\omega_{p}} \text e^{-i(\omega_{p}x^0-\vec p \vec x)}\label{eq:vacuum2pf}\ .
  \end{align}
  The connected correlation function
  \begin{align*}
	F_{n}^{\textup{vac}}(u_1,\vec x_1; \cdots ;u_n,\vec x_n)=\omega_{\textup{vac}}^\textup{c}\left(A_0\otimes \alpha_{iu_1,\vec x_1}(A_1)\otimes \cdots \otimes \alpha_{iu_n,\vec x_n}(A_n)\right)
  \end{align*}
  for $A_0,\ldots,A_n\in\mathfrak A(\mathcal O)$ with $\mathcal O\subset B_R\subset \RR^4$ decreases exponentially in $\mathcal S_n^\infty\times \RR^{3n}$
  \begin{align*}
	\abs{F_n^{\textup{vac}}(u_1,\vec x_1;\ldots;u_n\vec x_n)}\le c \text e^{-\frac{m}{\sqrt{n}} r_e}, \qquad r_e=\sqrt{\sum_{i=1}^n u_i^2+\abs{\vec x_i}^2}\ ,
  \end{align*}
  uniformly for $r_e>2 R$. Here $\mathcal S_n^\infty=\{(u_1,\ldots,u_n)\in\RR^n: 0<u_1<\cdots < u_n\}$.
\end{proposition}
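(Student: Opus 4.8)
The plan is to exploit that $\omega_{\textup{vac}}$ is quasi-free, so that the connected correlation function $F_n^{\textup{vac}}$ reduces, by Wick's theorem together with the linked-cluster property of truncated functionals, to a \emph{finite} sum over \emph{connected} contraction graphs $\Gamma$ on the vertex set $\{0,1,\dots,n\}$, each edge carrying exactly one free two-point function $D_+^{\textup{vac}}$ evaluated at the analytically continued difference of the corresponding arguments. Because the $A_i$ are normal-ordered there are no self-contractions, and because each $A_i$ is a smeared element of $\mathfrak A(\mathcal O)$ every edge factor is a smooth, finite function of the external data $(u_i,\vec x_i)$ even where two insertion regions collide; hence the only object whose large-distance behaviour must be controlled is the single continued propagator. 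Writing $P_i=(u_i,\vec x_i)\in\RR^4$ and $P_0=0$, and $\rho_{ij}=\abs{P_i-P_j}$ for the Euclidean edge length, I would first prove a one-line decay estimate and then combine it along a spanning tree.

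First I would establish the key lemma: denoting by $c_{ij}$ the smeared single-propagator factor attached to an edge $(i,j)$, one has
\begin{align*}
  \abs{c_{ij}}\ \le\ \tilde c\;\textup e^{-m\,\rho_{ij}}
\end{align*}
with $\tilde c$ uniform in $(u,\vec x)$. To see this I pass to momentum space via \eqref{eq:vacuum2pf}: the admissible shift (admissible precisely because on $\mathcal S_n^\infty$ the imaginary times are strictly ordered, so the Wightman product extends holomorphically into the corresponding tube) turns $\textup e^{-i\omega_p x^0}$ into a damping factor $\textup e^{-\omega_p\tau}$ with $\tau=\abs{u_i-u_j}>0$, while the test functions of $A_i,A_j$ contribute a Schwartz factor in $\vec p$. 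Shifting the $\vec p$-contour into the complex plane in the direction of $\vec x_i-\vec x_j$, within the strip $\abs{\Im\vec p}<m$ on which $\omega_p=\sqrt{\vec p^2+m^2}$ stays holomorphic, produces the spatial damping; optimizing the shift, or equivalently invoking the explicit rotation-invariant Euclidean propagator $\propto\rho^{-1}K_1(m\rho)$, yields the \emph{sharp} rate $m$ with only a polynomial prefactor. The finite real-time spread of the localizations, bounded by $2R$, shifts the argument by at most $2R$ and is absorbed into $\tilde c$ via $\textup e^{-m(\rho_{ij}-2R)}=\textup e^{2mR}\textup e^{-m\rho_{ij}}$; the same inequality, read trivially, also covers the short-distance regime, so the bound holds for every edge with one uniform constant.

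Next I would carry out the geometric combination. Each connected graph $\Gamma$ contains a spanning tree $T$ on $\{0,1,\dots,n\}$, and for the index $k$ realizing $\rho_{\max}=\max_i\abs{P_i}$ the unique $T$-path from $0$ to $k$ has, by the triangle inequality, total edge length at least $\abs{P_k}=\rho_{\max}$. Hence
\begin{align*}
  \sum_{(i,j)\in E(\Gamma)}\rho_{ij}\ \ge\ \sum_{(i,j)\in T}\rho_{ij}\ \ge\ \rho_{\max}\ \ge\ \frac{1}{\sqrt n}\sqrt{\sum_{i=1}^n\abs{P_i}^2}\ =\ \frac{r_e}{\sqrt n}\ ,
\end{align*}
where the third step is the elementary bound $\norm{v}_\infty\ge n^{-1/2}\norm{v}_2$; this is exactly where the factor $1/\sqrt n$ in the exponent originates. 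Multiplying the edge estimates over $\Gamma$ gives $\prod_{(i,j)}\abs{c_{ij}}\le\tilde c^{\,\abs{E(\Gamma)}}\textup e^{-m r_e/\sqrt n}$, and summing over the finitely many connected graphs and absorbing all constants into $c$ yields the claimed bound, uniformly for $r_e>2R$.

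I expect the main obstacle to be the decay lemma for the continued propagator: making the contour shift rigorous up to the branch point so as to recover the sharp exponent $m$ (rather than $m-\varepsilon$) while keeping the constant uniform in the presence of the bounded real-time spread of the localization regions. Once this single-line estimate is in hand with rate exactly $m$, the combinatorial and geometric steps are routine. A secondary point requiring care is the justification of the joint analytic continuation of the full $n$-point function into $\mathcal S_n^\infty\times\RR^{3n}$, which I would take from the standard tube analyticity of Wightman functions of the free field, consistent with the analytic structure already invoked in Definition \ref{def:kms}.
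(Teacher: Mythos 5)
Your overall architecture agrees with the paper's in two places: the reduction of $\omega^{\textup c}_{\textup{vac}}$ to a finite sum over \emph{connected} contraction graphs, and the final geometric step (decay in the total edge length, combined with connectivity and $\norm{v}_\infty\ge n^{-1/2}\norm{v}_2$, is exactly how the paper produces the factor $m/\sqrt n$). The gap is in your key lemma. You presuppose that the amplitude of a connected graph factorizes into \emph{independently smeared} single-propagator factors $c_{ij}$, each admitting a bound $\abs{c_{ij}}\le\tilde c\,\textup e^{-m\rho_{ij}}$ with one uniform constant. No such factorization exists: edges incident to a common vertex share that vertex's smearing variables, and $\int f_0(x)f_1(y)f_2(z)\,D_+(x-y+\cdot)\,D_+(y-z+\cdot)$ is not controlled by the product of the two separately smeared propagators. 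Worse, for elements of $\mathfrak A$ the vertex kernels are compactly supported \emph{distributions} (Wick powers carry $\delta$-functions on diagonals, and $\mathfrak A$ admits smearing with distributions), so one can neither integrate their absolute values nor pull sup-norms of individual propagators out of the integral. The uniform-constant claim also fails on its own terms: all $A_i$ are localized in the \emph{same} region $\mathcal O\subset B_R$, and on $\mathcal S_n^\infty$ the differences $u_j-u_i$ may be arbitrarily small while $r_e$ is large, so the pointwise continued propagator between colliding insertions blows up like its short-distance singularity; only smearing against the full joint vertex distribution tames this, which is precisely what a per-edge factorized bound cannot capture.

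The paper avoids this by never splitting the graph amplitude. It writes each connected-graph contribution in momentum space as
\begin{align*}
F^{\textup{vac}}_{n,G}=\int d\vec P\;\prod_{l\in E(G)}\Bigl(\textup e^{-\omega_{\vec p_l}(u_{r(l)}-u_{s(l)})+i\vec p_l(\vec z_{s(l)}-\vec z_{r(l)})}\tfrac{1}{2\omega_{\vec p_l}}\Bigr)\,\hat\Psi(-P,P)\Big|_{p^0_l=\omega_{\vec p_l}}\ ,
\end{align*}
where $\Psi$ is the \emph{joint} functional-derivative (vertex) distribution of all the $A_i$. Two appendix results then do the work your lemma was meant to do: Proposition \ref{prop:smooth} uses microcausality (the wavefront set condition on elements of $\mathfrak A$) to show $\hat\Psi(-P,P)$ is rapidly decreasing near the forward cone, so the integral converges despite $\hat\Psi$ being only polynomially bounded in general; and Proposition \ref{prop:contourdeform} performs a multivariable contour deformation on the smeared quantity, with the Paley--Wiener bound $\abs{\hat f}\le c\,\textup e^{R\sqrt{\abs\omega^2+\abs{\vec p}^2}}$ explaining both the sharp rate $m$ and the restriction $r_e>2R$ in the uniformity statement. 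The output is $\abs{F^{\textup{vac}}_{n,G}}\le c\,\textup e^{-mr}$ with $r=\sum_{l\in E(G)}\sqrt{(u_{r(l)}-u_{s(l)})^2+\abs{\vec x_{r(l)}-\vec x_{s(l)}}^2}$, after which your spanning-tree/RMS argument applies verbatim. To repair your proof you would have to replace the per-edge lemma by an estimate on the whole smeared graph amplitude, which in effect reconstructs these two propositions.
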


\begin{proof}
  We use the off-shell formalism explained in Appendix~\ref{sec:kms} by which the elements of $\mathfrak A$ can be identified with functionals on field configurations $\phi\in\mathcal C^{\infty}(M)$, 
  up to functionals which vanish on solutions of the Klein-Gordon equation. The connected correlation functions $\omega_{\textup{vac}}^c$ can be written in terms of the functional differential operator 
  $\Gamma_2^{ij}$ from the proof of 
  Proposition  \ref{lemma:kmsanalytic}, where the  KMS two-point function $D_+^\beta$ has to be replaced by $D_+^\textup{vac}$. The correlation function $\omega_{\textup{vac}}$ itself can be written as
  \begin{align*}
    \omega_{\textup{vac}}(A_0 A_1\cdots A_n)=	\prod_{0\le i<j \le n}\textup e^{\Gamma_2^{ij}} (A_0\otimes \cdots \otimes A_n)\Big|_{(\phi_0, \dots, \phi_n)=0}\ .
  \end{align*}
  Here the product of  exponentials can be rewritten as 
  \begin{align*}
	\prod_{0\le i<j \le n}\textup e^{\Gamma_2^{ij}}&=\prod_{0\le i<j\le n}\sum_{m=0}^\infty \frac{(\Gamma_2^{ij})^{m}}{m!}=\sum_{\substack{l=(l_{ij})_{i<j}\\l_{ij}\in\NN_0}} \prod_{i<j}\frac{(\Gamma_2^{ij})^{l_{ij}}}{l_{ij}!}
  \end{align*}
  which reads in terms of a graphical expansion
  \begin{align*}
	\sum_{\substack{l=(l_{ij})_{i<j}\\l_{ij}\in\NN_0}} \prod_{i<j}\frac{(\Gamma_2^{ij})^{l_{ij}}}{l_{ij}!}=\sum_{G\in\mathcal G_{n+1}}\Gamma_G,\qquad \Gamma_G=\prod_{i<j}\frac{(\Gamma_2^{ij})^{l_{ij}}}{l_{ij}!}\ ,
  \end{align*}
  where $\mathcal G_n$ denotes the set of all graphs $G$ with $n$ vertices and $l_{ij}$ are the number of lines joining the vertices $i$ and $j$. 
  Rewriting the products of exponentials in another way and using a similar argument as above one finds
  \begin{align}
    \prod_{0\le i<j \le n}\textup e^{\Gamma_2^{ij}}=	\prod_{0\le i<j \le n}\left(\textup e^{\Gamma_2^{ij}}-1+1\right)=\sum_{G\in\mathcal G_{n+1}}\prod_{G'\in [G]} 
    \prod_{i<j} \left(\frac{(\Gamma_2^{ij})^{l_{ij}}}{l_{ij}!}\right)\label{eq:onetrick}
  \end{align}
  where $[G]$ denotes the set of connected components of $G$. The connected correlation functions can be consequently written as 
  \begin{align*}
    \omega_{\textup{vac}}^\textup{c}(A_0\otimes \cdots \otimes A_n)=	\sum_{G\in\mathcal G_{n+1}^\textup{c}} \prod_{i<j} \left(\frac{(\Gamma_2^{ij})^{l_{ij}}}{l_{ij}!}\right) (A_0\otimes \cdots \otimes A_n)\Big|_{(\phi_0,\dots,\phi_n)=0}
  \end{align*}
  where $\mathcal G_{n}^\textup{c}$ denotes the set of connected graphs with $n$ vertices. The last equation can be verified by showing that the recursion formula for the connected correlation 
  functions picks out exactly the connected   components in the graphical expansion on the right hand side of equation~\eqref{eq:onetrick}.

  Then the functions $F^{\textup{vac}}_{n}$ can be written as 
  \begin{align*}
    F^{\textup{vac}}_{n}(u_1,\vec x_1; \cdots ;u_n,\vec x_n)&=\omega_{\textup{vac}}^\text{c}\left(A_0\otimes \alpha_{iu_1,\vec x_1}(A_1)\otimes \cdots \otimes \alpha_{iu_n,\vec x_n}(A_n)\right)\\&=
    \sum_{G\in\mathcal G_{n+1}^\textup{c}} \prod_{i<j} \left(\frac{(\Gamma_2^{ij})^{l_{ij}}}{l_{ij}!}\right) (A_0\otimes \cdots \otimes \alpha_{iu_n,\vec x_n}(A_n))\Big|_{(\phi_0,\dots,\phi_n)=0}\\
    &=: \sum_{G\in\mathcal G_{n+1}^\textup{c}} \frac{1}{\textup{Symm}(G)} F^{\textup{vac}}_{n,G}(u_1,\vec x_1; \cdots ;u_n,\vec x_n)
  \end{align*}
  with
  \begin{equation}
    F^{\textup{vac}}_{n,G}(u_1,\vec x_1; \cdots ;u_n,\vec x_n)=  \prod_{i<j} (\Gamma_2^{ij})^{l_{ij}} (A_0\otimes \alpha_{iu_1,\vec x_1}(A_n) \otimes \cdots \otimes \alpha_{iu_n,\vec x_n}(A_n))\Big|_{(\phi_0,\dots,\phi_n)=0}\ ,\nonumber
  \end{equation}
  similar to the terminology of the proof of Proposition \ref{lemma:kmsanalytic}. The source and range of the line $l$ is denoted with $s(l)$ and $r(l)$, respectively, and $X,Y$ contain all 
  points which are connected by the lines $l\in E(G)$. The last line defined the contribution of $G$ to $F_n^{\text{vac}}$ and $\text{Symm}(G)$ is the symmetry factor of $G$. 
  Instead of indexing all vertices we can also index the graph $G$ by all its edges $l\in E(G)$. The contribution of  a fixed, connected graph $G$ is
  \begin{align*}
    &F^{\textup{vac}}_{n,G}(u_1,\vec z_1; \cdots ;u_n,\vec z_n)\\
    &= \int  dX\;dY\;\prod_{\substack{l\in E(G)}} D_+^\textup{vac}(x_l-y_l) \frac{\delta^2}{\delta\phi_{s(l)}(x_l)\delta \phi_{r(l)}(y_l)} (A_0\otimes \cdots \otimes \alpha_{iu_n,\vec z_n}(A_n))\Big|_{(\phi_0,\dots,\phi_n)=0}\\
    &=\int dX\;dY\; \prod_{\substack{l\in E(G)}} D_+^\textup{vac}(\bar x_l-\bar y_l) \; \Psi(X,Y)
  \end{align*}
  with the abbreviations $\bar x_l= (x^0_l+iu_{s(l)},\vec x_l+\vec z_{s(l)})$ and $\bar y_l= (y^0_l+iu_{r(l)},\vec y_l+\vec z_{r(l)})$ and the functional derivatives
  \begin{align*}
	\Psi(X,Y)&=\prod_{l\in E(G)}  \frac{\delta^2}{\delta \phi_{s(l)}(x_l) \delta\phi_{r(l)}(y_l)}\ \left(A_0\otimes \cdots \otimes A_n\right)\Big|_{(\phi_0,\dots,\phi_n)=0}\ .
  \end{align*}
  The $F^{\textup{vac}}_{n,G}$ can be written as integrals in momentum space
  \begin{align*}
	F^{\textup{vac}}_{n,G}(U,\vec Z)&= \int dP\;\prod_{l\in E(G)} \textup e^{-p_l^0(u_{r(l)}-u_{s(l)})+i\vec p_l(\vec z_{s(l)}-\vec z_{r(l)})}\hat D_+^\textup{vac}(p_l) \hat \Psi(-P,P)\\
	&=\int d\vec P\; \prod_{l\in E(G)}\left( \textup e^{-\omega_{\vec p_l}(u_{r(l)}-u_{s(l)})+i\vec p_l(\vec z_{s(l)}-\vec z_{r(l)})} \frac{1}{2\omega_{\vec p_l}}\right)\hat \Psi(-P,P)\Big|_{p^0_l=\omega_{\vec p_l}}
  \end{align*}
  where $\omega_{\vec p_l}=\sqrt{\vec p_l^2+m^2}$. By Proposition \ref{prop:smooth} (in Appendix \ref{sec:appb}) we know that $\hat \Psi(-P,P)$ is rapidly decreasing in the forward lightcone, 
  and since $\supp\hat D_+^{\textup{vac}}  \subset H_m$ with 
  \begin{align*}
	H_m=\{p\in \RR^4: p_0^2-\vec p^2=m^2,p_0>0\}\subset J^+
  \end{align*}
  the above integral converges absolutely since by assumption $u_{r(l)}-u_{s(l)}>0$. Therefore we can make use of Proposition \ref{prop:contourdeform} from  Appendix \ref{sec:appb} to obtain the estimate
  \begin{align*}
	\abs{F^{\textup{vac}}_{n,G}(u_1,\vec x_1;\ldots;u_n,\vec x_n)}\le c \; \textup e^{-mr},\qquad r=\sum_{l\in G}\sqrt{\left(u_{r(l)}-u_{s(l)}\right)^2+\abs{\vec x_{r(l)}-\vec x_{s(l)}}^2}\ .
  \end{align*}
  Since the graph $G$ is connected, i.e.\ every vertex can be reached from $(u_0,\vec x_0)=0$, we can use
  \begin{align*}
   r& =\sum_{l\in G}\sqrt{ \left(u_{r(l)}-u_{s(l)}\right)^2+\abs{\vec x_{r(l)}-\vec x_{s(l)}}^2} \ge  \max_{i\in \{1,\ldots,n\}} \sqrt{u_i^2+\abs{\vec x_i}^2}\\
   &\ge \sqrt{\frac{1}{n} \sum_{i=1}^n u_i^2+\abs{\vec x_i}^2}=\frac{1}{\sqrt{n}}\sqrt{\sum_{i=1}^n u_i^2+\abs{\vec x_i}^2}\ ,
  \end{align*}
  which yields 
  \begin{align*}
		\abs{F^{\textup{vac}}_{n,G}(u_1,\vec x_1;\ldots;u_n,\vec x_n)}\le c' \text e^{-\frac{m}{\sqrt{n}} r_e},\qquad r_e=\sqrt{\sum_{i=1}^n u_i^2+\abs{\vec x_i}^2}\ .
  \end{align*}
  This shows that $F_{n,G}^\text{vac}$ actually decays exponentially in every variable $(u_i,\vec x_i)$, instead of only in the difference variables. 
  Consequently, since on the algebra of Wick polynomials only finitely many graphs contribute to the sum,
  also the summed expression 
  \begin{align*}
	F_n^\textup{vac}(u_1,\vec x_1;\ldots;u_n ,\vec x_n)=\sum_{G\in\mathcal G_{n+1}^\textup{c}} \frac{1}{\text{Symm}(G)} F_{n,G}^{\textup{vac}}(u_1,\vec x_1;\ldots;u_n ,\vec x_n)
  \end{align*}
  is exponentially decaying in its variables and is thus integrable over $\mathcal S_n^{\infty}\times \RR^{3n}$.
\end{proof}

Using the analyticity properties of the vacuum state we know that the co-cycle $U'_h(t)$, inserted as a right factor in the expectation value,  admits an analytic extension to the full upper half plane, using the first point of Proposition \ref{lemma:kmsanalytic} for the limiting case $\beta\to +\infty$. In particular, the linear functional 
\begin{align*}
	\omega_{\text{vac}}^{h}(A)&=\lim_{\beta\to\infty}\frac{\omega_\text{vac}(\gamma(A)U'_h(i\beta))}{\omega_\text{vac}( U'_h(i\beta))}\\
	&=\sum_{n=0}^\infty (-1)^n\int_{\mathcal S_n^\infty} du_1\cdots du_n\;\omega_{\text{vac}}^\textup{c}\left(\gamma(A)\otimes  \alpha_{iu_1}(K'_h) \otimes \cdots \otimes \alpha_{iu_n}(K'_h)\right)
\end{align*}
exists and is positive. Theorem \ref{thm:adlimcond} implies that the adiabatic limit 
\begin{align*}
	&\omega^{I}_{\text{vac}}(A)=\mathrm{vH}-\lim_{h\to I}\omega_{\text{vac}}^{h}(A)\\
	&=\sum_{n=0}^\infty \int_{\mathcal S_n^\infty}dU \int_{\RR^{3n}} dX\;\omega_{\text{vac}}^\textup{c}\left(\gamma(A)\otimes \alpha_{iu_1,\vec x_1}(R)\otimes 
	\cdots \otimes \alpha_{iu_n,\vec x_n}(R)\right)
\end{align*}
with $R=\int dt \; \dot\chi^-(t) [\mathcal H_I(t,0)]_{\chi}$ exists. In particular we find that the function
\begin{align*}
  &t\mapsto	\omega^{I}_{\text{vac}}(A  \alpha_t^{I}(B))
\end{align*}
(where $I$ as before denotes the constant function $h=1$) has a bounded analytic continuation into the whole upper half plane, which characterizes  a ground state.
\subsection{Thermal equilibrium states}

In this section, we  show that KMS states of perturbatively constructed, massive scalar field theories exist for all $0<\beta<+\infty$. The proof of this fact is very similar to the case of the vacuum, as far as the perturbative expansions are concerned. A main difference arises in the investigation of the decay behavior due to the fact that the KMS state has contributions from positive and negative energies. The KMS condition and the exponential decay of the negative energy part turn out to be crucial to show the convergence of the state in the adiabatic limit.

\begin{theorem}\label{thm:kmsal}
  Let $\omega_{{\beta}}$ be the quasi-free KMS state of the Klein-Gordon field with mass $m>0$ and inverse temperature $0<\beta<\infty$ whose translation invariant two-point function is
  \begin{align}
	\omega_\beta(\Phi(x)\Phi(y))=D_+^{\beta}(x-y), \quad D_+^{\beta}(x)=\frac{1}{(2\pi)^3}\int dp\;  \textup e^{-i(p_0 x^0-\vec p\vec x)}\frac{\varepsilon(p_0)\delta(p^2-m^2)}{1-\textup e^{-\beta p_0}} \label{eq:kms2pf2}\ .
	\end{align}
  Then the connected correlation function
  \begin{align*}
		F_n^\beta(u_1,\vec x_1;\ldots; u_n,\vec x_n)&=\omega_{\beta}^\textup{c}\left(A_0\otimes \alpha_{iu_1,\vec x_1}(A_1)\otimes \cdots \otimes \alpha_{iu_n,\vec x_n}(A_n)\right)
  \end{align*}
  for $A_0,\ldots,A_n\in\mathfrak A(\mathcal O)$ with $\mathcal O\subset B_R\subset \RR^4$ decays exponentially in spatial directions
  \begin{align*}
   \abs{F_n^\beta(u_1,\vec x_1;\ldots;u_n,\vec x_n)}\le c \; \text e^{-\frac{m}{\sqrt{n}}r_e},\qquad r_e=\sqrt{\sum_{i=1}^n \abs{\vec x_i}^2}\ ,
  \end{align*}
  uniformly for $r_e>2R$ and $(u_1,\ldots,u_n)\in \beta\mathcal S_n$.
\end{theorem}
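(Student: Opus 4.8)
The plan is to follow the graphical analysis of Proposition~\ref{prop:clustervac} almost verbatim, the sole genuine difference being the treatment of the negative-energy part of $D_+^\beta$. First I would reuse the reduction to connected graphs: representing $\omega_\beta$ through the functional differential operators $\Gamma_2^{ij}$ of Proposition~\ref{lemma:kmsanalytic}, now carrying the kernel $D_+^\beta$ instead of $D_+^{\textup{vac}}$, and applying the rearrangement \eqref{eq:onetrick} of the product of exponentials, one obtains a \emph{finite} sum (the $A_i$ being Wick polynomials)
\begin{align*}
  F_n^\beta(u_1,\vec x_1;\ldots;u_n,\vec x_n)=\sum_{G\in\mathcal G_{n+1}^{\textup c}}\frac{1}{\textup{Symm}(G)}\,F_{n,G}^\beta(u_1,\vec x_1;\ldots;u_n,\vec x_n)
\end{align*}
over connected graphs $G$, each $F_{n,G}^\beta$ being a momentum integral with a propagator $\hat D_+^\beta$ on every line and a rapidly decreasing vertex factor $\hat\Psi$ (Proposition~\ref{prop:smooth}) coupling the line momenta.

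Next I would split every line. Since $\hat D_+^\beta(p)=\varepsilon(p_0)\delta(p^2-m^2)/(1-\textup e^{-\beta p_0})$, carrying out the $p_l^0$-integrals against the mass-shell delta produces for each line a positive-energy contribution at $p_l^0=+\omega_{\vec p_l}$ and a negative-energy one at $p_l^0=-\omega_{\vec p_l}$, turning $F_{n,G}^\beta$ into a sum over sign assignments $\epsilon\colon E(G)\to\{+,-\}$. With $\Delta u_l=u_{r(l)}-u_{s(l)}$, a positive-energy line carries $\textup e^{-\omega_{\vec p_l}\Delta u_l}/(2\omega_{\vec p_l}(1-\textup e^{-\beta\omega_{\vec p_l}}))$, as in the vacuum case, whereas a negative-energy line carries $\textup e^{-\omega_{\vec p_l}(\beta-\Delta u_l)}/(2\omega_{\vec p_l}(1-\textup e^{-\beta\omega_{\vec p_l}}))$, the factor $\textup e^{-\beta\omega_{\vec p_l}}$ of the negative-energy part having absorbed the otherwise growing $\textup e^{+\omega_{\vec p_l}\Delta u_l}$. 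Because $(u_1,\ldots,u_n)\in\beta\mathcal S_n$ forces $0\le\Delta u_l\le\beta$ on every line, both exponents are non-positive and each term stays bounded uniformly in $\omega_{\vec p_l}$. This is exactly where the confinement of the imaginary times to the bounded simplex and the exponential decay of the negative-energy part enter.

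It then remains to extract the spatial decay. For fixed $\epsilon$ the integrand is jointly holomorphic while $|\Im\vec p_l|<m$ for every line: the dispersion $\omega_{\vec p}=\sqrt{\vec p^2+m^2}$ is holomorphic there, $\hat\Psi$ is entire, and the Matsubara singularities of $1/(1-\textup e^{-\beta\omega_{\vec p}})$, which would require $\vec p^2=-m^2-(2\pi k/\beta)^2$, all sit at $|\Im\vec p|\ge m$. Since $\hat\Psi$ is of exponential type it decays rapidly in every real direction, so that the negative-energy parts, which probe $\hat\Psi$ at negative energies outside the forward cone of Proposition~\ref{prop:smooth}, remain controlled. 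I would then invoke the contour-deformation estimate of Proposition~\ref{prop:contourdeform}, now in the spatial variables only, shifting $\vec p_l\mapsto\vec p_l+i\vec\kappa_l$ toward the mass-shell singularity to produce $\prod_l\textup e^{-m|\Delta\vec z_l|}$ in exact parallel with the vacuum case. Connectedness of $G$ and the inequality $\sum_l|\Delta\vec z_l|\ge\max_i|\vec x_i|\ge r_e/\sqrt n$ of Proposition~\ref{prop:clustervac} then give the asserted bound $c\,\textup e^{-\frac{m}{\sqrt n}r_e}$, uniformly for $r_e>2R$; summing over the finitely many graphs and sign assignments concludes.

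The main obstacle is the loss of the forward-tube geometry. In the vacuum case every propagator was supported in $\overline{J^+}$, so a single deformation into the forward tube yielded decay in \emph{all} variables simultaneously; here the negative-energy lines live in $\overline{J^-}$, no joint deformation into one tube exists, and one must deform in the spatial momenta alone while keeping the imaginary-time exponentials bounded by hand. This is precisely why only spatial decay, with the $u_i$ restricted to $\beta\mathcal S_n$, is attainable rather than the full space-and-imaginary-time decay of the vacuum. The delicate point in executing the deformation is to check that the negative-energy integrand, after the rewriting above, stays integrable and holomorphic up to $|\Im\vec p_l|=m$ despite the thermal denominator; the Matsubara-pole count is what secures this.
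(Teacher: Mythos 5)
Your overall scaffolding (connected-graph expansion, splitting each line of $G$ into positive- and negative-energy parts, contour deformation in the spatial momenta, and the connectedness inequality) matches the paper, but the step on which everything hinges is missing, and the substitute you offer for it is false. For $(u_1,\ldots,u_n)\in\beta\mathcal S_n$ the difference $\Delta u_l=u_{r(l)}-u_{s(l)}$ along a line can be equal, or arbitrarily close, to $\beta$; in that case your negative-energy factor $\textup e^{-\omega_{\vec p_l}(\beta-\Delta u_l)}$ is $\approx 1$. It is bounded, as you say, but it provides no decay in $\omega_{\vec p_l}$ at all, and boundedness is not enough: for a mixed sign assignment the on-shell momentum configuration (some $p_l$ on the upper mass shell, some on the lower) lies outside the region $(V^+)^k\cup(V^-)^k$ in which Proposition~\ref{prop:smooth} guarantees rapid decrease of $\hat\Psi$. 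Your fallback claim that ``$\hat\Psi$ is of exponential type, hence decays rapidly in every real direction'' is incorrect: $\Psi$ is a compactly supported \emph{distribution}, not a smooth function, so $\hat\Psi$ is entire of exponential type but in general only polynomially bounded on real momenta; rapid decrease holds only away from the singular cone exhibited in the proof of Proposition~\ref{prop:smooth}, i.e.\ essentially when all line momenta lie in one and the same lightcone. Consequently, for mixed assignments with some $\Delta u_l$ near $\beta$ your momentum integrals need not converge absolutely, and the contour shift cannot even be set up.

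The missing idea --- what the paper itself calls the non-trivial point --- is a cyclic rearrangement via the KMS condition. Among the $n+1$ gaps $u_1-u_0,\ldots,u_n-u_{n-1},\beta-u_n$ (with $u_0=0$) at least one is $\ge\beta/(n+1)$; using the KMS boundary condition of $\omega_\beta^{\textup{c}}$ one cyclically permutes the entries so that this largest gap becomes the wrap-around gap, after which \emph{all} relabeled differences $v_j-v_i$ are bounded by $\beta-\beta/(n+1)$. Then every negative-energy line carries a uniform damping factor of at least $\textup e^{-\omega_{\vec p_l}\beta/(n+1)}$, which compensates the (at most polynomial) growth of $\hat\Psi_B$ in the mixed-cone directions and restores absolute convergence, while the remaining positive-energy momenta are effectively confined to the forward cone where Proposition~\ref{prop:smooth} applies. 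Only after this step does the paper expand the thermal denominators in a geometric series and apply Proposition~\ref{prop:contourdeform} line by line, obtaining $\textup e^{-m\sqrt{\abs{\vec x_{\partial l}}^2+(\beta k_l)^2}}$ and, after summing over $k_l$, over sign assignments and over the finitely many graphs, the asserted bound. (Your alternative handling of the thermal denominator by locating its Matsubara singularities at $\abs{\Im\vec p}\ge m$ is a reasonable variant of that last step, but it does not touch the real obstruction.) Without the KMS rearrangement your argument fails exactly at the point you yourself flag as ``delicate''.
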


\begin{proof}
  We proceed in the same manner as for the vacuum state. To this end we write 
  \begin{align*}
	F_n^\beta(u_1,\vec x_1;\ldots; u_n,\vec x_n)&=\omega_{\beta}^\textup{c}\left(A_0\otimes \alpha_{iu_1,\vec x_1}(A_1)\otimes \cdots \otimes \alpha_{iu_n,\vec x_n}(A_n)\right)\\
	&=\sum_{G\in\mathcal G_{n+1}^\textup{c}} \frac{1}{\textup{Symm}(G)}F_{n,G}^\beta(u_1,\vec x_1;\ldots ;u_n,\vec x_n)
  \end{align*}
  with
  \begin{equation}	
	F_{n,G}^\beta(u_1,\vec x_1;\ldots ;u_n,\vec x_n)=\prod_{i<j} (\Gamma_2^{ij})^{l_{ij}} \left(A_0\otimes \alpha_{iu_1,\vec x_1}(A_1)\otimes \cdots \otimes \alpha_{iu_n,\vec x_n}(A_n)\right)
	\Big|_{(\phi_0,\dots, \phi_n)=0}\nonumber
  \end{equation}
  where the $\Gamma_2^{ij}$ are the functional differential operators from the proof of Proposition~\ref{lemma:kmsanalytic}. The differential operator is now re-written in terms of the two-point function
  \begin{align*}
	\Gamma_+^{ij}&=\int dx\;dy\;D_+^\beta(x^0-y^0,\vec x-\vec y) \frac{\delta^2}{\delta\phi_i(x)\delta\phi_j(y)}\ .
  \end{align*}
  By switching to a product over the lines $l\in E(G)$ of the graph $G$ we find 
  \begin{align*}
	F_{n,G}^\beta(U,\vec Z)&= \int dP\;\prod_{l\in E(G)} \textup e^{p_l^0(u_{s(l)}-u_{r(l)})+i\vec p_l(\vec z_{s(l)}-\vec z_{r(l)})}\hat D_+^\beta(p_l) \hat \Psi(-P,P)\\
	&=\int dP\; \prod_{l\in E(G)} \frac{\text e^{i\vec p_l(\vec z_{s(l)}-\vec z_{r(l)})}\left( \lambda_+(p_l)+\lambda_-(p_l)\right)}{2\omega_{l}\left(1-\text e^{-\beta\omega_{l}}\right)} \hat \Psi(-P,P)
  \end{align*}
  with
  \begin{equation}\nonumber	
         \lambda_+(p_l)=\text e^{\omega_{l}(u_{s(l)}-u_{r(l)})}\delta(p^0_l-\omega_{l})\ ,\qquad 	\lambda_-(p_l)=\text e^{-\beta\omega_{l}}\text e^{\omega_{l}(u_{r(l)}-u_{s(l)})}\delta(p_l^0+\omega_{l})\ ,
  \end{equation}
  with $\omega_l\equiv\omega_{p_l}$ in analogy to the calculation for the vacuum state. The difference here is that $\hat D_+^\beta$ is not purely supported  in the 
  forward lightcone, it has positive and negative mass-shell part $\lambda_\pm$. The functional derivatives are, again, given by
  \begin{align}
  \Psi(X,Y)&=\prod_{l\in E(G)}  \frac{\delta^2}{\delta \phi_{s(l)}(x_l) \delta\phi_{r(l)}(y_l)}\ \left(A_0\otimes \cdots \otimes A_n\right)\Big|_{(\phi_0,\dots,\phi_n)=0}\ . \label{eq:kmsfunctionalder}
  \end{align}
  Due to the fact, that the integration momenta $p_l$ can lie in both the forward and backward lightcone, we cannot use the same argumentation as in the case of the vacuum state. 
  In order to prove the convergence of the integral we   will show that all negative energy parts $\lambda_-$  are actually exponentially decreasing.

  To this end, we use the KMS condition in the original function
  \begin{align*}
	F_n^\beta(u_1,\vec x_1;\ldots; u_n,\vec x_n)&=\omega_{\beta}^\textup{c}\left(A_0\otimes \alpha_{iu_1,\vec x_1}(A_1)\otimes \cdots \otimes \alpha_{iu_n,\vec x_n}(A_n)\right)
  \end{align*}
  together with the identification $(u_0,\vec x_0)=0$ to rearrange the time-translations in imaginary directions:
  \begin{align*}
    &\omega_{\beta}^\textup{c}\left(A_0\otimes \alpha_{iu_1,\vec x_1}(A_1)\otimes \cdots \otimes \alpha_{iu_n,\vec x_n}(A_n)\right)\\
    &=\omega_{\beta}^\textup{c}\Big(\alpha_{iu_m,\vec x_m}(A_m) \otimes \cdots \otimes \alpha_{iu_n,\vec x_n}(A_n) \otimes \alpha_{i\beta} (A_0)\otimes \alpha_{i(u_1+\beta),\vec x_1}(A_1)\otimes \cdots\\
    &\qquad \cdots \otimes \alpha_{i(u_{m-1}+\beta),\vec x_{m-1}}(A_{m-1})\Big)\ .
  \end{align*}
  The equality holds irrespective of the choice $m\in\{1,\ldots,n\}$. The non-trivial point is made now: There exists an $m\in\{1,\ldots,n\}$ such that $u_{m}-u_{m-1}\ge \frac{\beta}{n+1}$. We simply rename all of the variables to 
  \begin{align*}
	F_{n,G}^\beta(U,\vec X)={F'}_{n,G}^\beta(V,\vec Y)=\omega_{\beta}^\textup{c}\left(\alpha_{iv_0,\vec y_0}(B_0)\otimes \alpha_{iv_1,\vec y_1}(B_1)\otimes \cdots \otimes \alpha_{iv_n,\vec y_n}(B_n)\right)
  \end{align*}
  where $B_0=A_m,B_1=A_{m+1},\ldots,B_n=A_{m-1}$, 
 \begin{align}
	\begin{pmatrix} 
	  v_0\\ \vdots \\ v_{n-m} \end{pmatrix} = \begin{pmatrix}u_m\\ \vdots \\ u_n\end{pmatrix},\qquad  \begin{pmatrix}v_{n-m+1} \\ \vdots \\ v_n \end{pmatrix}=\begin{pmatrix} u_0+\beta \\ \vdots \\ u_{m-1}+\beta 
	\end{pmatrix}\ ,\label{eq:coordch}
  \end{align}
  and the similar relabeling is done for the spatial variables $\vec y_i$ with respect to $\vec x_i$. Now the analogous derivation for $F'$ yields 
  \begin{align*}
	&{F'}_{n,G}^\beta(v_0,\vec y_0;\ldots;v_n,\vec y_n)=\int dP\prod_{l\in E(G)} \frac{\text e^{i\vec p_l(\vec y_{s(l)}-\vec y_{r(l)})}\left( \lambda_+(p_l)+\lambda_-(p_l)\right)}{2\omega_{l}
	\left(1-\text e^{-\beta\omega_{l}}\right)} \hat\Psi_B(-P,P) \ ,
  \end{align*}
  with
  \begin{align*}	
	&\lambda_+(p_l)=\text e^{\omega_{l}(v_{s(l)}-v_{r(l)})}\delta(p^0_l-\omega_{l}),\qquad 	\lambda_-(p_l)=\text e^{\omega_{l}(v_{r(l)}-v_{s(l)}-\beta)}\delta(p_l^0+\omega_{l})\ ,
  \end{align*}
  where now $v_0\le v_1\le \ldots \le v_{n}$ and $\Psi_B$ is the functional derivative from equation \eqref{eq:kmsfunctionalder} in which the $(A_i)$ are replaced by $(B_i)$. 

  We expand the products of the sum of $\lambda_\pm$ by replacing  every line $l\in E(G)$ by either a line $l_+$ or $l_-$, to which we associate the factors $\lambda_\pm$, and summing over all 
  possibilities to distribute pluses and   minuses on all lines in $G$. This is done by introducing a function 
  \begin{align*}
	\varepsilon: E(G)\to \{+,-\}
  \end{align*}
  that associates signs to all the lines in the graph. Denoting $E_\pm(G)=\{l\in G:\varepsilon(l)=\pm\}$ we find 
  \begin{align*}
	&{F'}_{n,G}^\beta(v_0,\vec y_0;\ldots;v_n,\vec y_n)\\
	=&\sum_{\varepsilon}\int dP\;\prod_{l_+\in E_+(G)}\left[ \frac{\text e^{i\vec p_{l_+}(\vec y_{s(l_+)}-\vec y_{r(l_+)})} }{2\omega_{l_+}\left(1-\text e^{-\beta\omega_{l_+}}\right)} \lambda_+(p_{l_+})\right] \times \\
        &\quad	\times \prod_{l_-\in E_-(G)}\left[ \frac{\text e^{i\vec p_{l_-}(\vec y_{s(l_-)}-\vec y_{r(l_-)})} }{2\omega_{l_-}\left(1-\text e^{-\beta\omega_{l_-}}\right)} \lambda_-(p_{l_-})\right]\hat \Psi_B(-P,P). 
  \end{align*}
  Now we estimate the largest difference between the $v_i$ 
  \begin{align*}
	\max_{i<j} (v_j-v_i)=v_{n-1}-v_0=\beta +u_{m-1}-u_m=\beta-(u_m-u_{m-1})\le \underbrace{\frac{\beta}{n+1} }_{=: c_\beta}<\beta\ .
  \end{align*}
  Thus we rewrite
  \begin{align*}
	 \text e^{-\omega_{l_-}\beta } \text e^{\omega_{l_-}(v_{r(l_-)}-v_{s(l_-)})}& =\text e^{-\omega_{l_-}(\beta -c_\beta)} \text e^{\omega_{l_-}(v_{r(l_-)}-v_{s(l_-)}-c_\beta)}\\
	 &=\text e^{-\omega_{l_-}\frac{n\beta}{n+1}} \underbrace{\text e^{\omega_{l_-}(v_{r(l_-)}-v_{s(l_-)}-c_\beta)}}_{\le 1}\ ,
  \end{align*}
  which shows the claim that the integrand of ${F'}_{n}^\beta$ decays fast in the momentum variables associated to lines $l_-$. The remaining integration variables (those associated to $l_+$) are 
  located in the forward lightcone, in   which $F'^\beta_n$ is rapidly decreasing. This implies that 
  \begin{align*}
	\prod_{l_-\in E_-(G)} \text e^{-\omega_{l_-}(v_{r(l_-)}-v_{s(l_-)})} \hat \Psi_B(-P,P)\Big|_{p_{l_+}^0=\omega_{l_+},\; p_{l_-}^-=-\omega_{l_-}} 
  \end{align*}
  is rapidly decreasing in all spatial momenta $\vec P=(\vec p_1,\ldots,\vec p_{E(G)})$. We use the geometric series
  \begin{align*}
		\frac{1}{1-\textup e^{-\beta \omega}}=\sum_{k=0}^\infty \text e^{-\beta k \omega}
  \end{align*}
  to rewrite the integrand
  \begin{align*}
    &\sum_{\varepsilon} \prod_{l_+\in E_+(G)} \frac{1}{2\omega_{l_+}} \frac{\text e^{i\vec p_{l_+}(\vec y_{s(l_+)}-\vec y_{r(l_+)})-\omega_{l_+}(v_{r(l_+)}-v_{s(l_+)})}}{1-\text e^{-\beta\omega_{l_+}}} \times \\
	 &\qquad \times \prod_{l_-\in E_-(G)} \frac{1}{2\omega_{l_-}}\frac{\text e^{i\vec p_{l_-}(\vec y_{s(l_-)}-\vec y_{r(l_-)})-\omega_{l_-}(	c_\beta-(v_{r(l)}-v_{s(l)}))} 
	 \text e^{-\omega_{l_-}\frac{n\beta}{n+1}}}{1-\text e^{-\beta\omega_{l_-}}}\\
	&=\sum_{\varepsilon} \sum_{\vec k\in\NN_0^{\abs{E(G)}}}\prod_{l_+\in E_+(G)} \frac{1}{2\omega_{l_+}} \text e^{i\vec p_{l_+}(\vec y_{s(l_+)}-\vec y_{r(l_+)})-\omega_{l_+}(v_{r(l_+)}-v_{s(l_+)}+\beta k_{l_+})} \times \\
	 &\qquad \times \prod_{l_-\in E_-(G)} \frac{1}{2\omega_{l_-}}\text e^{i\vec p_{l_-}(\vec y_{s(l_-)}-\vec y_{r(l_-)})-\omega_{l_-}(	c_\beta-(v_{r(l)}-v_{s(l)})+\beta k_{l_-})} \text e^{-\omega_{l_-}\frac{n\beta}{n+1}}\ .
  \end{align*}
  Hence the function $F'$ is of the form
  \begin{align*}
		&{F'}_{n,G}^\beta(v_0,\vec y_0;\ldots;v_n,\vec y_n)\\
	&=\sum_{\varepsilon}  \sum_{\vec k\in\NN_0^{\abs{E(G)}}} \int  d\vec P\;\prod_{l_+ \in E_+ (G)} \frac{1}{2\omega_{l_+}}\text e^{i\vec p_{l_+}(\vec y_{s(l_+)}-\vec y_{r(l_+)})} 
	\text e^{-\omega_{l_+}(v_{r(l_+)}-v_{s(l_+)}+\beta k_{l_+})} \times \\
	&\qquad \times \prod_{l_-\in E_-(G)} \frac{1}{2\omega_{l_-}}\text e^{i\vec p_{l_-}(\vec y_{s(l_-)}-\vec y_{r(l_-)})}\text e^{-\omega_{l_-}(	c_\beta-(v_{r(l)}-v_{s(l)})+\beta k_{l_-})}  \; \Xi(\vec P)\ ,
  \end{align*}
  with
  \begin{align*}	
	&\Xi(\vec P)=\prod_{l_-\in E_-(G)}\text e^{-\omega_{l_-} \frac{n\beta}{n+1}}\hat \Psi_B(-P,P)\Big|_{p_{l_+}^0=\omega_{l_+},\; p_{l_-}^0=-\omega_{l_-}}\ . 
  \end{align*}
  By the above argumentation, $\Xi(\vec P)$ is rapidly decreasing in all its variables. Fixing the sign-function $\varepsilon$ and a multi-index $\vec n$, we can use 
  Proposition \ref{prop:contourdeform} from Appendix \ref{sec:appb} to   find the estimate
  \begin{align*}
	F'^\beta_{n,G,\varepsilon,\vec k}(v_0,\vec y_0;\ldots,v_n\vec y_n) \le c\;  \prod_{\substack{l\in E(G)}}\text e^{-m\sqrt{\abs{\vec x_{\partial l}}^2+(\beta k_l)^2}}\ ,
  \end{align*}
  where $\vec x_{\partial l}=\vec x_{r(l)}-\vec x_{s(l)}$. In this estimate we used the fact, that the $v_i$ range only over a finite interval and the differences 
  \begin{align*}
	c_\beta-(v_{r(l)}-v_{s(l)})\ge 0
  \end{align*}
  are bounded from below by zero. The sum over $k_l$ yields 
  \begin{align*}
	\sum_{k=0}^\infty \text e^{-m\sqrt{q^2+(\beta k)^2}}& =\sum_{\beta n< q} \text e^{-m\sqrt{q^2+(\beta k)^2}}+ \sum_{\beta n\ge q} \text e^{-m\sqrt{q^2+(\beta k)^2}}\\
	\le&\frac{q}{\beta} \text e^{-mq}+ \frac{\text e^{-mq}}{1-\text e^{-m\beta}}\le c' \; \text e^{-mq}
  \end{align*}
  for $q>0$. This implies that 
  \begin{align*}
	F'^\beta_{n,G,\epsilon}(V,\vec Y)&=\sum_{\vec n \in \NN^{E(G)}}F'^\beta_{n,G,\varepsilon,\vec k}(V,\vec Y)\le cc'\; \prod_{l\in E(G)}\text e^{-m\sqrt{\abs{\vec x_{\partial l}}^2}}\\
	&\le c'' \text e^{-\frac{m}{\sqrt{n}}r_e},\qquad r_e=\sqrt{\sum_{k=0}^n \abs{\vec x_k}^2} \ ,
  \end{align*}
  by the same means as in the case of the vacuum state. The exponential decay for $F_{n,G}^\beta$, i.e.\
  \begin{align*}
	\abs{	{F}_{n,G}^\beta(u_1,\ldots,u_n,\vec x_1,\ldots,\vec x_n)}\le c' \; \text e^{-\frac{m}{\sqrt{n}}r_e},\qquad r_e=\sqrt{\sum_{k=1}^{n} \abs{\vec x_k}^2}\ ,
  \end{align*}
  follows by the simple coordinate change in equation \eqref{eq:coordch}, thus $F_{n,G}^\beta$ decays exponentially in all its variables. The same decay properties hold for $F_{n}^\beta$, 
  which is the sum over all connected graphs of   $F_{n,G}^\beta$ divided by the symmetry factor of $G$. This proves the assertion.
\end{proof}

As in the vacuum case we can exploit the analytic properties of the KMS state $\omega_\beta$ to show that the limiting state obeys the KMS condition by using 
Proposition \ref{lemma:kmsanalytic}. We find an explicit formula for the adiabatic limit of the state $\omega'_{\beta}{}^{h}$
\begin{align*}
	\omega^{I}_{\beta}(A)&=\lim_{h\to I}	\omega'_{\beta}{}^{h}(A)\\
	&=\sum_{n=0}^\infty (-1)^n\int_{\beta\mathcal S_n} dU\int_{\RR^{3n}} dX\;\omega_{\beta}^\textup{c}\left(\gamma(A)\otimes \alpha_{iu_1,\vec x_1}(R)\otimes \cdots \otimes \alpha_{iu_n,\vec x_n}(R)\right)\ ,
\end{align*}
with
\begin{align*}	
	R&=\int dt\;  \dot\chi^-(t)\left[\mathcal H_I (t,0)\right]_{\chi}\ .
\end{align*}
The limit exists, defines a state on $\mathfrak A_{[I]}$, and the function
\begin{align*}
  &t\mapsto	\omega^{I}_{\beta}(A \alpha_t^{I}(B))
\end{align*}
has an analytic continuation into the strip $S_\beta$ and is continuous on the boundary with the value
\begin{align*}
	\omega_{\beta}^{I}(A \alpha_{t+i\beta}^{I}(B))=\omega_\beta^{I}(\alpha_{t}^{I}(B)A)\ .
\end{align*}
In order to prove these statements one has simply to replace the limiting KMS state $\omega_\beta^{I}$ with the ones on the finite volume $\omega_{\beta}^{h}$ in the proof of Theorem \ref{thm:analyticcont}. Since the arising integrands are absolutely integrable, we can exchange the limits in the integrations and obtain the desired statements. In particular we find that, due to the fact that the free KMS state is invariant under all spacetime translations and spatial rotations, so is the interacting state $\omega_\beta^{I}$ in the adiabatic limit.

\section{Conclusion}
In this work we were concerned with the construction of KMS states in perturbative renormalized relativistic QFT. The traditional approaches to construct states in perturbative QFT by 
starting from the corresponding state of the free theory in the limit $t\to-\infty$ lead, in the case of positive temperature, to infrared divergences as emphasized by Steinmann in \cite{steinmannfinite}. 
These divergences occur even in the massive case, see e.g.\ \cite{altherr}. Their physical origin may be traced back to the change in the asymptotic behavior in time as analyzed by  Bros and Buchholz in \cite{bb}. 

Thus we chose another approach by using ideas from the realm of quantum statistical mechanics. For this to achieve we had to close a gap between the description of QFT and QM as 
dynamical systems, which originates from the fact that non-trivial local interactions in relativistic QFTs in four dimensions are too singular to be restricted to Cauchy surfaces. Therefore, the  formal application of 
quantum mechanical perturbation theory leads to spurious UV divergences (the St\"{u}ckelberg UV-divergences \cite{stueckelberg}), that appear even after renormalization.

To avoid these difficulties, we exploited the validity of the time-slice axiom in causal perturbation theory \cite{cf} and embedded the algebra of interacting fields into the algebra of the free theory, restricted to a time-slice
which may be understood as a thickened Cauchy surface. We showed that  the interacting dynamics $\alpha_t^I$ differs from the free dynamics $\alpha_t$ locally by a well defined co-cycle $U_h(t)$, 
thus we obtain a UV regular interaction picture. We therefore can apply techniques from quantum statistical mechanics \cite{brob} and construct KMS states for the dynamics with a spatial cutoff $h$.

As a last step the spatial cutoff $h$ had to be removed (adiabatic or thermodynamic limit). We showed that this is always possible if the KMS state of the free theory has sufficiently good spatial clustering 
properties. We show that this assumption is satisfied for the free massive scalar field.  As a byproduct we also obtained a new proof for the existence of the vacuum state. We always formulated our arguments for four-dimensional Minkowski space, but all arguments are actually valid independent of the dimension. In two-dimensional spacetime, for polynomial interactions, our methods are not needed, because of the existence of a sufficiently 
large algebra of time zero fields.  

Our results are obtained within formal perturbation theory. Conceptually, however, the methods are not restricted to perturbation theory, and it would be interesting to explore whether they can be applied in the 
sense of constructive field theory, for instance in $\phi^4_3$. 

The non-zero mass of the theory was crucial for the proof of existence, since, in general, the correlation functions of massless theories exhibit a too slow decay at spatial infinity. The consequences of this are 
observed in many applications of the massless theories at positive temperature and are sometimes referred to as \emph{the IR problem} of perturbative QFT at positive temperature, see e.g.\ \cite{altherr}. 
A solution of this problem is to use the so-called \emph{thermal mass} term, that arises, due to finite renormalization of the interaction terms, as a mass term of the free theory. The idea has already been 
used in the context of QED and QCD at positive temperature \cite{lvw}, an interpretation of this method in the general context of perturbative QFT in the present setting is given in \cite{lindner}.
In the massless $\phi^4$-theory it originates from the fact that the transformation from the $\star$-product with the vacuum two-point function to the $\star$-product at finite inverse temperature $\beta$ involves a 
finite transformation which corresponds to a change of the Wick ordering prescription, 
\begin{equation}
:\!\phi^4\!\!:_{\mathrm{vac}}=:\!\phi^4\!\!:_{\beta}+6c(\beta):\!\phi^2\!\!:_{\beta}+6c(\beta)
\end{equation} 
where 
\begin{equation}
c(\beta)=(D_+^{\beta}-D_+^{\mathrm{vac}})(0)=\frac{1}{12\pi^2\beta^2}\ .
\end{equation}
The coefficient of the quadratic term then can be used as a mass term, and one obtains convergent expressions for the KMS state. But since $D_+^{\beta}$ is not analytic in $m^2$ at $m=0$, the arising series is no longer a power series in the coupling constant.

We believe that our approach provides a basis for extending powerful methods from quantum statistical mechanics to perturbative QFT. In particular, one might apply these methods to problems of 
non-equilibrium thermodynamics in QFT, as treated e.g.\ in \cite{berges,buchm}. One may also test whether structural properties of the interacting KMS state predicted from an axiomatic approach, 
such as the relativistic KMS condition \cite{bbrelkms}, K\"{a}llen-Lehmann type representations \cite{bb2} and time asymptotics \cite{bb} can be verified. We also hope that the problem of phase transitions can be 
addressed by our method, yielding a bridge between the quantum statistical approach and the arguments relying on the effective potential of quantum field theory.

Another potential application of our results is on the treatment of bound states in quantum field theory, including in particular the derivation of the Lamb shift. Here a lot of progress was made during the last years in the so-called nonrelativistic QED (see e.g. \cite{Spohn}). Approaches to relativistic quantum mechanics typically suffer from inconsistencies since the concept of particles in QFT is dependent on the interaction. Our approach is completely consistent 
with the principles of QFT, but, unfortunately, still based on formal perturbation theory.  

Our method has a formal similarity with the Schwinger-Keldysh approach. There the expectation values of time-ordered products of fields $A_i$ are obtained from the expectation value of the 
free theory similar to the Gell-Mann and Low formula  
\begin{align*}
	\omega_\beta^{I} (TA_1(x_1)\dots A_n(x_n))=\frac{1}{Z} \omega_\beta\!\left(T_CA_1(x_1)\dots A_n(x_n) \exp(\!-i\!\int_{C}\!\!\!dz\! \int\! \!d^3\vec x\; \mathcal H_I(z,\vec x) h(\vec x))\!\!\right)\!,
\end{align*}
but where time ordering is replaced by ordering along the contour $C$ depicted in figure \ref{fig:schwinger}, see \cite{lvw} for more details to  this formalism.
\begin{figure}[htb]
  \begin{center}
	\includegraphics*[scale=1]{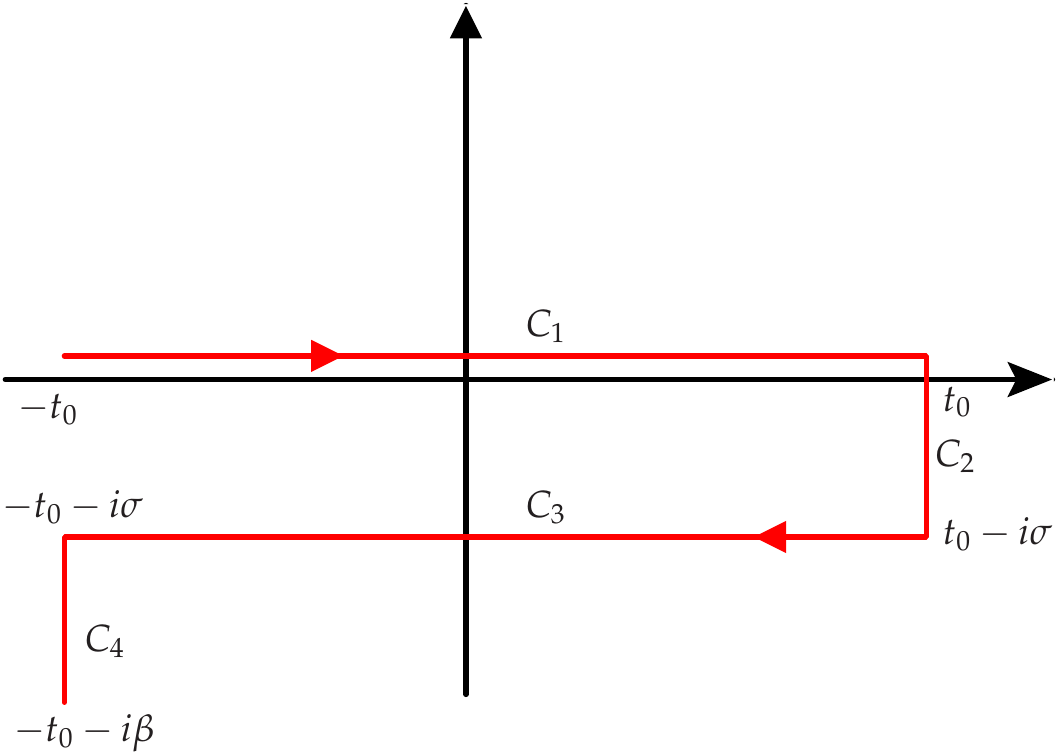}
  \end{center}
  \caption{The Schwinger-Keldysh contour $C=C_1\cup C_2\cup C_3\cup C_4$ in the complex time-plane. \label{fig:schwinger}}
\end{figure}
The right hand side of this formula might be written as a path integral. 
The attempt to construct the interacting KMS state by the adiabatic limit $g\to 1$ as discussed in section \ref{sec:2} corresponds to the contour in the limit $t_0\to\infty$. 
By choosing a contour $C$ which oscillates several times parallel to the real axis, one would formally obtain the thermal Wightman functions of the interacting field, if the contour 
reaches the time-arguments of the fields in the correct order.

The choice $t_0=\epsilon$  for the contour $C$ (together with the restriction to observables that are supported in $\Sigma_\epsilon$) corresponds to the idea of exploiting the time-slice axiom, 
where the time cutoff $\chi$ is replaced by the characteristic function of the interval $[-\epsilon,\epsilon]$. This avoids the aforementioned IR divergences completely, but would generate 
additional UV divergences at the boundaries $t=\pm\epsilon$. Our formalism may be thus understood as a Schwinger-Keldysh formalism with a smeared, but finitely extended contour.

\section*{Acknowledgments}
We thank the referees of the paper for many valuable hints. One of the authors (F.~L.) gratefully acknowledges financial support from the Konrad-Adenauer-Stiftung. 
\appendix
\section{Proof of the KMS condition}\label{sec:kms}
In order to show that the quasi-free state $\omega_\beta$, defined by
\begin{align}
	\omega_\beta(\Phi(x)\Phi(0))=D_+^\beta(x)=\frac{1}{(2\pi)^3}\int \frac{d^3\vec p \;\text e^{i\vec p \vec x}}{2\omega_{p}(1-\text e^{-\beta \omega_p})}   
	\left(\text e^{-i\omega_{p}x^0} +\text e^{i\omega_p (x^0+i\beta)} \right)\ ,\label{eq:kmsfree}
\end{align}
satisfies the analytic conditions from Definition \ref{def:kms} we will use an \emph{off-shell formulation} of the algebra of the Wick polynomials of the free field. 
For an introduction we refer to \cite{paqftbuch} or \cite{keller,frerejz}. In this formulation the algebra of Wick polynomials $\mathcal A$ is constructed by (non necessarily linear) functionals over the 
space of classical field configurations $\phi\in C^\infty(M)$, $M=\RR^4$, which are not subject to any field equation. A local field $A$, smeared with a test function $f\in\mathcal D(M)$, is in this formalism given by  
\begin{align*}
	A(f)(\phi)= \int d^4x \; f(x) a(\phi(x),\partial \phi(x),\dots) \ ,
\end{align*}
where $a$ is a polynomial in $\phi$ and its derivatives. 
The most general observable is given by sums of 
\begin{align}
	F_T(\phi)=\int d^4x_1\cdots d^4x_n\; T(x_1,\ldots ,x_n) \phi(x_1)\cdots \phi(x_n) \label{eq:microcausal}
\end{align}
with a distribution of compact support $T\in \mathcal E'(M^n)$ with the following singularity structure
\begin{align}
	\WF(T)\subset \left\{(x_1,\ldots,x_n|p_1,\ldots,p_n)\in \dot T^*M^n:\sum_{k=1}^n p_k=0\right\}.\label{eq:wfmicrocausal}
\end{align}
Here $\dot T^*$ denotes the cotangent bundle of a manifold, with the zero section removed. 
These functional are called microcausal functionals. It is evident  that $A(f)$ is contained in this set as one replaces $T$ by a sum of $\delta$-distributions and its derivatives 
multiplied with a test function $f$. The product between such functionals is declared by a functional differential operator (a $\star$-product in the sense of deformation quantization)
\begin{align}
	(F \star G)(\phi) = \textup e^{\Gamma_2}(F\otimes G)(\phi)\label{eq:starprod}
\end{align}
with the differential operator 
\begin{align*}
	\Gamma_2(F\otimes G)=\int d^4x\; d^4y\; \omega_2(x,y) \frac{\delta F}{\delta\phi(x)}\otimes \frac{\delta G}{\delta\phi(y)}\ . 
\end{align*}
Here $\omega_2$ is the two-point function of some quasi-free Hadamard state $\omega$ over the algebra of Wick polynomials $\mathfrak A$. The $^*$-algebra which is $\star$-generated by 
functionals \eqref{eq:microcausal} is denoted by $\mathcal A_{\omega}$. The connection to the Hilbert space formalism of QFT is the following: There is a $^*$-homomorphism 
$\pi_{\omega}:\mathcal A_{\omega}\to \mathfrak A$ into the the algebra of Wick-polynomials, given by
\begin{align}
	\pi_{\omega}(F_T)=\int d^4x_1\cdots d^4x_n\; T(x_1,\ldots,x_n) \normord{\Phi(x_1)\cdots \Phi(x_n)}_{\omega_2}
\end{align}
where the Wick-ordering $\normord{\cdot}_{\omega_2}$ within $\mathfrak A$ has been  done with respect to the two-point function $\omega_2$. Note that if the two-point function of another quasifree Hadamard state were to be put in 
$\Gamma_2$, the corresponding $^*$-homomorphism would result in a locally quasi-equivalent representation \cite{Verch}. 

The representation $\pi_{\omega}$ is not faithful, since elements of the form $P\Phi(x)$ vanish on $\mathfrak A$ whereas they don't on $\mathcal A_{\omega}$.\footnote{$P$ 
denotes the Klein-Gordon differential operator $P=\square+m^2$ on $M$.} This is a consequence  of the off-shell setting in $\mathcal A_{\omega}$. It has been shown that the 
kernel of $\pi_{\omega}$ is the ideal $\mathcal I$ generated by elements of the form
\begin{align*}
	F_{P_iT}(\phi)=\int d^4 x_1\cdots d^4x_n\; (\square_{x_i}+m^2)T(x_1,\ldots,x_n) \phi(x_1)\ldots \phi(x_n) \in \mathcal A_{\omega_2}\ .
\end{align*}
The algebra $\quotient{\mathcal A_{\omega}}{\mathcal I}$ is called the \emph{on-shell algebra}. Notice that the quasi-free state $\omega$ on $\mathfrak A$  is found as the evaluation functional on $\mathcal A_{\omega}$
\begin{align*}
	A\mapsto A(0)=\omega(\pi_{\omega}(A)),\qquad A\in\mathcal A_{\omega}.
\end{align*}
\begin{proposition}\label{lemma:kmsanalytic}
	Let $\omega_\beta$ be a quasi-free KMS state on $\mathfrak A$ with respect to the time-evolution $\alpha_t$ and with inverse temperature $0<\beta \le +\infty$. Then, for every $A_1,\ldots,A_n\in\mathfrak A$,  the functions
	\begin{align*}
	  (t_1,\ldots,t_n)\mapsto	\omega_\beta(\alpha_{t_1}(A_1)\cdots \alpha_{t_n}(A_n))
	\end{align*}
        have an analytic continuation into 
         \begin{align*}
	  \mathfrak T^n_\beta&=\{(z_1,\ldots z_n)\in \CC^n: 0<  \Im(z_j-z_i) < \beta  \quad \forall 1\le i<j\le n\}\ .
	\end{align*}
         Moreover, for $0<\beta<\infty$ and $k\in \{1,\ldots,n\}$ it holds that
	\begin{align*}
		&\omega_\beta(\alpha_{t_1}(A_1)\cdots \alpha_{t_k}(A_k)\alpha_{t_{k+1}+i\beta}(A_{k+1})\cdots \alpha_{t_n+i\beta}(A_n))\\
		&=\omega_\beta(\alpha_{t_{k+1}}(A_{k+1})\cdots \alpha_{t_n}(A_n)\alpha_{t_1}(A_1)\cdots \alpha_{t_k}(A_k))\ .
	\end{align*}
\end{proposition}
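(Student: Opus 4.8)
The plan is to exploit that $\omega_\beta$ is quasi-free, so that by Wick's theorem every $n$-point function reduces to a \emph{finite} sum of products of the two-point function $D_+^\beta$, whose analytic structure is transparent from its momentum representation. Concretely, in the off-shell formalism introduced above I would write
\[
\omega_\beta(\alpha_{t_1}(A_1)\cdots\alpha_{t_n}(A_n))=\prod_{1\le i<j\le n}e^{\Gamma_2^{ij}}\big(\alpha_{t_1}(A_1)\otimes\cdots\otimes\alpha_{t_n}(A_n)\big)\Big|_{(\phi_1,\dots,\phi_n)=0}\ ,
\]
where $\Gamma_2^{ij}$ contracts the $i$-th (left) and $j$-th (right) tensor factor through $D_+^\beta(x_i-x_j)$. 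Since each $A_i$ is a Wick polynomial of finite degree smeared with a compactly supported distribution $T_i$, this is a finite sum over contraction graphs, each summand a product of factors $D_+^\beta$ paired against the $T_i$, with the translation $\alpha_{t_i}$ acting simply by shifting the time argument of the $i$-th factor by $t_i$.

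First I would record the analytic structure of the two-point function. From \eqref{eq:kmsfree} one reads off that, for fixed $\vec x$, the map $x^0\mapsto D_+^\beta(x^0,\vec x)$ extends analytically to the strip $-\beta<\Im(x^0)<0$: the positive-frequency term $e^{-i\omega_p x^0}$ decays in $\omega_p$ precisely for $\Im(x^0)<0$, while the term $e^{i\omega_p(x^0+i\beta)}$ decays for $\Im(x^0)>-\beta$. A contraction line joining factors $i<j$ carries the imaginary time $\Im(z_i-z_j)=\Im(z_i)-\Im(z_j)$, so that summand is analytic as long as $\Im(z_i)-\Im(z_j)\in(-\beta,0)$, i.e. $0<\Im(z_j-z_i)<\beta$. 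Intersecting this requirement over all pairs that can occur in a graph yields exactly the tube $\mathfrak T_\beta^n$. On the open tube the $D_+^\beta$ are smooth, so differentiation under the compactly supported integrations and a Morera argument give joint analyticity of the finite sum in $(z_1,\dots,z_n)$. For $\beta=+\infty$ the negative-frequency term drops out, the strip opens to $\Im(x^0)<0$, and the tube becomes $\{0<\Im(z_j-z_i)\}$.

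For the KMS boundary relation I would first derive, directly from \eqref{eq:kmsfree} by substituting $\vec p\to-\vec p$ and comparing the two mass-shell terms, the two-point identity
\[
D_+^\beta(x^0-i\beta,\vec x)=D_+^\beta(-x^0,-\vec x)\ .
\]
Evaluating the analytic continuation at the boundary configuration $z_i=t_i$ for $i\le k$ and $z_i=t_i+i\beta$ for $i>k$, every contraction line within the first block ($i,j\le k$) or within the second block ($i,j>k$) sits at imaginary time $0$ and is unchanged, while every crossing line (joining some $i\le k$ to some $j>k$) acquires imaginary time $-\beta$ and is converted, via the identity above, into the two-point function with the two endpoints interchanged, i.e. the one appropriate to the \emph{reordered} product $A_{k+1}\cdots A_n A_1\cdots A_k$. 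A graph-by-graph comparison with the Wick expansion of the right-hand side then establishes the asserted boundary condition.

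The interior analyticity is the routine part; the genuine obstacle is the behavior on the boundary of $\mathfrak T_\beta^n$, where the imaginary time differences reach $0$ or $\beta$ and $D_+^\beta$ degenerates from a smooth function to a distribution. To show that the continuation is bounded and continuous up to the boundary, and in particular that the corner configuration used for the KMS relation is attained within the closed tube, I would control the singular structure through the microcausal wavefront-set condition \eqref{eq:wfmicrocausal}: the boundary values of the $D_+^\beta$ carry wavefront sets in the closed forward/backward cones, so their products and the pairings with the compactly supported $T_i$ satisfy Hörmander's criterion and define tempered distributions in the remaining real variables, with the limits approached continuously. Establishing these uniform estimates from the explicit form of $D_+^\beta$ is where the main work lies.
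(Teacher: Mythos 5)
Your proposal follows essentially the same route as the paper's proof: the off-shell expansion of the quasi-free correlation functions via the functional differential operators $\Gamma_2^{ij}$, the strip analyticity of $D_+^\beta$ in imaginary time giving joint analyticity on the tube $\mathfrak T^n_\beta$, and the two-point identity $D_+^\beta(x^0-i\beta,\vec x)=D_+^\beta(-x^0,-\vec x)$, which is precisely the paper's observation that $\Gamma_2^{ij}(t_i,t_j)=\Gamma_2^{ji}$ when $t_i-t_j=-i\beta$, to obtain the KMS boundary relation. The differences (graph-by-graph Wick expansion rather than working directly with the exponentials, and the wavefront-set discussion of boundary values, which the paper leaves implicit) are presentational refinements rather than a different argument.
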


\begin{proof}
  We prove the Proposition for the case $0<\beta< \infty$. The case $\beta=+\infty$ is proven along the same lines, except from the KMS conditions. It is assumed that the two-point function $D_+^\beta$ of $\omega_\beta$, 
  defined by equation  \eqref{eq:kmsfree}, is used in the $\star$-product \eqref{eq:starprod} and the resulting off-shell algebra is denoted by $\mathcal A_\beta$. A multiple product of observables  can be written as 
  \begin{align*}
		(A_1\star \cdots \star A_n)(\phi)&= \prod_{1\le i<j \le n}\textup e^{\Gamma_2^{ij}} (A_1\otimes \cdots \otimes A_n)\Big|_{\phi_1=\dots=\phi_n=\phi}\ ,
  \end{align*}
  with		
  \begin{align*}		
           \Gamma_2^{ij}&= \int d^4x\; d^4y\; D_+^\beta(x-y) \frac{\delta^2}{\delta \phi_i(x)\delta\phi_j(y)}\ ,
  \end{align*}
  using the Leibniz rule of differential calculus on functionals. Here the $n$-fold tensor product of functionals are considered as functionals in $n$ field configurations $\phi_1,\dots,\phi_n$ .

  The time-translations $\alpha_t$ on $\mathcal A_{\omega}$ are simply given by $\alpha_t(A)(\phi)=A(\phi_{-t})$ where $\phi_{-t}(x)=\phi(x^0+t,\vec x)$ in agreement with  the 
  notation in \eqref{eq:timetranslation}. The expectation   value in the KMS state is then the evaluation at $\phi=0$, and it holds that
  \begin{align*}
	\omega_\beta(\alpha_{t_1}(A_1)\cdots \alpha_{t_n}(A_n))&=\left(\alpha_{t_1}(A_1)\star \cdots \star \alpha_{t_n}(A_n)\right)(\phi=0)\\
	&=\prod_{1\le i<j \le n}\textup e^{\Gamma_2^{ij}} (\alpha_{t_1}(A_1)\otimes \cdots \otimes \alpha_{t_n}(A_n))\Big|_{\phi_1=\dots= \phi_n=0}\\
	&= \prod_{1\le i<j \le n}\textup e^{\Gamma_2^{ij}(t_i,t_j)} (A_1\otimes \cdots \otimes A_n)\Big|_{\phi_1=\dots= \phi_n=0}\ ,
  \end{align*}
  with
  \begin{align*}	
	\Gamma_2^{ij}(t_i,t_j)& = \int d^4x\; d^4y\; D_+^\beta(x^0-y^0+t_i-t_j,\vec x-\vec y) \frac{\delta^2}{\delta \phi_i(x)\delta\phi_j(y)}\ .
  \end{align*}
  As  $t\mapsto D_\beta(t,\vec x)$ has an analytic continuation into $-S_\beta=\{z\in\CC:-\beta<\Im(z)< 0\}$  the function 
  \begin{align*}
	(t,t')\mapsto \textup e^{\Gamma_2^{ij}(t,t')}(A_1\otimes \cdots \otimes A_n)
  \end{align*}
  has an analytic continuation to $\{(z_1,z_2)\in \CC^2:-\beta < \Im(z_1-z_2) <0\}$ for all $A_1,\ldots,A_n\in \mathcal A_{\beta}$. Thus for the full expectation value we obtain that the map  
  \begin{align*}
	(t_1,\ldots,t_n) \mapsto \prod_{1\le i<j \le n}\textup e^{\Gamma_2^{ij}(t_i,t_j)} (A_1\otimes \cdots \otimes A_n)\Big|_{\phi_1=\dots=\phi_n=0}
  \end{align*}
  has an extension into 
  \begin{align*}
	\mathfrak T^n_\beta&=\{(z_1,\ldots,z_n)\in\CC^n: 0 <\Im(z_j-z_i) <\beta \quad \forall 1\le i<j\le n\}\ .
  \end{align*}
  The KMS conditions for the expectation values 
  \begin{align*}
	\omega_\beta(\alpha_{t_1}(A_1)\cdots  \alpha_{t_n}(A_n))=\prod_{1\le i<j \le n}\textup e^{\Gamma_2^{ij}(t_i,t_j)} (A_1\otimes \cdots \otimes A_n)\Big|_{\phi_1=\dots= \phi_n=0}
  \end{align*}
  follow directly from the fact that $\Gamma_2^{ij}(t_i,t_j)=\Gamma_2^{ji}$ if $t_i-t_j=-i\beta$.
\end{proof}
\section{Propositions for the adiabatic limit}\label{sec:appb}
The following propositions are used to show that  the connected correlation functions have a uniform exponential decay in spatial and imaginary time directions. 
Parts of this are already well-known in the case of the vacuum two-point function $\omega_\text{vac}(\Phi(x)\Phi(y))$, see e.g.\ \cite{bogolubovintro}. For our purposes however, a more general statement has to be proven.
\begin{proposition}\label{prop:contourdeform}
  Let $f\in\mathcal D'(\RR^{4})$ with $\supp(f)\subset B_R\subset \RR^4$. Then the functions
  \begin{align*}
	I_f(x_0,\vec x)&=\int \frac{d^3 \vec p}{2\omega_p}\;\textup e^{-i(x_0\omega_p-\vec p\vec x)}  \hat f(\omega_p,\vec p),\qquad \omega_p=\sqrt{\vec p^2+m^2}\ ,\\
	I^b_f(x_0,\vec x)&=\int \frac{d^3 \vec p}{2\omega_p}\;\textup e^{-i(x_0\omega_p-\vec p\vec x)} \textup e^{-b\omega_p}  \hat f(-\omega_p,\vec p),\qquad b>0\ ,
  \end{align*}
  have an analytic continuation into the lower half plane $\CC_-\times \RR^3$, and for $m>0$ it holds that
  \begin{align*}
	\abs{I_f(-iu,\vec x)}\le c \; \textup e^{-m r}\ ,\quad \abs{I_f^b(-iu,\vec x)}\le c \; \textup e^{-m r}\ ,\qquad r=\sqrt{u^2+\vec x^2}\ ,
  \end{align*}
  uniformly for $r\ge2R$.
\end{proposition}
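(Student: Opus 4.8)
The plan is to first establish the analytic continuation by a convergence argument and then obtain the exponential bound by a single contour rotation in the mass-shell variable. The hypothesis $r\ge 2R$ is precisely what creates the gap between the decay rate $m$ and the Paley--Wiener exponential type $R$ of $\hat f$.

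First I would record that, since $f\in\mathcal E'(\RR^4)$ with $\supp f\subset B_R$, the Paley--Wiener--Schwartz theorem gives an entire extension of $\hat f$ to $\CC^4$ with $\abs{\hat f(\zeta)}\le C(1+\abs{\zeta})^N\textup e^{R\abs{\Im\zeta}}$ for some $C,N$. On the real mass shell $\hat f(\pm\omega_p,\vec p)$ is thus polynomially bounded, so for $\Im x_0<0$ the factor $\textup e^{-ix_0\omega_p}$ contributes $\textup e^{\Im(x_0)\omega_p}\le\textup e^{m\Im(x_0)}$ and the integrals defining $I_f$ and $I_f^b$ converge absolutely and locally uniformly (the extra factor $\textup e^{-b\omega_p}\le1$ only helps). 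Differentiating under the integral sign (or Morera's theorem) then yields analyticity on $\CC_-\times\RR^3$.

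For the estimate I would fix $u>0$, use a spatial rotation to assume $\vec x=(\rho,0,0)$ with $\rho=\abs{\vec x}$ and $r=\sqrt{u^2+\rho^2}$, freeze $(p_2,p_3)$ and set $\mu=\sqrt{p_2^2+p_3^2+m^2}\ge m$. The substitution $p_1=\mu\sinh s$ gives $\omega_p=\mu\cosh s$ and $\frac{dp_1}{2\omega_p}=\frac12\,ds$, and writing $(u,\rho)=r(\cos\theta,\sin\theta)$ with $\theta\in[0,\pi/2)$ the relevant exponent becomes
\[
-u\omega_p+ip_1\rho=-\mu r\cosh(s-i\theta).
\]
The key step is to shift the $s$-contour from $\Im s=0$ to $\Im s=\theta$, legitimate by Cauchy's theorem since $\hat f$ is entire and the two vertical segments vanish: on $\Re s=\pm S$, $\Im s=\sigma\in[0,\theta]$, the modulus of the integrand is dominated by a polynomial times $\exp\{-\mu\cosh S\,[\,r\cos(\theta-\sigma)-\sqrt2\,R\sin\sigma\,]\}$ (the Paley--Wiener bound contributing $+\sqrt2\,R\,\mu\sin\sigma\cosh S$ since $\abs{\Im\zeta}\le\sqrt2\,\mu\sin\sigma\cosh S$). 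Setting $\psi=\theta-\sigma$, the bracket equals $\cos\psi\,(r-\sqrt2\,R\sin\theta)+\sqrt2\,R\cos\theta\sin\psi$, which is strictly positive for $\theta<\pi/2$ because $r-\sqrt2\,R\sin\theta\ge r-\sqrt2\,R>0$; hence the vertical contributions disappear as $S\to\infty$.

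On the shifted contour the exponent is $-\mu r\cosh s'\le-\mu r$, and inserting the Paley--Wiener bound (now with $\abs{\Im\zeta}\le\sqrt2\,\mu\sin\theta\cosh s'$) produces the integrand factor $\textup e^{-\mu\cosh s'(r-\sqrt2\,R\sin\theta)}$. Here I would use $r\ge2R$, so that $r-\sqrt2\,R\sin\theta\ge(2-\sqrt2)R>0$ and also $\ge r-\sqrt2\,R$, and split this exponent into three pieces: one copy $\textup e^{-m(r-\sqrt2\,R)}=\textup e^{\sqrt2\,mR}\textup e^{-mr}$ supplying the claimed decay, a residual $\textup e^{-m(\cosh s'-1)(2-\sqrt2)R}$ making the $s'$-integral converge, and a residual $\textup e^{-(\mu-m)(2-\sqrt2)R}$ making the $(p_2,p_3)$-integral converge, all three dominating the polynomial factor $(1+\abs{\zeta})^N$. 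Integration then gives $\abs{I_f(-iu,\vec x)}\le c\,\textup e^{-mr}$ with $c$ depending only on $m,R,N,C$, hence uniform in $r\ge2R$. The bound for $I_f^b$ follows identically, the factor $\textup e^{-b\Re\omega_p}=\textup e^{-b\mu\cosh s'\cos\theta}\le1$ being harmless. The main obstacle is the uniform control of the contour shift as $\theta\to\pi/2$ (that is, $u\to0^+$ at fixed $r$): this is exactly where the margin between $m$ and the type $R$ guaranteed by $r\ge2R$ is consumed, and one must verify that the positivity of the above bracket and the convergence of the residual integrals do not degenerate in that limit.
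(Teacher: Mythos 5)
Your proposal is correct, but it follows a genuinely different route from the paper. The paper first converts the mass-shell integral into a four-dimensional integral against the Euclidean propagator via the identity $\frac{1}{2\pi}\int dk\,\frac{\textup e^{iku}}{k^2+\omega^2}=\frac{\textup e^{-\omega u}}{2\omega}$, then performs a rigid rotation in the $(p_0,p_1)$-plane adapted to the angle of $(u,\vec x)$, and finally deforms the $k_0$-contour into the upper half plane; there the integrand has \emph{both} a pole (whose residue produces the dominant term, estimated by Paley--Wiener exactly as in your final step, with the same $\sqrt 2R$ margin) \emph{and} a branch cut coming from the square root $\omega(z,\vec k)$, and a separate contour argument in $k_1$ is needed to show the branch cut does not contribute. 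Your hyperbolic parametrization $p_1=\mu\sinh s$, $\omega_p=\mu\cosh s$ sidesteps all of this: the integrand becomes an \emph{entire} function of the rapidity $s$, so the only analytic input is Cauchy's theorem in the strip $0\le\Im s\le\theta$, with no residues and no branch cuts to track. What your approach buys is economy and transparency -- one contour shift, a clean exponent $-\mu r\cosh s'$, and an explicit three-way splitting of the exponential that makes the uniformity in $r\ge 2R$ visible; what the paper's buys is that the pole/branch-cut decomposition localizes the answer as an explicit "Euclidean propagator" integral, which some readers may find more familiar. Your closing worry about the limit $\theta\to\pi/2$ resolves exactly as you suspect: the positivity of the bracket $\cos\psi\,(r-\sqrt 2R\sin\theta)+\sqrt 2R\cos\theta\sin\psi$ is only needed \emph{qualitatively}, for each fixed $\theta<\pi/2$, to justify discarding the vertical segments, while the final bound on the shifted contour involves $r-\sqrt 2R\sin\theta\ge r-\sqrt 2R\ge(2-\sqrt 2)R$ and residual integrals over $s'$ and $(p_2,p_3)$ that are independent of $\theta$, so nothing degenerates and the constant is uniform.
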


\begin{proof}
  The domain of analyticity of $I_f$ and $I_f^b$ is obvious due to the fact that  $\hat f$ is a polynomially bounded function while the remaining integrands for
  \begin{align*}
	I_f(-iu,\vec x):  \quad &\frac{\textup e^{i\vec p \vec x -u\sqrt{\vec p^2+m^2}}}{2\sqrt{\vec p^2+m^2}}\\
	I_f^b(-iu,\vec x):\quad   &\frac{\textup e^{i\vec p \vec x -u\sqrt{\vec p^2+m^2}}}{2\sqrt{\vec p^2+m^2}}\text e^{-b\omega_p}
  \end{align*}
  decay exponentially for $u>0$ and $b>0$. The following steps will be discussed for $I_f$ only, the case for $I_f^b$ is identical until further notice. Using the identity
  \begin{align*}
	\frac{1}{2\pi}\int dk \frac{\textup e^{i k u}}{k^2+\omega^2}=\frac{\textup e^{-\omega u}}{2\omega},\qquad u,\omega>0\ ,
  \end{align*}
  we can rewrite the  integral:
  \begin{align*}
	I_f(-iu,\vec x)&=\int d^3\vec p\;\frac{ \textup e^{i\vec p \vec x -u\sqrt{\vec p^2+m^2}}}{2\sqrt{\vec p^2+m^2}} \hat f\left(\sqrt{\vec p^2+m^2},\vec p\right)\\
	&=\frac{1}{2\pi}\int dp\; \frac{\text e^{i(up_0+\vec x\vec p)}}{p_0^2+\vec p^2+m^2} \hat f\left(\sqrt{\vec p^2+m^2},\vec p\right)\ .
  \end{align*}
  Without loss of generality, we choose the coordinates $\vec x= \vec nr\cos(\alpha)$ and $u=r\sin(\alpha)$ with $\vec n=(1,0,0)$  and $0< 2\alpha<\pi$. Hence, $r=\sqrt{u^2+\vec x^2}$. 
  The following change in the momentum variables is helpful:
  \begin{align*}
	k_0&=p_0\sin(\alpha)+p_1\cos(\alpha),\quad 	k_1=p_0\cos(\alpha)-p_1\sin(\alpha),	\quad k_{2/3}=p_{2/3}\ .
  \end{align*}
  The integral is of the form
  \begin{align*}
	I_f(-iu,\vec x)&=\frac{1}{2\pi}\int dk\; \frac{\text e^{ik_0 r}\hat f\left(\omega(k),\vec p(k)\right)}{k_0^2+k_1^2+k_2^2+k_3^2+m^2} =\frac{1}{2\pi}\int dk\; \frac{\text e^{ik_0 r}\hat f
	\left(\omega(k),\vec p(k)\right)}{k_0^2+\vec k^2+m^2}
  \end{align*}
  where 
  \begin{align*}
     \omega(k)&=\sqrt{(k_0\cos(\alpha)-k_1\sin(\alpha))^2+k_2^2+k_3^2+m^2}\ ,\\
     \vec p(k)&=(k_0\cos(\alpha)-k_1\sin(\alpha),k_2,k_3)\ .
   \end{align*}
   We replace the integration in the $k_0$-variable by a contour integration in the upper half plane. By the Paley-Wiener theorem \cite{strichartz} we know that for $(\omega,\vec p)$ in the upper half plane $\CC_+^4$:
   \begin{align*}
     \abs{\hat f(\omega,\vec p)}\le c \;\textup e^{R\sqrt{\abs{\omega}^2+\abs{\vec p}^2}}\ .
   \end{align*}
   Thus  the integrand will exponentially decay for values $r>R$. We thus look at the contour integral
   \begin{align*}
	\frac{1}{2\pi}\int_Cdz\; \frac{\text e^{iz r}}{z^2+ \vec k^2+m^2} \hat f(\omega,\vec p)\ ,
   \end{align*}
   where the dependence of $\omega$ and $\vec p$ on the variables $k_0=z$ and $\vec k$ is suppressed in the notation of $\hat f$. We see that the integrand has a pole at 
   \begin{align*}
	z=i\sqrt{\vec k^2+m^2}\ .
   \end{align*}
   Furthermore, since the principal square root in $\omega(z,\vec k)$ has a branch cut on the negative real axis, we get a branch cut along a vertical axis, starting from
   \begin{align*}
     z=k_1 \tan(\alpha)+i \frac{\sqrt{k_2^2+k_3^2+m^2}}{\cos(\alpha)}\ .
   \end{align*}
  \begin{figure}[htb]
	 \def\svgwidth{\textwidth}
          \input{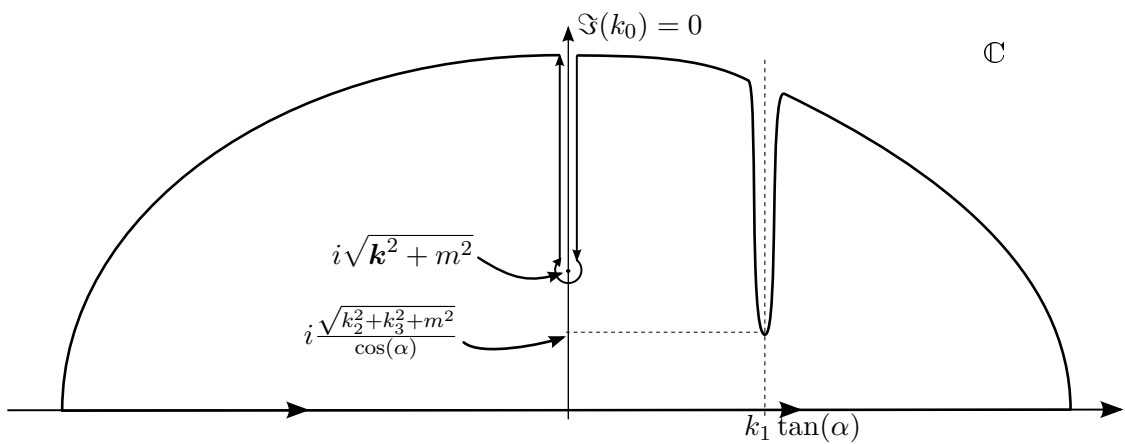}
          \caption{The integration contour $C$. The semicircle has to be extended to infinite size and the orientation is positive. \label{fig:ccontour}}
  \end{figure}
  Thus we choose a contour that avoids both the pole and the branch cut (see figure \ref{fig:ccontour}) such that the contour integral vanishes due to the exponential decay of the integrand:
  \begin{align*}
    0&=	\frac{1}{2\pi}\int_C dz\; \frac{\text e^{iz r}}{z^2+ \vec k^2+m^2 } \hat f(\omega,\vec p)\\
   &=I_f(x_0,\vec  x)+\frac{1}{2\pi}\oint\limits_{\text{pole}} dz\frac{\text e^{iz r}}{z^2+ \vec k^2+m^2} \hat f(\omega,\vec p)\\
   &\qquad \qquad+\frac{1}{2\pi}\int\limits_\text{branch} dz\;\frac{\text e^{iz r}}{z^2+ \vec k^2+m^2 } \hat f(\omega,\vec p)\ .
  \end{align*}
  The pole contour can be calculated using the residue theorem
  \begin{align*}
          &\frac{1}{2\pi}\oint\limits_{z=i\sqrt{\vec k^2+m^2}}dz\; \frac{\text e^{iz r}}{z^2+ \vec  k^2+m^2} \hat f(\omega,\vec p)\\
	&= i\; \text{Res}_{z=i\sqrt{\vec k^2+m^2}}\;\frac{\text e^{iz r}\hat f(\omega,\vec p)}{\left(z-i\sqrt{\vec k^2+m^2}\right)\left(z+i\sqrt{\vec k^2+m^2}\right)}\\
	&= \frac{\text e^{-r\sqrt{\vec k^2+m^2}}\hat f(\omega,\vec p)} {2\sqrt{\vec k^2+m^2}}\Big|_{k_0=i\sqrt{\vec k^2+m^2}}\ ,
  \end{align*}
  thus the full pole contribution to $I_f$ is
  \begin{align*}
    I_{\text{pole}}(-iu,\vec x)=\int d^3\vec k\; \frac{\text e^{-r\sqrt{\vec k^2+m^2}}\hat f(\omega,\vec p)} {2\sqrt{\vec k^2+m^2}}\Big|_{k_0=i\sqrt{\vec k^2+m^2}}\ .
  \end{align*}
  The branch cut contributes with 
  \begin{align*}
	\frac{i}{\cos\alpha}\int_{\Omega}^\infty d\tau \; \frac{\text e^{irk_1\tan(\alpha)} \text e^{-\frac{r\tau}{\cos(\alpha)}}}{z(\tau)^2+\vec k^2+m^2} \left(\hat f(\omega,\vec p)\big|_{k_0=z(\tau)+i\epsilon}-\hat f(\omega,\vec p)\big|_{k_0=z(\tau)-i\epsilon}\right)
  \end{align*}
  where $z(\tau)=k_1\tan(\alpha)+i\frac{\tau}{\cos(\alpha)}$ and $\Omega=\sqrt{k_2^2+k_3^2+m^2}$. The arguments of $\hat f$ in this parametrization of the branch cut are given as
  \begin{align*}
	\omega(k_0=z(\tau),k_i)&=\pm i\sqrt{\tau^2-k_2^2-k_3^2-m^2}\\
	\vec p(k_0=z(\tau),k_i)&=(i\tau ,k_2,k_3)\ .
  \end{align*}
  In particular,  $\hat f$ does not depend on $k_1$ on the branch cut. We invoke the $k_1$-integration to find
  \begin{align*}
	\int 	dk_1\int_{\Omega}^\infty d\tau \;\frac{\text e^{irk_1\tan(\alpha)} \text e^{-\frac{r\tau}{\cos(\alpha)}}}{z(\tau)^2+ \vec  k^2+m^2} (\hat f(\omega,\vec p)_+-\hat f(\omega,\vec p)_-) \ .
  \end{align*}
  We replace the $k_1$-integration by a contour-integration along a semi-circle in the upper half plane, where the integrand falls off exponentially (as $r>R$):
  \begin{align*}
	\int_C 	dw\int_{\Omega}^\infty d\tau \;\frac{\text e^{irw\tan(\alpha)} \text e^{-\frac{r\tau}{\cos(\alpha)}}}{z(\tau)^2+ w^2+ k_2^2+k_3^2+m^2} (\hat f(\omega,\vec p)_+-\hat f(\omega,\vec p)_-) .
  \end{align*}
  The fact that $\hat f$ does not depend on $k_1$ on the branch cut implies that the only contribution to the integral comes from the poles of the integrand, which are located at 
  \begin{align*}
	w_{\text{pole}}=-i\tau\sin(\alpha)\pm \cos(\alpha)\sqrt{\tau^2-k_2^2-k_3^2-m^2}\ .
  \end{align*}
  By assumption $0<2\alpha<\pi$, thus both poles lie in the lower half plane and the full contour integral vanishes. Moreover the integrand falls off exponentially in the 
  upper half plane, such that the branch cut does not contribute to $I_f$ at all.

  Hence, we have  $I_f(-iu,\vec x)=I_\text{pole}(-iu,\vec x)$ and this contribution can be estimated by
  \begin{align*}
	\abs{I_f(-iu,\vec x)}&=\abs{ 	\int d^3\vec k\; \frac{\text e^{-r\sqrt{\vec k^2+m^2}}\hat f(\omega(k),\vec p(k))} {2\sqrt{\vec k^2+m^2}}\Big|_{k_0=i\sqrt{\vec k^2+m^2}}}\\
        &\le c\int d^3\vec k \frac{\text e^{-r\sqrt{\vec k^2+m^2}}\text e^{R\sqrt{\abs{\omega(k)}^2+\abs{\vec p(k)}^2}}} {2\sqrt{\vec k^2+m^2}}\Big|_{k_0=i\sqrt{\vec k^2+m^2}}\ .
  \end{align*}
  The values of $\omega(k)$ and $\vec p(k)$ at the pole $k_0=z=i\sqrt{\vec k^2+m^2}$ are 
  \begin{align*}
	\omega(k)&=\sqrt{\vec k^2+m^2}\sin(\alpha)-ik_1\cos(\alpha),\quad p_1(k)=i\sqrt{\vec k^2+m^2}\cos(\alpha)-k_1\sin(\alpha)\ .
  \end{align*}
  Thus under the square root of the integrand we have
  \begin{align*}
    &\abs{\omega(k)}^2+\abs{\vec p(k)}^2=\vec k^2+m^2 +k_1^2+k_2^2+k_3^2+m^2=2(\vec k^2+m^2) \ .
  \end{align*}
  This implies that the integral decays exponentially in $\abs{\vec x}=r$ uniformly in $r$ as $r>2R$, since:
  \begin{align*}
	\abs{I_f(-iu,\vec x)}&\le c \int d^3\vec k\;\frac{\text e^{-r\sqrt{\vec k^2+m^2}}\text e^{\sqrt{2}R\sqrt{\vec k^2+m^2}}} {2\sqrt{\vec k^2+m^2}}\\
	&\le c \;\text e^{-(r-\sqrt{2}R)m} \int d^3\vec k\;\frac{\text e^{-(r-\sqrt{2}R)(\sqrt{\vec k^2+m^2}-m)}} {2\sqrt{\vec k^2+m^2}}\\
	&\le c \; \text e^{-(r-\sqrt{2}R)m} \underbrace{\int d^3\vec k\;\frac{\text e^{-(2-\sqrt{2})R(\sqrt{\vec k^2+m^2}-m)}} {2\sqrt{\vec k^2+m^2}}}_{<\infty}\ .
  \end{align*}
  The same argumentation can be done for $I_f^b$:
  \begin{align*}
	\abs{I^b_f(-iu,\vec x)}&=\abs{I_\text{pole}^b(-iu,\vec x)}\le c \int d^3\vec k \frac{ \text e^{(\sqrt{2}R-b-r) \sqrt{\vec k^2+m^2}}}{2\sqrt{\vec k^2+m^2}} \ .
  \end{align*}
  This proves the claim.
\end{proof}

The other proposition concerns the singular directions of the functional derivatives that 
appear in the expansion of the truncated vacuum expectation values.
\begin{proposition}\label{prop:smooth}
  Define for $A_0,\ldots,A_n\in \mathfrak A$ the compactly supported distribution 
  \begin{align*}
	 \Psi(x_1,\ldots,x_k,y_1,\ldots,y_k)=\prod_{l=1}^k\frac{\delta^2}{\delta\phi_{s(l)}(x_l)\delta \phi_{r(l)}(y_l)} (A_0\otimes \cdots \otimes A_n)\Big|_{\phi_0=\dots=\phi_n=0}
  \end{align*}
  where $s,r:\{1,\ldots,k\}\to \{0,\ldots,n\}$ such that $s(l)<r(l)$ for all $l\in\{1,\ldots,k\}$ and let $\hat\Psi$ denote its Fourier transform. Then 
  \begin{align*}
    (p_1,\ldots,p_k) \mapsto \hat\Psi(-p_1,\ldots,-p_k,p_1,\ldots p_k)
  \end{align*}
  is rapidly decreasing inside a neighborhood of the union of $k$-fold product of the closed forward lightcone $(V^+)^k$ with that of the backward lightcone $(V^-)^k$.
\end{proposition}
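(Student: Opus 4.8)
\emph{Proof strategy.} The assertion only concerns the singularity structure of the compactly supported distribution $\Psi$, so the plan is to translate it into a wave-front-set condition and then reduce that to a combinatorial property of the orientation $s(l)<r(l)$. First I would exploit that $A_0\otimes\cdots\otimes A_n=\prod_i A_i(\phi_i)$ depends on the independent configurations $\phi_0,\dots,\phi_n$ separately, so that the functional derivatives factorize over the vertices:
\begin{align*}
 \Psi=\bigotimes_{i=0}^{n}\Psi_i,\qquad \Psi_i=\Big(\prod_{l:\,s(l)=i}\tfrac{\delta}{\delta\phi_i(x_l)}\Big)\Big(\prod_{l:\,r(l)=i}\tfrac{\delta}{\delta\phi_i(y_l)}\Big)A_i\Big|_{\phi_i=0}\ ,
\end{align*}
where $\Psi_i$ is a compactly supported distribution in the arguments attached to vertex $i$. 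Only the homogeneous component of $A_i$ of degree equal to the valence of $i$ survives the evaluation at $\phi_i=0$, and for this coefficient the microcausality bound \eqref{eq:wfmicrocausal} gives $\WF(\Psi_i)\subset\{\sum q=0\}$, i.e.\ the momenta attached to vertex $i$ sum to zero.

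Next I would propagate this through the tensor product. By the wave-front-set calculus, in each slot a covector of $\WF(\Psi)$ is either a genuine element of $\WF(\Psi_i)$ or the zero covector; in both cases the momenta at vertex $i$ sum to zero, so
\begin{align*}
 \WF(\Psi)\subset L:=\Big\{(X;q):\ Q_i:=\!\!\sum_{l:\,r(l)=i}\!\!q_{y_l}+\!\!\sum_{l:\,s(l)=i}\!\!q_{x_l}=0\ \ \forall i\Big\}\ .
\end{align*}
Since $\Psi\in\mathcal E'$, Hörmander's characterization of the wave front set states precisely that $\hat\Psi$ decays faster than any polynomial along every closed cone of covectors whose intersection with the fibre projection of $\WF(\Psi)$ is $\{0\}$. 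Thus the whole proposition reduces to showing that the cone traced out by $q_{x_l}=-p_l,\ q_{y_l}=p_l$ with $(p_1,\dots,p_k)\in(V^+)^k$ (and likewise for $(V^-)^k$) meets $L$ only at the origin.

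This is the point at which the hypothesis $s(l)<r(l)$ is decisive. On the relevant submanifold one has $Q_i=\sum_{r(l)=i}p_l-\sum_{s(l)=i}p_l$, and weighting each vertex by its index gives the telescoping identity
\begin{align*}
 \sum_{i=0}^{n} i\,Q_i=\sum_{l=1}^{k}\bigl(r(l)-s(l)\bigr)\,p_l\ .
\end{align*}
If all $Q_i$ vanish then in particular the time component of the right-hand side vanishes; but $r(l)-s(l)\ge 1$ and, for $p_l\in V^+$, $p_l^0\ge 0$, whence $0=\sum_l\bigl(r(l)-s(l)\bigr)p_l^0\ge\sum_l p_l^0\ge 0$ forces every $p_l^0=0$ and therefore every $p_l=0$. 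The backward cone is identical with the inequalities reversed. In words: an everywhere future-pointing (or everywhere past-pointing) momentum flow is incompatible with conservation at every vertex of a graph whose edges all run from smaller to larger index, unless the flow is trivial.

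Finally I would assemble the two ingredients. The image of $(V^+)^k\cup(V^-)^k$ under $P\mapsto(-P,P)$ is a closed cone meeting the linear space $L$ only at the origin; its intersection with the unit sphere is compact and disjoint from $L$, so it is covered by finitely many conic neighborhoods on which $\hat\Psi$ decays rapidly, and a mild thickening yields the full neighborhood of $(V^+)^k\cup(V^-)^k$ asserted. I expect the only real care to be needed in the second step — handling the tensor-product wave front set correctly (the zero-covector alternative in each slot) and checking that the microcausal bound \eqref{eq:wfmicrocausal} indeed descends to the derived kernels $\Psi_i$; the geometric core, by contrast, is the short telescoping argument above, which is where the acyclic orientation does all the work.
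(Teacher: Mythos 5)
Your proposal is correct, and its skeleton coincides with the paper's proof: both factorize the functional derivatives over the vertices, invoke the microcausality condition \eqref{eq:wfmicrocausal} together with H\"ormander's tensor-product rule for wave front sets (with the zero-covector alternative in individual slots), and thereby reduce the claim to showing that the momentum-conservation cone meets the image of $(V^+)^k\cup(V^-)^k$ under $P\mapsto(-P,P)$ only at the origin. Where you genuinely depart from the paper is the combinatorial core, i.e.\ the place where $s(l)<r(l)$ is used. The paper argues by iteration along the vertex order: since $s(l)<r(l)$, no line ends at vertex $0$, so the conservation law at $m=0$ reads $\sum_{s(l)=0}p_l=0$; a sum of vectors lying in one and the same closed proper cone vanishes only if each summand vanishes, so these $p_l$ are zero; this feeds into the condition at $m=1$ (every line ending there starts at $0$), and so on up to $m=n$, forcing all momenta to vanish. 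You instead contract all conservation laws at once with the weights $i$, obtaining the telescoping identity $\sum_i i\,Q_i=\sum_l\bigl(r(l)-s(l)\bigr)p_l$, and conclude from $r(l)-s(l)\ge 1$ together with the sign of the time components on the closed forward (resp.\ backward) cone. Both arguments are valid and of comparable length; yours packages the role of the acyclic orientation into a single identity and uses only positivity of $p^0$, while the paper's induction makes explicit how the information propagates from the distinguished vertex $0$ through the connected graph. Your added care about the factorization $\Psi=\bigotimes_i\Psi_i$ and about the zero-covector case in the tensor-product wave front rule is exactly the care that the paper's one-line appeal to the ``tensor product rule'' presupposes, so nothing is missing on either side.
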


\begin{proof}
  Using the tensor product rule for wavefront sets (see \cite{hoermander}) and the fact that the functionals $A_i$ are microcausal (see the beginning of Appendix \ref{sec:kms}),  
  one finds that $\hat \Psi(-P,P)$ is rapidly decaying in every direction, except the cone defined by
  \begin{align*}
    \Big\{(p_1,\ldots ,p_{k})\in \dot T^*M^{k}: \sum_{\substack{l=1,\ldots,k\\ s(l)=m}} p_l- \sum_{\substack{l=1,\ldots,k\\ r(l)=m}}p_l =0,\quad m=0,\ldots,n\Big\}
  \end{align*}
  Assume that all of the momenta lie inside either the forward or backward lightcone. Taking the first condition ($m=0$) we see that that $\{l\in \{1,\ldots,k\}:r(l)=0\}=\emptyset$. 
 This implies
  \begin{align*}
	\sum_{\substack{l=1,\ldots,k\\ s(l)=0}}p_l=0 \quad \Longrightarrow \quad p_l=0  \quad \forall l\in \{1,\ldots,k\}:s(l)=0\ .
  \end{align*}
  But the set $\{l\in \{1,\ldots,k\}: s(l)=0\}$ contains in particular all indices $\{l\in\{1,\ldots,k\}: r(l)=1\}$. This information can be put into the next condition, $m=1$, which yields
  \begin{align*}
	\sum_{\substack{l=1,\ldots,k\\ s(l)=1}}p_l- \underbrace{\sum_{\substack{l=1,\ldots,k\\ r(l)=1}}p_l}_{=0} = 0\quad \Longrightarrow \quad p_l=0 \quad \forall l\in\{1,\ldots,k\}: s(l)=1
  \end{align*}
  as, once again, all the directions are contained in one of the lightcones $V^+$, $V^-$. This can be iterated until $m=n$ with the result that all momenta $\{p_l:l=1,\ldots,k\}$ vanish, hence
  $(p_1,\ldots ,p_{k})\not\in \dot T^*M^{k}$.
\end{proof}
\bibliographystyle{plain}
\bibliography{thermalarxiv}
\end{document}